\crefname{theorem}{Thm.}{Thms.}
\crefname{lemma}{Lem.}{Lemmas}
\crefname{corollary}{Cor.}{Cors.}
\crefname{figure}{Fig.}{Figs.}
\crefname{definition}{Defn.}{Defns.}
\crefname{table}{Tab.}{Tabs.}
\crefname{section}{\S}{\S\S}
\crefname{example}{Ex.}{Exs.}
\crefname{item}{item}{item}
\crefname{footnote}{footnote}{footnote}
\crefname{observation}{Obs.}{Obs.}
\crefname{remark}{Remark}{Remarks}
\crefname{proposition}{Prop.}{Props.}
\crefname{counterexample}{Counterexample}{Counterexamples}
\newcolumntype{H}{>{\setbox0=\hbox\bgroup}c<{\egroup}@{}}
\newcommand{\word}[1]{\ensuremath{\mathop{\,\mathsf{#1}\,}}}
\newcommand{\abt}[1]{\ensuremath{\mathrm{#1}}}
\newcommand{\code}[1]{\text{\sl #1}}
\newcommand{\li}[2][XX]{%
   \ifthenelse{\equal{#1}{XX}}%
      {\ensuremath{L(#2)}}%
            {\ensuremath{L^{#1}(#2)}}}
\newcommand{\share}[3]{\ensuremath{#1 \mathop{\curlyveedownarrow} (#2, #3)}}
\newcommand{\zero}[1]{\ensuremath{|#1|}}
\newcommand{\isZero}[1]{\ensuremath{#1 = \zero{#1}}}
\newcommand{\subtype}[0]{\ensuremath{<:}}
\newcommand{\unitS}[0]{\ensuremath{\abt{triv}}}
\newcommand{\funS}[3]{\ensuremath{\abt{fun}(#1,#2.#3)}}
\newcommand{\appS}[2]{\ensuremath{\abt{app}(#1;#2)}}
\newcommand{\nilS}[0]{\ensuremath{\abt{nil}}}
\newcommand{\consS}[2]{\ensuremath{\abt{cons}(#1;#2)}}
\newcommand{\matLS}[5]{\ensuremath{\abt{mat_L}\{#1;#2,#3.#4\}(#5)}}
\newcommand{\tickS}[1]{\ensuremath{\abt{tick}\{#1\}}}
\newcommand{\letS}[3]{\ensuremath{\abt{let}(#1;#2.#3)}}
\newcommand{\shareS}[4]{\ensuremath{\abt{share}(#1;#2,#3.#4)}}
\newcommand{\flipS}[3]{\ensuremath{\abt{flip}\set{#2;#3}(#1)}}
\newcommand{\probS}[1]{\ensuremath{\abt{prob}\{#1\}}}
\newcommand{\sflipS}[3]{\ensuremath{\abt{flip_\bbS}(#1;#2,#3)}}
\newcommand{\nilC}[0]{\ensuremath{[]}}
\newcommand{\consC}[2]{\ensuremath{#1 \dblcolon #2}}
\newcommand{\matLC}[5]{\ensuremath{%
\word{case} #5\; \{ \word{nil} \hookrightarrow #1 %
\mid \word{cons}(#2,#3) \hookrightarrow #4%
\}}}
\newcommand{\unitC}[0]{\ensuremath{\tuple{}}}
\newcommand{\funC}[3]{\ensuremath{\word{fun } #1 #2 = #3}}
\newcommand{\appC}[2]{\ensuremath{#1(#2)}}
\newcommand{\tickC}[1]{\ensuremath{\word{tick } #1}}
\newcommand{\letC}[3]{\ensuremath{\word{let } #2 = #1 \word{ in } #3}}
\newcommand{\shareC}[4]{\ensuremath{\word{share } #1 \word{ as } #2,#3 \word{ in } #4}}
\newcommand{\flipC}[3]{\ensuremath{\word{flip} #1\;\{  \word{H} \hookrightarrow #2  \mid \word{T} \hookrightarrow #3 \}}}
\newcommand{\condC}[3]{\word{if } #1\word{then } #2\word{else } #3}
\newcommand{\probC}[1]{\ensuremath{#1}}
\newcommand{\sflipC}[3]{\ensuremath{\word{flip_\bbS} #1\; \{ \word{H} \hookrightarrow #2 \mid \word{T} \hookrightarrow #3 \}}}
\newcommand{\arrT}[2]{\ensuremath{#1 \to #2}}
\newcommand{\listT}[2]{\ensuremath{\li[#2]{#1}}}
\newcommand{\unitT}[0]{\ensuremath{\vvmathbb{1}}}
\newcommand{\boolT}[0]{\ensuremath{\mathsf{bool}}}
\newcommand{\intT}[0]{\ensuremath{\mathsf{int}}}
\newcommand{\prodT}[2]{\ensuremath{#1 \times #2}}
\newcommand{\annoT}[2]{\ensuremath{\langle #1,#2 \rangle}}
\newcommand{\probT}[2][XX]{%
  \ifthenelse{\equal{#1}{XX}}%
    {\ensuremath{\vvmathbb{P}}}%
    {\ensuremath{\vvmathbb{P}^{#1}_{#2}}}}
\newcommand{\arrA}[2]{\ensuremath{\abt{arr}(#1;#2)}}
\newcommand{\listA}[1]{\ensuremath{\abt{list}(#1)}}
\newcommand{\unitA}[0]{\ensuremath{\abt{unit}}}
\newcommand{\annoA}[2]{\ensuremath{\abt{pot}(#1;#2)}}
\newcommand{\probA}[2]{\ensuremath{\abt{prob}\{#1;#2\}}}
\newcommand{\unitV}{\ensuremath{\tuple{}}}
\newcommand{\nilV}{\ensuremath{[]}}
\newcommand{\consV}[2]{\ensuremath{#1 \dblcolon #2}}
\newcommand{\cloV}[4]{\ensuremath{\mathsf{clo}(#1;#2,#3.#4)}}
\newcommand{\probV}[1]{\ensuremath{\mathsf{prob}(#1)}}
\newcommand{\typingrel}[4]{\ensuremath{#1;#2 \vdash #3 : #4}}
\newcommand{\valuerel}[2]{\ensuremath{#1 : #2}}
\newcommand{\evalrel}[6]{\ensuremath{#1;#6 \vdash #2 \Downarrow^{#5} #3 \mid #4 }}
\newcommand{\evalrelnotr}[5]{\ensuremath{#1 \vdash #2 \Downarrow^{#5} #3 \mid #4}}
\newcommand{\steprel}[4]{\ensuremath{#1 \vdash #2 \Rightarrow^{#4} #3 }}
\newcommand{\pot}[2]{\ensuremath{\Phi(#1 : #2})}
\renewcommand{\Rule}[4][]{\ensuremath{\inferrule*[right={(#2)},#1]{#3}{#4}}}
\newcommand{\jan}[1]{{\color{ACMGreen}Jan: #1}}
\newcommand{\versioning}[2]{#1}
\begin{document}

\title[Raising Expectation]{Raising Expectations: Automating Expected Cost Analysis with Types}



\author{Di Wang}
\orcid{nnnn-nnnn-nnnn-nnnn}             
\affiliation{
  \position{PhD Student}
  \department{Computer Science Department}             
  \institution{Carnegie Mellon University}           
  \streetaddress{5000 Forbes Ave}
  \city{Pittsburgh}
  \state{PA}
  \postcode{15213}
  \country{USA}                   
}
\email{diw3@andrew.cmu.edu}          

\author{David M. Kahn}
\affiliation{
  \position{PhD Student}
  \department{Computer Science Department}             
  \institution{Carnegie Mellon University}           
  \streetaddress{5000 Forbes Ave}
  \city{Pittsburgh}
  \state{PA}
  \postcode{15213}
  \country{USA}                   
}
\email{davidkah@andrew.cmu.edu}         

\author{Jan Hoffmann}
\affiliation{
  \position{Assistant Professor}
  \department{Computer Science Department}             
  \institution{Carnegie Mellon University}           
  \streetaddress{5000 Forbes Ave}
  \city{Pittsburgh}
  \state{PA}
  \postcode{15213}
  \country{USA}                   
}
\email{jhoffmann@cmu.edu}         

\begin{abstract}

This article presents a type-based analysis for deriving upper bounds
on the expected execution cost of probabilistic programs.
The analysis is naturally compositional, parametric in the cost
model, and supports higher-order functions and inductive data types.
The derived bounds are multivariate polynomials that are functions of
data structures.
Bound inference is enabled by local type rules that reduce type
inference to linear constraint solving.
The type system is based on the potential method of amortized analysis
and extends automatic amortized resource
analysis (AARA) for deterministic programs.
A main innovation is that bounds can contain symbolic probabilities,
which may appear in data structures and function arguments.
Another contribution is a novel soundness proof that establishes the
correctness of the derived bounds with respect to a distribution-based
operational cost semantics that also includes nontrivial diverging
behavior.
For cost models like time, derived bounds imply termination with
probability one.
To highlight the novel ideas, the presentation focuses on linear
potential and a core language.
However, the analysis is implemented as an extension of Resource Aware
ML and supports polynomial bounds and user defined data structures.
The effectiveness of the technique is evaluated by analyzing the
sample complexity of discrete distributions 
and with a novel average-case estimation for deterministic programs
that combines expected cost analysis with statistical methods.


\end{abstract}

\begin{CCSXML}
<ccs2012>
<concept>
<concept_id>10003752.10003753.10003757</concept_id>
<concept_desc>Theory of computation~Probabilistic computation</concept_desc>
<concept_significance>500</concept_significance>
</concept>
<concept>
<concept_id>10003752.10003790.10003794</concept_id>
<concept_desc>Theory of computation~Automated reasoning</concept_desc>
<concept_significance>300</concept_significance>
</concept>
<concept>
<concept_id>10003752.10003790.10011740</concept_id>
<concept_desc>Theory of computation~Type theory</concept_desc>
<concept_significance>500</concept_significance>
</concept>
<concept>
<concept_id>10003752.10010124.10010131.10010134</concept_id>
<concept_desc>Theory of computation~Operational semantics</concept_desc>
<concept_significance>300</concept_significance>
</concept>
</ccs2012>
\end{CCSXML}

\ccsdesc[500]{Theory of computation~Probabilistic computation}
\ccsdesc[300]{Theory of computation~Automated reasoning}
\ccsdesc[500]{Theory of computation~Type theory}
\ccsdesc[300]{Theory of computation~Operational semantics}

\keywords{analysis of probabilistic programs, expected execution cost, resource-aware type system}  

\maketitle

\section{Introduction}
\label{sec:intro}

Probabilistic programming~\cite{JCSS:Kozen81,book:MM05} is an
effective tool for customizing probabilistic inference~\cite{kn:CGH17,misc:dippl,PLDI:MSH18}
as well as for modeling and analyzing randomized
algorithms~\cite{TassarottiH19}, cryptographic
protocols~\cite{POPL:BGB09}, and privacy mechanisms~\cite{POPL:BKO12}.
In this paper, we study probabilistic programs as models of the
execution cost (or resource use) of programs. \Omit{ \jan{I wanted to say that the programs are models.}}
Execution cost can be defined by a cost semantics or a
programmer-defined metric.
For such a cost model, a probabilistic program defines a distribution
of cost that depends on the distribution of the inputs as well as the
probabilistic choices that are made in the code.

The problem of statically analyzing the cost distribution of
probabilistic programs has attracted growing attention in recent
years.
Kaminski et al.~\cite{ESOP:KKM16,LICS:OKK16} have built on the work of Kozen~\cite{JCSS:Kozen81}, studying
weakest-precondition calculi for deriving upper bounds on the expected
worst-case cost of imperative programs, as well as reasoning about lower
bounds~\cite{POPL:HKG20}.
%
It has been shown that this calculus can be specialized to
automatically infer constant bounds on the sampling cost of
non-recursive Bayesian networks~\cite{ESOP:BKKM18} and polynomial
bounds on the worst-case expected cost of arithmetic programs~\cite{PLDI:NCH18,POPL:CFN16,CAV:CFG16}.
The key innovation that enables the inference of symbolic bounds is a
template-based approach that reduces bound inference to efficient \emph{linear-program} (LP)
solving, a reduction which has been previously applied non-probabilistic
programs~\cite{PLDI:CHS15,CAV:CHR17}.
This technique has been extended to best-case bounds and
non-monotone cost~\cite{PLDI:WFG19} as well as to incorporate higher-moment reasoning for
deriving tail bounds using linear~\cite{WangHR20} and
non-linear~\cite{TACAS:KUH19} constraint solving.

The only existing technique for analyzing the expected cost of
probabilistic (higher-order) functional programs, is the recent work
of Avanzini et al.~\cite{LICS:ALG19}.
It applies an affine refinement type system,
called $\ell$RPCF, to derive bounds on the expected worst-case cost
for an affine version of PCF~\cite{Plotkin77}.
$\ell$RPCF can be seen as a probabilistic version of
d$\ell$PCF~\cite{LICS:LG11}.
While the refinement types of $\ell$RPCF are expressive and flexible,
a disadvantage is that the complexity of the corresponding refinement
constraints hampers type inference. It seems unclear if type
checking $\ell$RPCF is decidable.

This article presents the first automatic analysis of worst-case
bounds on the expected cost of probabilistic functional
programs. It is based on \emph{automatic
amortized resource analysis}
(AARA)~\cite{POPL:HJ03,POPL:HAH11}, a type system for
inferring worst-case bounds.
The expressivity of AARA's type-based approach for probabilistic programs goes
beyond existing techniques for imperative integer programs in the
following ways:
\begin{enumerate}

\item The analysis infers expected cost bounds for higher-order programs.

\item Bounds can be functions of the sizes of values of (potentially nested) inductive types

\item Bounds can be functions of symbolic probabilities.

\end{enumerate}

In addition, AARA for probabilistic programs preserves many advantageous features
of classical AARA for deterministic programs, which include
\begin{itemize}

\item[-] efficient type checking (linear in the size of the type derivation),

\item[-] reduction of type inference for \emph{polynomial bounds} to
  \emph{linear programming},

\item[-] use of the potential method to amortize operations with varying expected cost, and

\item[-] natural compositionality, as types summarize the cost behavior of functions.

\end{itemize}

Nonetheless, while AARA for deterministic programs naturally derives bounds on the
high-water mark of non-monotone resources that can become available
during evaluation (like memory), this is not the case for AARA for
probabilistic programs.
Reasoning about high-watermark resource usage of probabilistic
programs is in fact an open problem even for manual reasoning systems
for first-order languages.
This problem is out of the scope of this article and we limit the
development to monotone resources like time. The technical
difficulties with non-monotone resources are discussed in more detail
in \cref{sec:semantics}.

To focus on the novel ideas, we present the analysis for a simple
probabilistic functional language with probabilistic branching and
lists (\cref{sec:semantics}) \Omit{\jan{and binary trees?}} with linear
potential functions (\cref{sec:type}).
However, the results carry over to multivariate polynomial potential
functions and user-defined inductive data structures. We
implemented the analysis as an extension of Resource Aware
ML (RaML)~\cite{POPL:HDW17} that we call pRaML (\cref{sec:goat}).

The main technical innovations are the introduction of a type rule for
probabilistic branching, and a new type for symbolic probabilities
(\cref{sec:overview,sec:type}). While these new features are fairly intuitive,
proving their soundness with respect to a cost semantics is not. 
The existing proof method for deterministic AARA does not directly generalize
to the probabilistic setting because of the complexities introduced by
a probabilistic cost semantics. 
To address the challenges of the probabilistic setting, we present a
novel soundness proof with respect to a probabilistic operational cost
semantics based on Borgstr{\"o}m et al.'s trace-based and
step-indexed-distribution-based semantics \cite{ICFP:BLG16}
(\cref{sec:sound}).
The details are discussed in \cref{sec:semantics}.  

We evaluate the effectiveness of pRaML by analyzing textbook
examples (\cref{sec:goat}) and by exploring novel problem domains (\cref{sec:app}).
The first domain (\cref{sec:sample}) \Omit{\jan{@David: in you version it
  says ``section 7.1'' here. Please update your latex or fix whatever
  causes this.}} is the implementation and analysis of discrete
probability distributions. Specifically, we use pRaML to analyze the
\emph{sample complexity} of the distributions, i.e., on average, how
many steps a program needs to produce a sample from the target
distribution.  Low sample complexity has recently become an important
criterion for efficient sampler implementations, as many probabilistic
inference methods require billions of random samples~\cite{misc:Djuric19}.  We
also verify some more complex fractional bounds in pRaML using a
scaled model.
The second domain (\cref{sec:model}) is the estimation of average-case
cost of functional programs on a specific input distribution as a
three step process. First, we gather
statistics on the branching behavior of conditional branches by evaluating the program on small inputs
that are representative for the input distribution. Second,
the conditionals are replaced with probabilistic branches that mirror
the observed branching behavior on the small inputs. Third, the
resulting program is analyzed with pRaML to determine a symbolic bound
on the expected cost of the resulting probabilistic program for all
input sizes.

In summary, we make the following \emph{contributions}:
\begin{enumerate}
\item Design of a novel type-based AARA for probabilistic programs
\item Type soundness proof with respect to a probabilistic operational cost semantics
\item Implementation as an extension of RaML
\item Application of RaML to automatically analyze sample complexity
\item Automatic average-case analysis that combines the use of RaML with empirical statistics
\end{enumerate}

\section{Topic Overview}
\label{sec:overview}

\paragraph{AARA} 

The type system of \emph{automatic amortized resource analysis} (AARA)
is a pre-existing framework for inferring cost bounds for
deterministic functional
programs~\cite{POPL:HJ03,POPL:JHL10,POPL:HDW17}. It imbues its types
with potential energy so as to perform the \textit{physicist's method}
(or \textit{potential method}) of amortized
analysis~\cite{kn:Tarjan85}. When performing type inference, the
system generates linear constraints on this potential that, when
solved, provide the coefficients of polynomials or other
functions. These functions express concrete (non-asymptotic) bounds on
worst- or best-case~\cite{SP:NDF17} execution costs, parameterized by
input size. 

In more detail, the potential method works as follows.
We say that $\Phi : \mathsf{State} \to \bbQ_{\ge 0}$ is a valid potential function if, for all states $S \in \mathsf{State}$ and operations $o:S \to S$, the following holds.
\[
\Phi(S) \geq 0 \hspace{3em} \text{and} \hspace{3em} \Phi(S) \ge \mathit{cost}(S,o(S)) + \Phi(o(S)).
\]
The second inequality states that the potential of the current state is
sufficient to pay for the cost of the transition from $S$ to $o(S)$ and potential of the
next state.
It then follows that the potential of the initial state establishes an
\emph{upper} bound on the \emph{worst-case} cost of a sequence of operations.

The AARA type system is designed to automatically assign such
potential functions to functional programs, where we view evaluation
steps as operations on machine states of an abstract machine.
Automation is enabled by fixing the format potential functions to linear combinations of base functions, and then 
incorporating them into the types of values.
Consider for example the function \code{exists} from the OCaml List
module in \cref{fig:exists}.
We model its cost behaviour using explicit \code{tick(q)} expressions that
consume $q \geq 0 \in \mathbb{Q}$ when evaluated.
The function \code{exists pred lst} has a cost of $1$ for in every
recursive call, and therefore the worst-case cost is equal to the
length of \code{lst} in addition to the cost of the calls to the function
\code{pred}.

To automatically derive this bound in linear AARA we assign the
following type template where $q_0,q_1,q,p,r$ and $r'$ are yet unknown
non-negative coefficients.
\[
\code{exists} : \arrT{\annoT{\arrT{\annoT{\tau}{r}}{\annoT{\boolT}{r'}}} {q_0}}{\arrT{\annoT{\listT \tau p} {q_1}}{\annoT {\boolT} {q}}}
\]

A valid instantiation of the potential annotation would for instance
be the following type.
\[
\code{exists} : \arrT{\annoT{\arrT{\annoT{\tau}{0}}{\annoT{\boolT}{0}}} {0}}{\arrT{\annoT{\listT \tau {1}} {0}}{\annoT {\boolT} {0}}}
\]

If we ignore the potential annotations in $\tau$ and the cost of evaluating the function \code{pred}, then this type expresses that the cost of evaluating \code{exists pred lst} is $1 \cdot |\code{lst}|$, as marked by requiring a list argument with 1 unit of potential per element. Another valid typing is
\[
\code{exists} : \arrT{\annoT{\arrT{\annoT{\tau}{2}}{\annoT{\boolT}{0}}} {0}}{\arrT{\annoT{\listT \tau {3}} {0}}{\annoT {\boolT} {0}}} \; .
\]
It now expresses that the cost of evaluating \code{exists pred
  lst} is $3 \cdot |\code{lst}|$ if the cost of evaluating \code{pred} is raised to $2$. The $\code{pred}$ function here is typed to take 2 units of potential to run, but is balanced by each element of the list argument being paired with 3 units of potential, 2 more than previously. 
  
  In general, type inference constrains this type's annotation variables with $p \geq r +1$ and $q_1 \geq q$, and leaves the other annotations unconstrained. This aids in the compositionality of the approach, as the specific constants chosen can be adapted to the arguments, including arguments that are themselves functions like $\code{pred}$ here. 
  
  To exemplify such compositionality, consider some function \code{f} that merely iterates over a list, consumes 1 resource every iteration, and then returns the list. It can be typed $\arrT{\annoT{\listT \tau {p}} {0}}{\annoT{\listT \tau {q}} {0}}$ where $p \geq q+1$. If we chain its application to some list \code{lst} as \code{f (f lst)}, then we might instantiate the type of the inner application with $p=2,q=1$, and the outer with $p=1,q=0$, composing the costs naturally. In this case, we would also type \code{lst} as $\listT \tau {2}$.
  
  Of course, AARA cannot do the impossible of successfully analyzing all programs. AARA uses structural reasoning methods that cannot pick up on semantical properties that the program may depend on, like Peano arithmetic. Further, not all resource usage can be accurately expressed in a given class of resource functions. For instance, polynomials will over-approximate logarithms, and simply cannot express exponentials. The resource functions we present in this paper are linear, but we make use of polynomial resource functions in our implementation.
  

\begin{figure}
  \centering
  \begin{subfigure}{0.32\textwidth}
\begin{lstlisting}[xleftmargin=0pt]
let rec exists pred lst =
	match lst with
	| [] -> false
	| hd::tl ->
		if pred hd
		then true
		else let _ = tick 1 in
			exists pred tl
\end{lstlisting}
\caption{\label{fig:exists}}
  \end{subfigure}
  \begin{subfigure}{0.32\textwidth}
\begin{lstlisting}[xleftmargin=0pt]
let rec bernoulli lst =
	match lst with
	| [] -> false
	| hd::tl ->
		match flip 0.5 with
		| H -> true
		| T ->  let _ = tick 1 in
			bernoulli tl
\end{lstlisting}
\caption{\label{fig:bernoulli}}
  \end{subfigure}
  \begin{subfigure}{0.34\textwidth}
\begin{lstlisting}[xleftmargin=0pt]
let rec rdwalk lst =
  match lst with
  | [] -> ()
  | p::ps ->
    let _ = tick 1 in
    match flip p with
    | H -> rdwalk (0.2::0.4::ps)
    | T -> rdwalk ps
\end{lstlisting}
\caption{\label{fig:rdwalk}}
  \end{subfigure}

\caption{Implementations of probabilistic programs in pRaML.}
\end{figure}

\paragraph{Probabilistic programming}

In this paper, we extend AARA to deriving bounds on the expected cost
of probabilistic programs.
In contrast to a deterministic program, a probabilistic program may
not always evaluate to the same value (if any), but rather to a distribution over values and divergence. 
Similarly, the evaluation cost of a probabilistic program is given
by a distribution. 

Consider for example the function \code{bernoulli} in
\cref{fig:bernoulli}. 
It is similar to the function \code{exists}, but the conditional is
replaced with the probabilistic construct \code{match flip 0.5}.
Intuitively, this construct means that we flip a coin
and evaluate the heads or tails branch based on the outcome.
In probabilistic programming, we assume that such flips are truly
random (as opposed to an implementation that may rely on a
pseudorandom number generator).
As a result, function \code{bernoulli} describes a Bernoulli process across the
elements of an input list. It terminates with probability $1$ and has
the same linear worst-case cost as $\mathit{exists}$, namely
$1 \cdot |\code{lst}|$.  However, the expected cost of \code{bernoulli} is
only $1$.

For an example with a more interesting expected cost, consider the
function \code{rdwalk} in \cref{fig:rdwalk}. Its argument is a list of
probabilities that are used, one after another, to determine the odds
in a probabilistic branch that either pops the head off the list (in
the tails case) or adds two new probabilities to the list (in the heads case). The random
walk consumes $1$ \code{tick} in each iteration and terminates if the
argument list is empty. One can show that the function \code{rdwalk}
terminates with probability 1 and the expected cost is a function of the
argument $[p_1,\ldots,p_n]$ as
\[
n + \sum_{1 \leq i \leq n} 5 p_i \; .
\]

This is an example of a program with non-terminating execution that may nonetheless have expected costs that can be bounded. If only finite cost is accrued on non-terminating execution, nontermination may even occur with positive probability and still yield a finite bound. Conversely, programs that
terminate with probability $1$ may still have unbounded expected cost, e.g., a symmetric random walk over natural numbers that stops at 0~\cite{book:MM05}. 

\paragraph{AARA for Expected Cost}

Now reconsider the potential method in the presence of probabilistic operations,
that is, the cost and the next state of an operation are given by distributions.
Let $o(S)$ denote the probability distribution over possible next states induced by $o$ operating on $S$.
One can derive bounds on the worst-case \textit{expected} cost by requiring that the following inequality for the potential function holds over all states $S$ and operations $o$. We use the notation $\bbE_{S' \sim o(S)}$ (defined in \cref{sec:semantics}) to weight expected cost over states $S'$ by the probability given by $o(S)(S')$.
\[
\Phi(S) \ge \bbE_{S' \sim o(S)}(\mathit{cost}(S,S') + \Phi(S')) = \bbE_{S' \sim o(S)}(\mathit{cost}(S,S')) + \bbE_{S' \sim o(S)} (\Phi(S')) ,
\]
The intuitive meaning of the inequality is that the potential $\Phi(S) \geq 0$ is sufficient to pay for the expected cost of the operation $o$ from the state $S$, and the expected potential of the next state $S'$ with respect to the probability distribution $o(S)$.

Further, if for some operation $o'$ we have $\Phi(S') \ge \bbE_{S'' \sim o'(S')}(\mathit{cost}(S',S'')) + \bbE_{S'' \sim o'(S')}(\Phi(S''))$ for each state $S'$ the could succeed $S$ under $o$, then we can \emph{compose} the reasoning for $o$ and $o'$ as follows.
\begin{align*}
  \Phi(S) & \ge   \bbE_{S' \sim o(S)}(\mathit{cost}(S,S')) + \bbE_{S' \sim o(S)} (\Phi(S')) \\
  & \ge \bbE_{S' \sim o(S)}(\mathit{cost}(S,S')) + \bbE_{S' \sim o(S)} \lrsq{ \bbE_{S'' \sim o'(S')}(\mathit{cost}(S',S'')) + \bbE_{S'' \sim o'(S')}(\Phi(S'')) } \\
  & = \bbE_{S' \sim o(S),S''\sim o'(S')}(\mathit{cost}(S,S')+\mathit{cost}(S',S'')) + \bbE_{S' \sim o(S), S'' \sim o'(S')}(\Phi(S'')).
\end{align*}
Thus, the potential $\Phi(S)$ is sufficient to cover the expected cost of operations $o$ and $o'$, as well as the expected potential of the final state. This can be sequenced indefinitely to cover all operations of an entire program. A valid potential assignment for the initial state of the program then provides an \emph{upper} bound on the \emph{expected} total cost of running the program.

In \cref{sec:type}, we extend the AARA type system to support this
kind of potential-method reasoning while preserving the benefits of
AARA such as compositionality and reduction of type inference to LP
solving.
For example, our probabilistic extension to AARA can type the code of
the function \code{bernoulli} in \cref{fig:bernoulli} as
\[
\code{bernoulli} : \arrT{\annoT{\listT \tau 0} 1}{\annoT {\boolT} 0}
\]
where the input can be typed as a list with $0$ units of potential per
element (assuming $\tau$ does not assign potential).
To cover the expected cost, it only needs 1 available potential unit
per run, indicated by the 1 paired with the input type. When typing
the probabilistic $\code{flip}$, this single unit of potential can
pay for the expected cost of the two equally-likely branches: The $H$
branch costs 0, the $T$ branch costs 2 (1 each for the recursive call
and for $\code{bernoulli}$ to consume), and they average to 1. As
$\code{bernoulli}$ can be typed to consume 1 unit of potential, the upper bound AARA
finds is exact.

The functions \code{bernoulli} and \code{exists} form an example of the
automatic average-case estimation algorithm that we introduce in
\cref{sec:model}. Assume that you want to run \code{exists} on a
certain distribution of inputs and you want to determine the average
cost of \code{exists} on this distribution. To approximately answer this question,
we collapse code like $\mathit{exists}$ into code like
$\mathit{bernoulli}$ and use pRaML to estimate that the average cost
is $1$. In this case, such a collapse would be justified by finding
empirically that $\mathit{pred}$ holds with probability $0.5$.

The technical innovation that makes possible the typing of
\code{bernoulli} is a new typing rule for probabilistic branching.
Another innovation is the introduction of the type $\probT{}{}$ for
probabilities. The introduction form for values of type $\probT{}{}$
simply takes a rational number $0 \leq p \leq 1$ and the elimination form
is a probabilistic branch.
We can assign potential 
$$
\pot{p}{\probT[q_H]{q_T}} \defeq q_H \cdot p + q_T \cdot (1-p)
$$
to a value $p$ of type $\probT{}{}$. The potential $q_H$ and $q_T$
then becomes available in the head and tails cases, respectively, of
the probabilistic branching.

Consider for example the function \code{rdwalk} in \cref{fig:rdwalk}
again. Our probabilistic analysis can automatically derive the
typing
\[
\code{rdwalk} : \arrT{\annoT{\listT {\probT[5]{0}} 1} 0}{\annoT{ \mathsf{unit} } 0} \; .
\]
The potential of the argument
\[
\pot{[p_1,\ldots,p_n]}{\annoT{\listT {\probT[5]{0}} 1} 0} = n + \sum_{1 \leq i \leq n} 5 p_i \; ,
\]
corresponds to the exact bounds on the expected cost.

Here we present these novel ideas for a simple functional
language with lists and linear potential functions. However, the
results carry over to user-defined inductive types and multivariate
polynomial potential functions of RaML~\cite{POPL:HDW17} that we use
in the implementation.
The main theorem of this paper (see \cref{sec:sound}) states that the
expected cost bounds are sound, with respect to a step-indexed distribution-based
operational semantics inspired by Borgstr{\"o}m et al.'s semantics for the
probabilistic lambda calculus~\cite{ICFP:BLG16}.
We then extend the semantics with \emph{partial evaluations} to capture the resource
behavior of non-terminating executions of a probabilistic program.
This novel extension enables an improved soundness result, which implies that expected
bounds on run-times ensure termination with probability 1.


\section{Language and Semantics}
\label{sec:semantics}

In this section, we introduce a subset of pRaML as a functional ML-like language that includes units, lists, recursion, pattern match, and a new \emph{flip} expression for probabilistic branching.
We then present an initial form of our operational cost semantics for probabilistic programs, which keeps track of both the probability and the cost of executions.
We will use this language and semantics to formalize and justify our type-based expected cost analysis in \cref{sec:type,sec:sound}.

\paragraph{Syntax}
We only consider expressions in \emph{share-let-normal-form}~\cite{POPL:HAH11}. This is a syntactic form that uses variables instead of arbitrary terms whenever possible, without loss of expressivity. This is done through maximizing the use of let-expressions. The syntax also must use $\shareS{x}{x_1}{x_2}{e}$ to allow multiple uses of a variable $x$ in an expression $e$, due to linear properties of the type system.
The abstract and concrete syntax of our probabilistic programming language is given by the grammar in \cref{fig:syntax}.
Abstract syntax is given via abstract binding trees~\cite{book:PFPL16}. While the concrete syntax matches the intuitive meaning of each expression, the abstract syntax conveys the same information and compacts some overly large expressions, allowing them to be written down more succinctly.

The syntactic form $\flipS{p}{e_1}{e_2}$ is introduced to execute $e_1$ or $e_2$ at random.
The intuitive meaning of the flip expression is to flip a biased coin, which shows heads with probability $p$ and tails with probability $(1-p)$, then execute $e_1$ if the coin shows heads, or execute $e_2$ if the coin shows tails.
Additionally, the introduction form $\probS{p}$ and the elimination form $\sflipS{x}{e_1}{e_2}$ are provided for the new \emph{probability} type:
$\probS{p}$ encapsulates a rational number $0 \le p \le 1$ for probability, and $\sflipS{x}{e_1}{e_2}$ is essentially the same as $\abt{flip}$ expressions except that the branching probability is specified by a variable $x$ of probability type.
The syntactic form $\shareS{x}{x_1}{x_2}{e}$ has to be used to allow multiple uses of a variable $x$ in an expression $e$.

\begin{figure}\small
\centering
\begin{tabular}{ccllll}
  & & \textbf{Abstract} & \textbf{Concrete} \\ \hline
  $e$ & $\Coloneqq$ & $x$ & $x$  & variable \\
  & & $\unitS$ & $\unitC$  & null tuple \\
  & & $\nilS$ & $\nilC$ & empty list \\
  & & $\consS{x_1}{x_2}$ & $\consC{x_1}{x_2}$ & cons list \\
  & & $\matLS{e_0}{x_1}{x_2}{e_2}{x}$ & $\matLC{e_0}{x_1}{x_2}{e_2}{x}$ & pattern match \\
  & & $\funS{f}{x}{e}$ & $\funC{f}{x}{e}$ & function \\
  & & $\appS{x_1}{x_2}$ & $\appC{x_1}{x_2}$ & application \\
  & & $\tickS{q}$ & $\tickC{q}$ & cost \\
  & & $\letS{e_1}{x}{e_2}$ & $\letC{e_1}{x}{e_2}$ & definition \\
  & & $\shareS{x}{x_1}{x_2}{e}$ & $\shareC{x}{x_1}{x_2}{e}$ &  sharing \\
  & & $\flipS{p}{e_1}{e_2}$ & $\flipC{p}{e_1}{e_2}$ & coin flip \\
  & & $\probS{p}$ & $\probC{p}$ & probability \\
  & & $\sflipS{x}{e_1}{e_2}$ & $\sflipC{x}{e_1}{e_2}$ & symbolic flip
\end{tabular}
\caption{Syntax of the language\label{fig:syntax}}
\end{figure}

\paragraph{Elementary probability theory} We recount some essential concepts from elementary probability theory.
You can find more serious mathematical development of probabilities in textbooks on measure theory~\cite{book:Williams91,book:Billingsley12}.

Consider a random experiment.
Let $\Omega$ denote the set of all the possible outcomes, called the \emph{sample space}.
A discrete \emph{probability space} is a pair $(\Omega,\bbP)$, where $\bbP : \Omega \to [0,1]$ is a \emph{probability distribution} on $\Omega$, i.e., $\sum_{\omega \in \Omega} \bbP(\omega) = 1$.
The probability of an \emph{event} $E \subseteq \Omega$, written $\bbP(E)$, is defined as $\sum_{\omega \in E} \bbP(\omega)$.
We often write $\bbP(\theta)$ for the probability of a statement $\theta$, i.e., $\bbP(\{ \omega \mid \theta(\omega)~\text{is}~\text{true}\})$.
A \emph{random variable} $X:\Omega \to \bbR \cup \{{-\infty},{+\infty}\}$ is a function from a probability space to the extended real numbers.
The \emph{expected value} of a random variable $X$ is the weighted average $\bbE_{\omega \sim (\Omega,\bbP)}(X) \defeq \sum_{\omega \in \Omega} X(\omega) \cdot \bbP(\omega)$.
We often write $\bbE(X)$ if there is no ambiguity in the choice of the probability space.
An important property of expected value is \emph{linearity}:
If $X$ and $Y$ are random variables and $a,b\in \bbR$, then $(aX + bY)$ is a random variable and $\bbE(aX + bY) = a\bbE(X) + b\bbE(Y)$.

\paragraph{Obstacles for probabilistic semantics} 

To define the expected resource usage of probabilistic programs, we formulate a cost semantics based on an evaluation dynamics. This turns out to be challenging. Previous work on AARA cost semantics for non-probabilistic programs lack the infrastructure to reason about certain effects of probabilistic phenomena. One such example is the poor behaviour of high-water marks: the well-known probabilistic Martingale betting strategies have an unbounded expected high-water mark, even while having finite expected net gain. In this section we describe the sorts of problems faced from the perspective of the cost semantics.

The notion of values in the cost semantics can proceed unchanged: \emph{value} $v \in \mathsf{Val}$ is either a null tuple $\unitV$, an empty list $\nilV$, a cons list $\consV{v_1}{v_2}$, or a \emph{function closure} $\cloV{V}{f}{x}{e}$ that consists of an \emph{environment} $V : \mathsf{Var} \to \mathsf{Val}$ and a function definition $\funC{f}{x}{e}$. However, the evaluation cost dynamics surrounding such values must be altered to deal with probability. In prior work on AARA~\cite{APLAS:HH10}, the cost semantics is defined by a judgment of the form
\begin{align*}
\evalrelnotr{V}{e}{v}{(q,q')}{}
\end{align*}
This judgment means that, under an evaluation environment $V$, the expression $e$ evaluates to the value $v$ using a high-water mark of $q \in \bbQ_{\ge 0}$ resources and leaving $q' \in \bbQ_{\ge 0}$ resources leftover. By tracking both the high-water mark and leftover resources, non-probabilistic AARA was able to reason about resources that might be returned after use, like space. This tracking is performed by the \emph{resource monoid}~\cite{APLAS:HH10}, which algebraically composes the high-water mark/leftover pairs $(q,q')$. 

Unfortunately, this operational judgment does not adapt to the probabilistic domain. Firstly, it distinguishes a particular value $v$ for evaluation, rather than a distribution. Further, the resource monoid does not compose under probability. Both points must be remedied to soundly model cost.

To illustrate the resource monoid problem, we first define it. The following accounts for how the high-water mark and leftover resource constraints change under non-probabilistic composition.
$$(a,b)\cdot(c,d) \defeq (a+\max(c-d,0),d+\max(d-c,0))$$

Now consider the following two expressions. Letting $e_1$ have the associated resource monoid term $(0,0)$, $e_2$ have $(4,2)$ and $e_3$ have $(2,1)$, we see both have an expected high-water mark resource usage of 2 and expected leftover of 1. 
$$\flipC{\frac 1 2}{e_1}{e_2} \;\;\; \textit{vs} \;\;\; e_3$$

However, we cannot represent the resource constraints of both expressions uniformly with the term $(2,1)$. This becomes apparent if we precede both expressions by a copy of $e_3$, as then the expected high-water mark of each differs. The latter can be correctly calculated to be 3 with $(2,1)\cdot(2,1)=(3,1)$. However, the high-water mark of the former would be 3.5 since half the time it would be 2 and half the time 5. There is no way to get two different results as a function of the same input $(2,1)$, so two-place resource monoid terms cannot be salvaged for probabilistic use.

To avoid this problem, we forgo the high-water mark/leftover resource distinction, and reason only about resources that \emph{monotonically} decrease, like time. 
It then suffices to track only the net cost with a more well-behaved one-place term. As a result, the AARA system described here only consumes resources, and never provides them.

\begin{changebar}
This restriction to monotonically consumed resources solves an additional problem for the cost semantics concerning the well-definedness of expected cost in the presence of nontermination. 
%
Even programs with finite expected cost may have nonterminating executions.
However, if the execution can be non-terminating, there can be an infinite number of execution traces, and thus the expected value of their cost is defined over an infinite sum. Such a sum must converge absolutely to represent an expected value, and if the costs for operations can have different signs this is not clearly the case.
Recent work~\cite{PLDI:WFG19} has proposed techniques to reason about non-monotone resources for imperative programs; adapting these techniques to analyze functional programs is beyond the scope of this paper, but is an interesting future research direction.
\end{changebar}

Besides the cost, a probabilistic semantics must also account for probabilistic execution resulting in a distribution of values, rather than 1 particular value. To solve this problem, one might first think to reason about individual executions separately by adding a component that tracks the probability of a particular value resulting. By collecting such judgments with probabilities adding to 1, one could then recover the desired value distributions.
For this approach, one might create the judgment $\evalrelnotr{V}{e}{v}{q}{p}$ which would mean that there exists \emph{an} execution where the expression $e$ evaluates to the value $v$ with net cost $q$ and probability $p$.
However, this approach has a subtle problem: There might be multiple different executions with the same evaluation result, cost, and probability.
For example, consider the following program
\[
e \equiv \flipC{\frac{1}{2}}{\tickC{2}}{\letC{\tickC{1}}{\_}{\tickC{1}}}.
\]
Although the program has two possible syntactically-distinct executions, there is only one valid evaluation relation derivable from the given rules, which is
\[
\evalrelnotr{\cdot}{e}{\unitC}{2}{\sfrac{1}{2}}.
\]
This thwarts the idea of collecting relations with probabilities summing to 1, as some relations would need to be counted multiple times, and the present components to the judgment leave no way to determine the multiplicity. To solve these problems, we present the following cost semantics.

\paragraph{Trace-based cost semantics} We deal with obstacles surrounding cost semantics by adapting Borgstr{\"o}m et al.'s trace-based semantics for lambda calculus~\cite{ICFP:BLG16} to our setting.
The key observation is that an execution is uniquely determined by the \emph{trace} of outcomes of the coin flips in the execution.
We augment the evaluation relation with a component for traces, i.e., a finite sequence of elements in $\{\mathsf{H},\mathsf{T}\}$.
The trace-based evaluation judgment then has the form
\[
\evalrel{V}{e}{v}{q}{p}{\sigma},
\]
The intuitive meaning is that under the environment $V$, with a sequence $\sigma$ of coin-flip outcomes, the expression $e$ evaluates to a value $v$ with cost $q$ and probability $p$.

\cref{Fi:TraceSemantics} presents the rules for this trace-based evaluation dynamics.
We write $[]$ for empty traces, $\sigma_1 \mathbin{@} \sigma_2$ for trace concatenation, and $\mathsf{H} \dblcolon \sigma$ or $\mathsf{T} \dblcolon \sigma$ to observe a new coin flip and prepend the outcome to $\sigma$.
In the rule \textsc{E:Let}, we multiply the probabilities of an execution of $e_1$ and an execution of $e_2$, as well as concatenate their traces of coin flips.

\begin{figure}\footnotesize
\fbox{$\evalrel{V}{e}{v}{q}{p}{\sigma}$ \quad ``in environment $V$, with trace $\sigma$, expression $e$ evaluates to value $v$ with cost $q$ and probability $p$''}
\begin{mathpar}
  \Rule{E:Var}{ }{ \evalrel{V}{x}{V(x)}{0}{1}{[]} }  
  \and
  \Rule{E:Triv}{ }{ \evalrel{V}{\unitS}{\unitV}{0}{1}{[]} }
  \and
  \Rule{E:Nil}{ }{ \evalrel{V}{\nilS}{\nilV}{0}{1}{[]} }
  \and
  \Rule{E:Cons}{ V(x_1) = v_1 \\ V(x_2) = v_2 }{ \evalrel{V}{\consS{x_1}{x_2}}{\consV{v_1}{v_2}}{0}{1}{[]} }
  \and
  \Rule{E:MatL-1}{ V(x) = \nilV \\ \evalrel{V}{e_0}{v}{q}{p}{\sigma} }{ \evalrel{V}{\matLS{x}{e_0}{x_1}{x_2}{e_1}}{v}{q}{p}{\sigma} }
  \and
  \Rule{E:MatL-2}{ V(x) = \consV{v_1}{v_2} \\ \evalrel{V,x_1 \mapsto v_1,x_2 \mapsto v_2}{e_1}{v }{q}{p}{\sigma} }{ \evalrel{V}{\matLS{x}{e_0}{x_1}{x_2}{e_1}}{v}{q}{p}{\sigma} }
  \and
  \Rule{E:Tick}{ }{ \evalrel{V}{\tickS{q}}{\unitV}{q}{1}{[]} }
  \and
  \Rule{E:Let}{ \evalrel{V}{e_1}{v_1}{q_1}{p_1}{\sigma_1} \\  \evalrel{V,x\mapsto v_1}{e_2}{v_2}{q_2}{p_2}{\sigma_2}  }{ \evalrel{V}{\letS{e_1}{x}{e_2}}{v_2}{q_1+q_2}{p_1 \cdot p_2}{\sigma_1 \mathbin{@} \sigma_2} }
  \and
  \Rule{E:Fun}{ }{ \evalrel{V}{\funS{f}{x}{e}}{\cloV{V}{f}{x}{e}}{0}{1}{[]} }
  \and
  \Rule{E:App}{ V(x_1) = \cloV{V'}{f}{x}{e} \\ V(x_2) = v_2 \\ \evalrel{V',f\mapsto \cloV{V'}{f}{x}{e},x\mapsto v_2}{e}{v}{q}{p}{\sigma}  }{ \evalrel{V}{\appS{x_1}{x_2}}{v}{q}{p}{\sigma} }
  \and
  \Rule{E:Flip-1}{ \evalrel{V}{e_1}{v_1}{q_1}{p_1}{\sigma} }{ \evalrel{V}{\flipS{p}{e_1}{e_2}}{ v_1}{q_1}{p \cdot p_1}{\mathsf{H} \dblcolon \sigma} }
  \and
  \Rule{E:Flip-2}{ \evalrel{V}{e_2}{v_2}{q_2}{p_2}{\sigma} }{ \evalrel{V}{\flipS{p}{e_1}{e_2}}{v_2}{q_2}{(1-p)\cdot p_2}{ \mathsf{T} \dblcolon \sigma} }
  \and
  \Rule{E:Share}{ V(x) = v \\ \evalrel{V,x_1 \mapsto v,x_2 \mapsto v}{e}{v'}{q}{p}{\sigma} }{ \evalrel{V}{\shareS{x}{x_1}{x_2}{e}}{v'}{q}{p}{\sigma} }
  \and
  \Rule{E:Prob}
  { }
  { \evalrel{V}{\probS{p}}{\probV{p}}{0}{1}{[]} }
  \and
  \Rule{E:FlipS-1}
  { V(x) = \probV{p} \\ \evalrel{V}{e_1}{v_1}{q_1}{p_1}{\sigma} }
  { \evalrel{V}{\sflipS{x}{e_1}{e_2}}{v_1}{q_1}{p \cdot p_1}{\mathsf{H} \dblcolon \sigma} }
  \and
  \Rule{E:FlipS-2}
  { V(x) = \probV{p} \\ \evalrel{V}{e_2}{v_2}{q_2}{p_2}{\sigma} }
  { \evalrel{V}{\sflipS{x}{e_1}{e_2}}{v_2}{q_2}{(1-p) \cdot p_2}{\mathsf{T} \dblcolon \sigma} }
\end{mathpar}
\caption{Evaluation rules of the trace-based cost semantics}
\label{Fi:TraceSemantics}
\end{figure}

Recall that in order to reason about expected resource usage, we need a notion of \emph{probability distributions} over executions, and found that accounting for the multiplicity of operational judgments made this difficult.
With the trace-based dynamics, we can now capture all the terminating executions uniquely.
This is because the result value $v$, the net cost $q$, and the probability $p$, are determined uniquely by the environment $V$, the expressions $e$, and the trace of coin flips $\sigma$. 

By induction on the structure of expression $e$, we prove the lemma below.
\begin{lemma}\label{lem:uniqueness}
  For all $V$, $e$ and $\sigma$, there is at most one combination of $v$, $q$ and $p$ s.t. $\evalrel{V}{e}{v}{q}{p}{\sigma}$.
\end{lemma}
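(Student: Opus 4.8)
The plan is to prove the statement by induction on the \emph{derivation} of the evaluation judgment rather than purely on the structure of $e$. The reason is the rule \textsc{E:App}: it recurses into the body of the closure stored in $V(x_1)$, which is not a syntactic subexpression of $\appS{x_1}{x_2}$, so only the subderivation---not the expression---is guaranteed to shrink. The argument is driven by the fact that the rules of \cref{Fi:TraceSemantics} are syntax-directed once $V$ and the trace are fixed: the shape of $e$ selects the rule, with the two genuinely ambiguous cases resolved respectively by the fixed value $V(x)$ (for \textsc{E:MatL-1} versus \textsc{E:MatL-2}) and by the first element of the trace (for \textsc{E:Flip-1}/\textsc{E:Flip-2} and \textsc{E:FlipS-1}/\textsc{E:FlipS-2}, since each flip rule emits a nonempty trace headed by $\mathsf{H}$ or $\mathsf{T}$).

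The subtle case, and the main obstacle, is \textsc{E:Let}. There the trace is split as $\sigma_1 \mathbin{@} \sigma_2$, and a priori nothing forces this split to be unique: two different prefixes of the same total trace could each be a valid trace for $e_1$, yielding two derivations whose final results I could not obviously reconcile from the as-stated hypothesis. To overcome this I would strengthen the induction hypothesis to a \emph{prefix-closure} statement, proving for all $V$ and $e$:
\[
\text{if } \evalrel{V}{e}{v}{q}{p}{\sigma} \text{ and } \evalrel{V}{e}{v'}{q'}{p'}{\sigma'} \text{ with } \sigma \preceq \sigma' \text{ or } \sigma' \preceq \sigma, \text{ then } \sigma = \sigma' \text{ and } (v,q,p) = (v',q',p').
\]
This stronger claim specializes to the lemma by taking $\sigma = \sigma'$, and it supplies exactly what \textsc{E:Let} needs. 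Given two derivations of the let with prefix-comparable total traces, their $e_1$-prefixes $\sigma_1,\sigma_1'$ are both prefixes of the longer total trace and hence prefix-comparable; the induction hypothesis applied to the two $e_1$-subderivations forces $\sigma_1 = \sigma_1'$ and equal $(v_1,q_1,p_1)$. The environments for $e_2$ then coincide, cancelling the common prefix $\sigma_1$ leaves the $e_2$-traces prefix-comparable, and a second appeal to the hypothesis on the $e_2$-subderivations closes the case with $\sigma = \sigma'$ and equal final results.

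The remaining cases are routine under this strengthened hypothesis. The axioms (\textsc{E:Var}, \textsc{E:Triv}, \textsc{E:Nil}, \textsc{E:Cons}, \textsc{E:Tick}, \textsc{E:Fun}, \textsc{E:Prob}) force the empty trace $[]$ and a determined triple, so comparability makes both traces empty and both triples equal. The rules \textsc{E:MatL}, \textsc{E:Share}, and \textsc{E:App} pass the trace through unchanged and recurse on a single strictly smaller subderivation under a fixed environment, so the hypothesis applies directly---this is precisely where induction on derivations rather than on $e$ pays off for \textsc{E:App}. For the flip rules, prefix-comparability of the two nonempty traces forces their heads to agree, hence the same branch and rule are taken, after which the hypothesis on the branch subderivations gives equality before prepending the common outcome. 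I expect the only real work to be setting up the prefix-closure strengthening and the trace-cancellation bookkeeping in \textsc{E:Let}; every other case is inversion followed by a direct appeal to the induction hypothesis.
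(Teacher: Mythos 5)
Your proposal is correct, and it is in fact more careful than the paper's own argument, which consists of the single sentence ``By induction on the structure of expression $e$, we prove the lemma below.'' Read literally, that structural induction has exactly the two holes you identified: in \textsc{E:App} the premise evaluates the body of the closure stored in $V(x_1)$, which is not a subexpression of $\appS{x_1}{x_2}$, so the induction must really be on the derivation (or on some measure that accounts for closure bodies); and in \textsc{E:Let} the plain uniqueness statement gives no way to reconcile two derivations that split the same total trace as $\sigma_1 \mathbin{@} \sigma_2$ and $\sigma_1' \mathbin{@} \sigma_2'$ with $\sigma_1 \neq \sigma_1'$, since the induction hypothesis only speaks about evaluations with \emph{equal} traces. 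Your prefix-closure strengthening---that prefix-comparable valid traces for the same $V$ and $e$ must coincide, together with their results---is precisely the extra invariant (prefix-freeness of the set of valid traces) needed to make the let case close, and your trace-cancellation argument there is sound: $\sigma_1$ and $\sigma_1'$ are both prefixes of the longer total trace, hence comparable, so the inner hypothesis forces $\sigma_1 = \sigma_1'$ and equal intermediate results, after which the $e_2$ subderivations are compared under identical environments. The remaining cases (axioms forcing $[]$, match/share/app passing the trace through, flips disambiguated by the head of a nonempty trace) are handled exactly as you say. In short, what the paper buys with its one-line proof is brevity at the cost of glossing over two genuine obstructions; what your route buys is a proof that actually goes through, at the cost of carrying the stronger prefix-freeness statement through every case.
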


Therefore, for fixed $V$ and $e$, the set of all finite traces induces a ``distribution'' over terminating executions.
We can extract a ``distribution'' $\interp{e}^V_{\Downarrow}$ on values $v$ and costs $q$ as follows:
\[
\interp{e}^V_{\Downarrow}(v,q) \defeq
   \sum_{\sigma} p_\sigma \quad \text{where $\sigma$'s are finite traces satisfying $\evalrel{V}{e}{v}{q}{p_\sigma}{\sigma}$}.
\]
Note that if there are non-terminating executions with non-zero probabilities, the map defined above is a subprobability distribution in the sense that the probabilities do \emph{not} sum up to one.
In other words, the probability that $e$ diverges under environment $V$ is $(1-\sum_{(v,q)} \interp{e}^V_{\Downarrow}(v,q))$.

With this trace-based cost semantics in hand, we can finally define the expected cost of evaluating some terminating expression $e$ with variable bindings given the values of $V$. The expected cost is just the sum of costs $q$ weighted by probability $p$ over all execution traces $\sigma$.
\[
	\sum_{\sigma : \evalrel{V}{e}{v}{q}{p}{\sigma}} p \cdot  q = \sum_{V,v,q} \interp{e}^V_{\Downarrow}(v,q) \cdot q
\]

However, generalizing this definition for non-termination would be nontrivial. As probability is only countably additive, and the set of infinite traces of a non-terminating execution may be uncountable, the above sum could no longer be used. It would appear that a more complicated summation mechanism like integration over a cost density function would be required to deal with such divergence, and we do not deign to develop that here. Instead, to deal with this concern and others, the cost semantics will be revisited in \cref{sec:sound}. There we will do like \cite{ICFP:BLG16} and convert from trace-based to distribution-based semantics.

\section{Type System}
\label{sec:type}

In this section, we develop an AARA type system to carry out expected cost analysis for probabilistic programs.
To focus on the changes that probabilistic choice induces on the type system, we describe its action here in \textit{linear} AARA, where all potential functions are linear in terms of list sizes. In other work, potential functions have been expanded to cover polynomials~\cite{POPL:HAH11,ESOP:HH10} and exponentials~\cite{FoSSaCS:KH20}, but this extension to AARA is orthogonal to probabilistic choice. Indeed, we have carried over the implementation and soundness of probabilistic AARA to support multivariate-polynomial potential functions and user-defined datatypes without problem, which we use to perform analyses in \cref{sec:goat} and beyond. 

\begin{wrapfigure}{r}{6.5cm}\small
\begin{tabular}{@{\hspace{0.1em}}c@{\hspace{0.5em}}c@{\hspace{0.5em}}l@{\hspace{0.5em}}l@{\hspace{0.5em}}l@{\hspace{0.1em}}}
  & & \textbf{Abstract} & \textbf{Concrete} \\ \hline
  $\tau$ & $\Coloneqq$ & $\unitA$ & $\unitT$ & nullary product \\
  & & $\listA{A}$ & $\listT{\tau}{q}$ & list \\
  & & $\arrA{A}{B}$ & $\arrT{A}{B}$ & arrow \\
  & & $\probA{q_H}{q_T}$ & $\probT[q_H]{q_T}$ & probability \\
  $A,B$ & $\Coloneqq$ & $\annoA{\tau}{q}$ & $\annoT{\tau}{q}$ & potential
\end{tabular}
\caption{Syntax of the type system\label{fig:types}}
\end{wrapfigure}

\paragraph{Types and potentials}
\cref{fig:types} presents the types that are supported in linear AARA. Aside from usual types like the nullary $\unitT$ and binary product $\prodT{\tau_1}{\tau_2}$, there are three special types that have potential-related components. The first is the potential pairing $\annoT{\tau}{q}$, which represents storing a constant $q \in \bbQ_{\ge 0}$ units of potential alongside a value of type $\tau$. The second is the list type $\listT{\tau}{q}$---a compact representation of $\listA{\annoT{\tau}{q}}$---which represents a list with $q \in \bbQ_{\ge 0}$ units of potential per element. The combination is sufficient to express potential functions that are linear combinations of input list lengths and constants.
The last is the probability type $\probA{q_H}{q_T}$. As introduced in \cref{sec:overview}, it represents $q_H$ units of potential for head cases and $q_T$ units for tail cases after a coin flip.

Formally, the \emph{potential function} $\pot{\cdot}{\tau}$ or $\pot{\cdot}{A}$, which maps values of type $\tau$ or $A$ to non-negative rational numbers, is defined as follows.  
\begin{align*}
  \pot{\unitV}{\unitA}  & \defeq 0, & \pot{v}{\annoA{\tau}{q}} & \defeq \pot{v}{\tau} + q, \\
  \pot{\nilV}{\listA{A}} & \defeq 0, & \pot{\consV{v_1}{v_2}}{\listA{A}} & \defeq \pot{v_1}{A} + \pot{v_2}{\listA{A}}, \\
  \pot{\cloV{V}{f}{x}{e}}{\arrA{A}{B}} & \defeq 0, & \pot{\probV{p}}{\probA{q_H}{q_T}} & \defeq q_H \cdot p + q_T \cdot (1-p).
\end{align*}

From the inductive definition above, we can derive the following closed form for the potential of a list $\ell = [v_1,\cdots,v_n]$ with respect to a type $\listT{\tau}{q}$, which is linear in the length of the list $\ell$.
\begin{equation*}
  \pot{\ell}{\listT{\tau}{q}} =  q \cdot n + \sum_{i=1}^n \pot{v_i}{\tau}.
\end{equation*}

Note that these definitions leave potential as a function of both type and value. Different values of the same type may differ in their total potential. For instance, in the case of lists, one term in the above closed form for potential depends on the length of the list, so lists of differing lengths but the same type may differ in total potential. 

\paragraph{Static semantics}
The typing judgment for linear AARA the form $\typingrel{\Gamma}{q}{e}{A}$, the intuitive meaning of which is that the potential given by $\Gamma$ and $q$ is sufficient to cover the \emph{expected} evaluation cost of $e$ and the \emph{expected} potential of the evaluation result with respect to $A$.

As existing AARA type systems, our typing rules form an \emph{affine} linear type system, which ensures that every variable is used \emph{at most} once~\cite{kn:Walker02}.
\cref{fig:typing-rules} lists the typing rules.
It turns out that most of the rules coincide with those of non-probabilistic linear AARA systems.
This fact indicates that our type system is a conservative extension of non-probabilistic AARA for monotonic resources, and our type system is able to derive worst-case cost bounds for deterministic programs.

To understand the new rule \textsc{L:Flip} for probabilistic branching, consider the expression $\flipS{p}{e_1}{e_2}$, where $e_1$ requires $\Phi_1$ units of potential and $e_2$ requires $\Phi_2$. The evaluation of the flip expression should expect to require a weighted average of $\Phi_1$ and $\Phi_2$, specifically $p \cdot \Phi_1 + (1-p) \cdot \Phi_2$. This should be paid out of the typing context $\Gamma$ and constant potential $q$, both of which are shared between branches. The distribution of this sharing is expressed using a \emph{sharing relation} $\share{\tau}{\tau_1}{\tau_2}$, which apportions the potential indicated by $\tau$ into two parts to be associated with $\tau_1$ and $\tau_2$, alongside a \emph{potential-scaling} operation. We formally define these relations and prove they capture the correct intuition with \cref{lemma:split,lemma:scale}, but first we explain why the rule does \textit{not} also perform expected value calculations for the type $A$.

One might think that a similar weighted average could be used to combine the types of $e_1$ and $e_2$ to get \textit{expected} type $A$, rather than require both expressions have type $A$ exactly. Perhaps equally likely types $\listT{\tau}{3}$ and $\listT{\tau}{1}$ could convert to expected type $\listT{\tau}{2}$. However, the value produced in each branch might differ, and for lists of type $\listT{\tau}{q}$ total potential is a scalar $q$ of length; taking the expected value of the scalars without accounting for length does not succeed in finding the expected potential. Thus, the rule \textsc{L:Flip} cannot be made more permissive in that manner.

Nonetheless, note that the same return type for both branches in \textsc{L:Flip} still can leave differing potential after each branch, which is necessary for expected cost reasoning. For example, consider the following program where the function \code{append} requires 1 unit of potential per element in its first argument. 

\begin{lstlisting}
append (flip 0.5 | H ->  [1;2;3] | T ->  [0]) [5;6]
\end{lstlisting}

The return types of the two branches of the flip expression are the same ($\listT{\intT}{2}$), but the actual potential in the results is different: the heads branch returns a list with 3 units of potential, and the tails branch returns 1 unit. This shows that the analysis properly composes and correctly reasons that expected cost of the program is 2. This also works for symbolic lists and can derive the bound $|x|+|y|$ for the function 

\begin{lstlisting} 
fun x y -> append (flip 0.5 | H -> y | T -> (append x y)) [] 
\end{lstlisting}

Now we formalize the sharing and scaling relations.
The sharing relation for types is defined as follows. Note that the sharing relation is also used in \textsc{L:Share} to make ``copies'' of a variable, while ensuring that the total potential over copies is preserved.
\begin{mathpar}\footnotesize
  \Rule{Sh:Unit}{ }{ \share{\unitA}{\unitA}{\unitA} }
  \and
  \Rule{Sh:List}{ \share{A}{A_1}{A_2} }{ \share{\listA{A}}{\listA{A_1}}{\listA{A_2}} }
  \and
  \Rule{Sh:Arrow}{ }{ \share{\arrA{A}{B}}{\arrA{A}{B}}{\arrA{A}{B}} }
  \and
  \Rule{Sh:Prob}{ q_H = q_H^{(1)} + q_H^{(2)} \\ q_T = q_T^{(1)} + q_T^{(2)} }{ \share{ \probA{q_H}{q_T} }{ \probA{q_H^{(1)}}{q_T^{(1)}} }{ \probA{q_H^{(2)}}{q_T^{(2)}} } }
  \and
  \Rule{Sh:Pot}{ q=q_1+q_2 \\ \share{\tau}{\tau_1}{\tau_2} }{ \share{\annoA{\tau}{q}}{ \annoA{\tau_1}{q_1} }{ \annoA{\tau_2}{q_2} } }
\end{mathpar}
We extend the sharing relation to typing contexts, as it has previously only been used on a per-type basis. This splits the potential across all types in $\Gamma$ across 2 new contexts of the same base types.  
\begin{mathpar}\footnotesize
  \Rule{Sh:Empty}{ }{ \share{\cdot}{\cdot}{\cdot} }
  \and
  \Rule{Sh:Extend}{ \share{\Gamma}{\Gamma_1}{\Gamma_2} \\ \share{\tau}{\tau_1}{\tau_2} }{ \share{\Gamma,x:\tau}{\Gamma_1,x:\tau_1}{\Gamma_2,x:\tau_2} }
\end{mathpar}

Potential-scaling can be defined syntactically as follows.
Intuitively, $p \times \tau$ (resp., $p \times A$) produces a type with as much potential as that of the original type $\tau$ (resp., $A$) scaled by the factor $p$.
\begin{align*}
  p \times \unitA & \defeq \unitA, & p \times \annoA{\tau}{q} & \defeq \annoA{p \times \tau}{p \cdot q}, \\
  p \times \listA{A} & \defeq \listA{p \times A}, & p \times \arrA{A}{B} & = \arrA{A}{B}, \\
  p \times \probA{q_H}{q_T} & \defeq \probA{p \cdot q_H}{p \cdot q_T}. & 
\end{align*}
Also, we extend the scaling operation to typing contexts.
\begin{align*}
  p \times (\cdot) & \defeq \cdot, & p \times (\Gamma,x:\tau) & \defeq p \times \Gamma, x : (p \times \tau)  .
\end{align*}


By induction on the structure of value $v$, we prove the following lemmas that ensure the sharing and scaling relations are consistent with their intuitive meaning.

\begin{lemma}
\label{lemma:split}
    For any value $v$ of type $\tau$ (resp., $A$), if $\share{\tau}{\tau_1}{\tau_2}$ (resp., $\share{A}{A_1}{A_2}$), then $\pot{v}{\tau} = \pot{v}{\tau_1} + \pot{v}{\tau_2}$ (resp., $\pot{v}{A} = \pot{v}{A_1}+\pot{v}{A_2}$) .
\end{lemma}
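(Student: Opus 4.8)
The plan is to prove both halves of the statement—the $\tau$ version and the $A$ version—simultaneously by structural induction on the value $v$, inverting the sharing derivation in each case to read off the shapes of $\tau_1,\tau_2$ (resp.\ $A_1,A_2$). Because the potential function $\pot{\cdot}{\cdot}$ is defined only on compatible value/type pairs, each value form $v$ fixes the top-level constructor of its type, and each such constructor is produced by exactly one sharing rule; so the case split on $v$ simultaneously fixes the last rule of the sharing judgment.

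For the $\tau$ version, the constant cases are immediate. If $v$ is $\unitV$, $\cloV{V}{f}{x}{e}$, or $\nilV$, then (using \textsc{Sh:Unit}, \textsc{Sh:Arrow}, \textsc{Sh:List}) the potential is $0$ under each of $\tau,\tau_1,\tau_2$, so the goal is just $0 = 0 + 0$. If $v = \probV{p}$, inversion on \textsc{Sh:Prob} gives $q_H = q_H^{(1)} + q_H^{(2)}$ and $q_T = q_T^{(1)} + q_T^{(2)}$; substituting into $q_H \cdot p + q_T \cdot (1-p)$ and regrouping by superscript yields the two summands exactly. The only inductive case is $v = \consV{v_1}{v_2}$ with $\tau = \listA{A}$: inversion on \textsc{Sh:List} yields $\share{A}{A_1}{A_2}$, so I apply the $A$ version of the induction hypothesis to the smaller value $v_1$ and the $\tau$ version (at list type) to the smaller value $v_2$, then commute the four resulting summands into the two halves $\pot{\consV{v_1}{v_2}}{\listA{A_1}}$ and $\pot{\consV{v_1}{v_2}}{\listA{A_2}}$.

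For the $A$ version, every $A$ has the form $\annoA{\tau}{q}$, and the only applicable rule is \textsc{Sh:Pot}, which supplies $q = q_1 + q_2$ together with $\share{\tau}{\tau_1}{\tau_2}$. Unfolding $\pot{v}{\annoA{\tau}{q}} = \pot{v}{\tau} + q$ and invoking the $\tau$ version on the \emph{same} value $v$ splits $\pot{v}{\tau}$ as $\pot{v}{\tau_1} + \pot{v}{\tau_2}$, after which regrouping gives $(\pot{v}{\tau_1} + q_1) + (\pot{v}{\tau_2} + q_2) = \pot{v}{A_1} + \pot{v}{A_2}$. The one point that needs care is the well-foundedness of this appeal, since the $A$ version calls the $\tau$ version on the same $v$ while the $\tau$ version recurses only on strictly smaller values. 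I would justify it by ordering the two claims lexicographically—value size first, then ``$\tau$ before $A$''—so that no circularity arises; equivalently, stripping an $A$ to its underlying $\tau$ removes exactly one annotation and bottoms out after a single step. Beyond this bookkeeping the argument is pure arithmetic regrouping, so I do not anticipate a genuine obstacle.
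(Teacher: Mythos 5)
Your proof is correct and follows the same route the paper takes: the paper discharges this lemma with a one-line appeal to induction on the structure of the value $v$, and your case analysis (with inversion on the sharing judgment) is exactly the elaboration of that argument. Your lexicographic bookkeeping for the mutual $\tau$/$A$ dependency is a sound way to make the well-foundedness explicit, since the $A$-case only strips one annotation before returning to the $\tau$-case on the same value.
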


\begin{lemma}\label{lemma:scale}
For any probability $p$ and value $v$ of type $\tau$ (resp., $A$),
  $\pot{v}{p \times \tau} = p \cdot \pot{v}{\tau}$ (resp., $\pot{v}{p \times A} = p \cdot \pot{v}{A}$).
\end{lemma}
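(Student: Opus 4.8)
The plan is to prove both halves of the statement (the one for bare types $\tau$ and the one for potential-annotated types $A$) simultaneously by structural induction on the value $v$, directly unfolding the definitions of the scaling operation $p \times (\cdot)$ and of the potential function $\pot{\cdot}{\cdot}$. Since every $A$ has the form $\annoA{\tau}{q}$, the $A$-claim reduces immediately to the $\tau$-claim: from $p \times A = \annoA{p \times \tau}{p\cdot q}$ we get $\pot{v}{p\times A} = \pot{v}{p\times\tau} + p\cdot q$, which by the $\tau$-claim equals $p\cdot\pot{v}{\tau} + p\cdot q = p\cdot(\pot{v}{\tau}+q) = p\cdot\pot{v}{A}$. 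So the real work lies entirely in the $\tau$-claim.

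For the $\tau$-claim I would case split on the shape of $\tau$, which is matched to the shape of $v$. The base cases $\unitA$, $\arrA{A}{B}$, and $\nilV$ at type $\listA{A}$ are immediate: scaling fixes $\unitA$ and $\arrA{A}{B}$, and in all three cases the potential is $0$, so both sides equal $0 = p\cdot 0$. The probability case is a one-line computation using that the defining formula is linear in $(q_H,q_T)$: $\pot{\probV{p'}}{p\times\probA{q_H}{q_T}} = (p\,q_H)p' + (p\,q_T)(1-p') = p\bigl(q_H p' + q_T(1-p')\bigr) = p\cdot\pot{\probV{p'}}{\probA{q_H}{q_T}}$.

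The only genuinely inductive case is the nonempty list $v = \consV{v_1}{v_2}$ at type $\listA{A}$. Here I use $p\times\listA{A} = \listA{p\times A}$ and unfold $\pot{\consV{v_1}{v_2}}{\listA{p\times A}} = \pot{v_1}{p\times A} + \pot{v_2}{\listA{p\times A}}$. Applying the induction hypothesis to the structurally smaller head $v_1$ at type $A$ and to the tail $v_2$ at type $\listA{A}$ (noting $\listA{p\times A} = p\times\listA{A}$, so the tail invocation is at a scaled list type) yields $p\cdot\pot{v_1}{A} + p\cdot\pot{v_2}{\listA{A}} = p\cdot\pot{\consV{v_1}{v_2}}{\listA{A}}$, as required.

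I do not expect a serious obstacle: the statement is really a consequence of the scaling and potential definitions being homogeneous of degree one in the potential annotations. The only point requiring care is the bookkeeping of the simultaneous induction — ensuring the list case invokes the hypothesis at the correct scaled types for both head and tail, and keeping the recursion well-founded (the $A$-claim peels off one annotation layer without shrinking $v$, while each recursive call of the $\tau$-claim strictly shrinks $v$). This is directly analogous to, and slightly simpler than, the additive bookkeeping already needed for \cref{lemma:split}.
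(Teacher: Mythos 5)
Your proof is correct and follows essentially the same route as the paper, which proves \cref{lemma:scale} by induction on the structure of the value $v$, unfolding the definitions of the scaling operation and the potential function. Your additional care about the mutual $\tau$/$A$ bookkeeping and the well-foundedness of the induction is sound but does not change the approach.
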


We now discuss the rule \textsc{L:Prob} and \textsc{L:FlipS} for the new probability type $\probT[q_H]{q_T}$.
To type a probability encapsulation $\probS{p}$, we need $\pot{\probV{p}}{\probT[q_H]{q_T}} = p \cdot q_H + (1-p) \cdot q_T$ units of potential in the context to cover its expected value.
Then to type an expression $\sflipS{x}{e_1}{e_2}$ that flips a variable $x$ with type $\probT[q_H]{q_T}$, one might want to use the potential-scaling operation as the rule \textsc{L:Flip} does.
However, the probability $x$ here is \emph{symbolic}, thus we cannot define the scaling operation in linear AARA.\footnote{
In our implementation of pRaML, we use multivariate polynomial AARA to support symbolic scaling, which unifies the two distinct flip operations presented here. We also incorporate the ability to multiply and complement symbolic probabilities. 
}
The rule \textsc{L:FlipS} avoids the problem by forcing $e_1$ and $e_2$ to be typed under the same context, appealing to the equality $\Phi = x \cdot \Phi + (1-x) \cdot \Phi$ for any $x$ and $\Phi$.
Note that it assigns $q_H$ units of potential to type $e_1$, and $q_T$ units to type $e_2$; this assignment is sound because we pay $x \cdot q_H + (1-x) \cdot q_T$ to create $x : \probT[q_H]{q_T}$.

Finally, we briefly explain other typing rules.
In the rule \textsc{L:Cons}, we have to provide potential $p$ to account for the potential of the new list element. Conversely, the potential of the head $x_1$ of the list $x : \listT{\tau}{p}$ becomes available in the cons branch of the pattern match in the rule \textsc{L:MatL}. As a result, we have constant potential $p + q$ available when typing $e_1$.
In the rule \textsc{L:App}, we require that we have the exact potential annotations ($x_2:\tau$ and $q$) that are required by the argument. The resulting potential is given by the result type $B$.
In the rule \textsc{L:Fun} for (recursive) function abstraction, we
require that the potential of the variables captured in the context
$\Gamma$ is zero. We write $\zero{\Gamma}$ for the context $\Gamma$ in
which every potential annotation $q$ is replaced by $0$. This is
formally defined below. The reason for this requirement is that we
allow functions to be used an arbitrary number of times (recall the
definition of sharing).
If $\Gamma$ would carry potential then we could use
this potential multiple times to account for cost, which is not sound. 
Since functions do not carry potential, we do not have to restrict the
type of the recursively defined function $f$ in a similar way.
An alternative
would be to remove the premise $\isZero{\Gamma}$ and to treat
functions in an affine way.
\begin{align*}
  \zero{\unitA} & \defeq \unitA, &  \zero{\annoA{\tau}{q}} & \defeq \annoA{\zero{\tau}}{0}, \\
  \zero{\listA{A}} & \defeq \listA{\zero{A}}, & \zero{\arrA{A}{ B}} & \defeq \arrA{A}{ B},\\
  \zero{\probA{q_H}{q_T}} & \defeq \probA{0}{0}. &
\end{align*}
Note that for function types, we do not have to recursively eliminate potential with $|\cdot |$
since the potential of a function is already $0$. 
The definition is then lifted point-wise to annotated contexts $\Gamma$.
\begin{align*}
  \zero{\cdot} & \defeq \cdot, & \zero{\Gamma, x:\tau} & \defeq  \zero{\Gamma}, x:\zero{\tau}.
\end{align*}

The structural rules \textsc{L:Sub}, \textsc{L:Sup}, \textsc{L:Weak}, and \textsc{L:Relax} can be applied to every expression.
The weakening rule \textsc{L:Weak} is standard. However, there is
another form of weakening: The rule \textsc{L:Relax}, states that,
given a judgment $\Gamma; p \vdash e : \annoT{\tau}{p'}$, we can also
have more potential $q$ in the context and give up some of the
potential $p'$. Additionally, the rule also covers the case in which
we pass through additional potential $c \geq 0$ yielding the judgment
$\Gamma; p+c \vdash e : \annoT{\tau}{p'+c}$.
The subtyping rules \textsc{L:Sub} and \textsc{L:Sup} enable
us to relax the potential requirements for potential in data
structures in the same way as \textsc{T:Relax} does for constant
potential. The subtyping relation for types is defined by the
following rules.
  \begin{mathpar}\footnotesize
    \inferrule
    {
    }
    { \unitA \subtype \unitA }
    \and
    \inferrule
    { A \subtype B }
    { \listA{A} \subtype \listA{B} }
    \\
    \inferrule
    { A_2 \subtype A_1 \\ B_1 \subtype B_2
    }
    { {\arrA{A_1}{ B_1}} \subtype {\arrA{A_2}{ B_2}} }
    \and
    \inferrule
    { q_H^{(1)} \ge q_H^{(2)} \\ q_T^{(1)} \ge q_T^{(2)} }
    { \probA{q_H^{(1)}}{q_T^{(1)}} \subtype \probA{q_H^{(2)}}{q_T^{(2)}} }
    \and
    \inferrule
    { q_1 \geq q_2
    \\ \tau_1 \subtype \tau_2
    }
    { \annoA{\tau_1}{q_1} \subtype \annoA{\tau_2}{q_2}}
\end{mathpar}

By induction on the structure of value $v$ followed by inversion on the subtyping judgment, we prove the following lemma.
\begin{lemma}
  If $\tau \subtype \tau'$ then
  $\Phi(v:\tau') \leq \Phi(v:\tau)$ for any value $v$ of type $\tau$.
\end{lemma}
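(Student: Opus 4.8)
The plan is to prove the statement by structural induction on the value $v$, using inversion on the subtyping derivation in each case. Since the list type $\listA{A}$ nests an \emph{annotated} type $A$, an induction over base types $\tau$ alone does not close; I would therefore strengthen the statement to the simultaneous claim that $\tau \subtype \tau'$ implies $\Phi(v:\tau') \le \Phi(v:\tau)$ \emph{and} $A \subtype A'$ implies $\Phi(v:A') \le \Phi(v:A)$, matching the ``(resp., $A$)'' pattern already used in \cref{lemma:split,lemma:scale}.

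First I would dispatch the cases in which both sides carry zero potential. If $v = \unitV$, inversion forces $\tau' = \unitA$ and both potentials are $0$. If $v = \cloV{V}{f}{x}{e}$, then $\tau = \arrA{A}{B}$ and inversion gives $\tau' = \arrA{A'}{B'}$; since function closures have potential $0$ regardless of the arrow type, both sides are $0$ and the contravariant/covariant premises on $A,B$ are not even needed. The empty list $v = \nilV$ is analogous, with $\tau' = \listA{A'}$ and both potentials $0$.

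The only quantitative base case is $v = \probV{p}$. Here $\tau = \probA{q_H}{q_T}$, and inversion yields $\tau' = \probA{q_H'}{q_T'}$ with $q_H \ge q_H'$ and $q_T \ge q_T'$. The two potentials are $q_H' p + q_T'(1-p)$ and $q_H p + q_T(1-p)$, so the desired inequality follows by multiplying the two coefficient inequalities by the nonnegative weights $p$ and $1-p$ and adding. This is the one place where the constraint $0 \le p \le 1$ on probability values is essential to preserve the direction of the inequality.

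For the inductive cases, the cons case $v = \consV{v_1}{v_2}$ is the heart of the argument. Inversion on $\listA{A} \subtype \listA{A'}$ gives $A \subtype A'$, and the defining equation $\Phi(\consV{v_1}{v_2}:\listA{B}) = \Phi(v_1:B) + \Phi(v_2:\listA{B})$ expands both sides. I would apply the induction hypothesis to the head $v_1$ at the annotated types $A \subtype A'$ and, independently, to the tail $v_2$ at the list types $\listA{A} \subtype \listA{A'}$ (both subterms of $v$), then add the resulting inequalities. The annotated case $A = \annoA{\tau_1}{q_1} \subtype \annoA{\tau_2}{q_2}$ closes the mutual recursion: inversion gives $q_1 \ge q_2$ and $\tau_1 \subtype \tau_2$, so combining the induction hypothesis $\Phi(v:\tau_2) \le \Phi(v:\tau_1)$ with $q_2 \le q_1$ yields $\Phi(v:\tau_2)+q_2 \le \Phi(v:\tau_1)+q_1$. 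I expect no real obstacle beyond bookkeeping; the only genuine subtlety is recognizing up front that the induction must range \emph{simultaneously} over $\tau$ and $A$, so that the head subterm (typed by an annotated type) and the tail subterm (typed by a list type) are both covered by the hypothesis in the cons case.
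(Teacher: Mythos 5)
Your proposal is correct and follows essentially the same route as the paper: structural induction on the value $v$ with inversion on the subtyping derivation, the paper leaving exactly the bookkeeping you spell out (the simultaneous treatment of base and annotated types, the zero-potential cases, the coefficient argument for the probability type, and the head/tail split in the cons case) implicit in its one-line proof sketch.
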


\begin{figure}\footnotesize
\fbox{$\typingrel{\Gamma}{q}{e}{A}$ \quad ``in context $\Gamma$ with constant potential $q$, expression $e$ has potential-annotated type $A$''}
\begin{mathpar}
    \Rule{L:Var}
    { 
    }
    {x : \tau;0 \vdash x : \annoT{\tau}{0}}

    \Rule{L:Unit}
    { 
    }
    {\cdot;0 \vdash \unitS : \annoT{\unitA}{0}}
    
    \Rule{L:Nil}
    {
    }
    { \cdot; 0 \vdash \nilS{} : \annoT{\listA{A}}{0}
    }

    \Rule{L:Cons}
    { A = \annoT{\tau}{p}
    }
    { x_1: \tau, x_2: \listA{A} ; p \vdash \consS{x_1}{x_2} : \annoT{\listA{A}}{0}
    }

    \Rule{L:MatL}
    { A = \annoT{\tau}{p}
    \\ \Gamma; q \vdash e_0 : B
    \\ \Gamma,x_1:\tau,x_2:\listA{A}; q+p \vdash e_1 : B
    }
    { \Gamma,x:\listA{A}; q \vdash \matLS{e_0}{x_1}{x_2}{e_1}{x} : B
    }
    
    \Rule{L:Tick}
    {
    }
    { \cdot; q \vdash \tickS{q} : \annoT{\unitA}{0}
    }
    
    \Rule{L:Let}
    { \Gamma_1; q \vdash e_1 : \annoT{\tau}{p}
    \\ \Gamma_2, x:\tau ; p \vdash e_2 : B
    }
    { \Gamma_1,\Gamma_2; q \vdash \letS{e_1}{x}{e_2} : B
    }

    \Rule{L:App}
    {  
    A = \annoT{\tau}{q}
    }
    { x_1: \arrA{ A }{ B }, x_2:\tau ;q \vdash \appS{x_1}{x_2} : B}

    \Rule{L:Fun}
    { A = \annoT{\tau}{q}
    \\ \isZero{\Gamma}
    \\\\ \Gamma, f: \arrA{A}{B}, x:\tau; q \vdash e : B
    }
    { \Gamma;0 \vdash \funS{f}{x}{e} : \annoT{\arrA{A}{B}}{0}
    }

    \Rule{L:Share}
    { \share{\tau}{\tau_1}{\tau_2}
    \\ \Gamma,x_1:\tau_1,x_2:\tau_2; q \vdash e : B
    }
    {\Gamma, x:\tau; q \vdash \shareS{x}{x_1}{x_2}{e} : B}
    
    \Rule{L:Flip}{ \share{\Gamma}{p \times \Gamma_1}{(1-p) \times \Gamma_2} \\ q = p \cdot q_1 + (1-p) \cdot q_2 \\ \typingrel{\Gamma_1}{q_1}{e_1}{A} \\ \typingrel{\Gamma_2}{q_2}{e_2}{A} }{ \typingrel{\Gamma}{q}{\flipS{p}{e_1}{e_2}}{A} }
    
    \Rule{L:Prob}
    { q = p \cdot q_H + (1-p) \cdot q_T
    }
    { \typingrel{\cdot}{ q }{\probS{p}}{\annoT{\probA{q_H}{q_T}}{0}} }
    
    \Rule{L:FlipS}
    { \typingrel{\Gamma}{q+q_H}{e_1}{A} \\
    \typingrel{\Gamma}{q+q_T}{e_2}{A}
    }
    { \typingrel{\Gamma,x:\probA{q_H}{q_T}}{q}{\sflipS{x}{e_1}{e_2}}{A} }
    
    \\
    
    \Rule{L:Sub}
    {\Gamma;q \vdash e : \annoT{\tau'}{q'}
    \\ \tau' \subtype \tau
    }
    {\Gamma;q \vdash e : \annoT{\tau}{q'} }

    \Rule{L:Sup}
    {\Gamma,x:\tau; q \vdash e : B
    \\ \tau' \subtype \tau
    }
    {\Gamma,x:\tau'; q \vdash e : B}
    
    \\

    \Rule{L:Weak}
    {\Gamma; q \vdash e : B
    }
    {\Gamma,x:\tau; q \vdash e : B}

    \Rule{L:Relax}
    { \Gamma; p \vdash e : \annoT{\tau}{p'}
    \\ q \geq p
    \\ q - q' \geq p - p'
    }
    {\Gamma; q \vdash e : \annoT{\tau}{q'}}
\end{mathpar}
\caption{Typing rules\label{fig:typing-rules}}
\end{figure}

\paragraph{Example}
To illustrate the type system in action, we apply it to a random-walk program in concrete syntax below.
Consider the function \code{brdwalk} which performs a biased random walk over the length of its input list, stopping whenever the list is empty. With $\frac 3 4$ probability of shrinking the list and $\frac 1 4$ of growing it, we expect the list length to shrink by $\frac 1 2$ per iteration of the walk. Thus, we expect a stopping time of twice the input list's length.
\begin{eqnarray*}
  \code{brdwalk} & \equiv & \word{fun } f \ell = \\
  & & \word{case} \ell\;\{ \nilC \hookrightarrow \unitC \\
  & & \hphantom{\word{case} \ell\;\{} \consC{\_}{x_2} \hookrightarrow \word{let } \_ = \tickC{1} \word{ in }  \word{flip} \sfrac{3}{4}\;\{  \word{H} \hookrightarrow \appC{f}{x_2} \mid \word{T} \hookrightarrow \appC{f}{\consC{\unitC}{\consC{\unitC}{x_2}}}  \}  \} 
\end{eqnarray*}
We now derive the type $\arrT{\annoT{\listT{\unitT}{2}}{0}}{\annoT{\unitT}{0}}$
for \code{brdwalk}, which indicates using twice the input list's length for initial potential. This amount of potential provides an upper bound on expected stopping time which happens to be exact.

In the \emph{tails} case, we need four units of extra constant potential to construct the argument list.
\begin{mathpar}\footnotesize
  \Rule{L:App}{ \Rule{L:Cons}{ \Rule{L:Cons}{ \Rule{L:Var}{ }{ \typingrel{f:\arrT{\annoT{\listT{\unitT}{2}}{0}}{\annoT{\unitT}{0}},x_2:\listT{\unitT}{2}}{0}{x_2}{\annoT{\listT{\unitT}{2}}{0}} } }{ \typingrel{f:\arrT{\annoT{\listT{\unitT}{2}}{0}}{\annoT{\unitT}{0}},x_2:\listT{\unitT}{2} }{2}{\consC{\unitC}{x_2}}{\annoT{\listT{\unitT}{2}}{0}} } }{ \typingrel{f:\arrT{\annoT{\listT{\unitT}{2}}{0}}{\annoT{\unitT}{0}},x_2:\listT{\unitT}{2}}{4}{\consC{\unitC}{\consC{\unitC}{x_2}}}{\annoT{\listT{\unitT}{2}}{0}} } }{ \typingrel{f:\arrT{\annoT{\listT{\unitT}{2}}{0}}{\annoT{\unitT}{0}},x_2:\listT{\unitT}{2}}{4}{\appC{f}{\consC{\unitC}{\consC{\unitC}{x_2}}}}{\annoT{\unitT}{0}} }
\end{mathpar}
Otherwise, if the coin flip shows \emph{heads}, the type derivation goes as follows:
\begin{mathpar}\footnotesize
  \Rule{L:App}{ }{ \typingrel{f:\arrT{\annoT{\listT{\unitT}{2}}{0}}{\annoT{\unitT}{0}},x_2:\listT{\unitT}{2}}{0}{\appC{f}{x_2}}{\annoT{\unitT}{0}} }
\end{mathpar}
Via the definition of potential scaling, we find that
\[
\share{\listT{\unitT}{2}}{\sfrac{3}{4} \times \listT{\unitT}{2} }{ (1-\sfrac{3}{4}) \times \listT{\unitT}{2}} \quad \text{and} \quad 1 = \sfrac{3}{4} \cdot 0 + \sfrac{1}{4} \cdot 4.
\]
Then we apply the rule \textsc{L:Flip}, deriving the desired type $\arrT{\annoT{\listT{\unitT}{2}}{0}}{\annoT{\unitT}{0}}$ for \code{brdwalk}.
\begin{mathpar}\footnotesize
  \Rule{L:Flip}{ \typingrel{f:\arrT{\annoT{\listT{\unitT}{2}}{0}}{\annoT{\unitT}{0}}, x_2:\listT{\unitT}{2} }{0}{\appC{f}{x_2}}{\annoT{\unitT}{0}}  \\
  \typingrel{f:\arrT{\annoT{\listT{\unitT}{2}}{0}}{\annoT{\unitT}{0}},x_2:\listT{\unitT}{2}}{4}{\appC{f}{\consC{\unitC}{\consC{\unitC}{x_2}}}}{\annoT{\unitT}{0}}
   }{ \typingrel{f:\arrT{\annoT{\listT{\unitT}{2}}{0}}{\annoT{\unitT}{0}}, x_2:\listT{\unitT}{2} }{1}{\flipC{\sfrac{3}{4}}{\appC{f}{x_2}}{\appC{f}{\consC{\unitC}{\consC{\unitC}{x_2}}}}}{\annoT{\unitT}{0}} }
\end{mathpar}

\section{Soundness}
\label{sec:sound}

In this section, we formalize our intuition that our type system derives expected cost bounds and sketch a soundness proof (\cref{the:soundness}).
We also study nontrivial non-termination behavior of probabilistic programs, and prove a stronger result (\cref{the:soundness:improved}) which implies that derived expected bounds on resources like time imply that the analyzed program terminates with probability one (\cref{cor:ast}).
Proofs are included in \cref{sec:appendix:proofs}.

\paragraph{Values}
Before we can state the theorem, we need to properly extend the definition of potential to typing contexts and evaluation environments.
We introduce a type judgment $\valuerel{v}{\tau}$ (or $\valuerel{v}{A}$) for values, which is defined in \cref{fig:value-typing}.
This relation ignores potential annotations and checks only the values are well-typed.
An evaluation environment $V$ is said to have type context $\Gamma$, written $V : \Gamma$, if for all $x$ bound in $\Gamma$, we have $\valuerel{V(x)}{\Gamma(x)}$.
The most interesting rule is the
rule \textsc{V:Fun} for function closures. It uses the type
rule \textsc{L:Fun} for expressions and existentially quantifies over the context $\Gamma$. This rule ensures that we only consider functions that are well-formed with respect to the type system, which is necessary to prove the soundness of the analysis.

\begin{figure}\footnotesize
\fbox{$\valuerel{v}{\tau}$ (or $\valuerel{v}{A}$) \quad ``value $v$ has type $\tau$ or $A$''}
\begin{mathpar}
 \Rule{V:Unit}
 { }
 { \unitV : \unitA}
 
 \Rule{V:Prob}
 { }
 { \probV{p} : \probA{q_H}{q_T} }

 \Rule{V:Nil}
 { }
 { \nilV : \listA{A} }

 \Rule{V:Cons}
 { v_1 : A 
 \\ v_2 : \listA{A}
 }
 { \consV{v_1}{v_2} : \listA{A} }

 \Rule{V:Anno}
 { v: \tau }
 { v : \annoA{\tau}{q} }

 \Rule{V:Fun}
 {A = \annoA{\tau}{q}
    \\ V : \Gamma
    \\\\ \zero{\Gamma}, f: \arrA{A }{ B}, x:\tau; q \vdash e : B
 }
 { \cloV{V}{f}{x}{e} : \arrA{A }{ B} }
  \end{mathpar}
  \caption{Typing rules for values\label{fig:value-typing}}  
\end{figure}

Let $V : \Gamma$. We define the potential of $V$ with respect to $\Gamma$ as follows.
\begin{equation*}
  \pot{V}{\Gamma} \defeq \sum_{x \in \mathrm{dom}(\Gamma)} \pot{V(x)}{\Gamma(x)}.
\end{equation*}

\paragraph{A first attempt} With the trace-based evaluation dynamics, we might state the soundness theorem for probabilistic programs as follows. Intuitively, it says that the initial potential is sufficient to pay for the expected evaluation cost and the typing of the result. 

  Let $\typingrel{\Gamma}{q}{e}{A}$ and $\valuerel{V}{\Gamma}$.
  Then
  \[
  \pot{V }{ \Gamma} + q \ge \sum_{\sigma_0 : \evalrel{V}{e}{v_0}{q_0}{p_0}{\sigma_0}} p_0 \cdot  ( \pot{v_0 }{ A} + q_0).
  \]
  Note that the summation is taken over traces $\sigma_0$, and by \cref{lem:uniqueness}, the tuple $(p_0,v_0,q_0)$ is uniquely determined by $V$, $e$, and $\sigma_0$.
However, it is unclear how to prove the theorem by induction on the evaluation judgment.
The reason is that we now have to deal with a collection of evaluation judgments, instead of one.
Intuitively, the trace-based evaluation dynamics talks about \emph{individual} executions, while the goal of our resource analysis for probabilistic programs is to reason about \emph{aggregated} information over all possible executions.
We therefore develop another evaluation dynamics that deals with \emph{distributions} of executions more directly, and show that it agrees with our previous semantics.

First we illustrate why a naive approach here will not work. One might start with a new judgment $\steprel{V}{e}{\mu}{}$ where $\mu$ is a distribution over pairs $(v,q)$, $v$ is the evaluation result, and $q$ is the net cost.
Then one might use the following rule for composition under probabilistic branching.
\[
\Rule{Bad:Flip}{ \steprel{V}{e_1}{\mu_1}{} \\ \steprel{V}{e_2}{\mu_2}{} }{ \steprel{V}{\flipS{p}{e_1}{e_2}}{p \cdot \mu_1 + (1-p) \cdot \mu_2}{} }
\]
Here, we denote the weighted sum of two distributions $\mu_1$ and $\mu_2$ by $p \cdot \mu_1 + (1-p) \cdot \mu_2$, defined as $\lambda\omega. p \cdot \mu_1(\omega) + (1-p) \cdot \mu_2(\omega)$.

For the leaf cases, such as unit values, one might then introduce the rule where $\delta(\omega) = \lambda\omega'.[\omega = \omega']$ denotes the \emph{point distribution} on $\omega$, and where the \emph{Iverson brackets} $[\cdot]$ are defined by $[\varphi]=1$ if $\varphi$ is true and otherwise $[\varphi]=0$.
\[
\Rule{Bad:Triv}{ }{ \steprel{V}{\unitS}{\delta(\unitV,0)}{} }
\]

However, the attempt does not work well for \emph{almost-sure} termination, i.e., terminating with probability 1.
The issue is that the inductive definition of such a distribution dynamics will fail if there is a non-terminating execution.
Consider the following program
\[
f \equiv \funC{f}{\_}{\flipC{\frac{1}{2}}{\unitC}{\appC{f}{\unitC}}}
\]
and suppose that we want to derive an evaluation judgment for $\appC{f}{\unitC}$.
There does not exist a distribution $\mu$ such that $\steprel{V}{\appC{f}{\unitC}}{\mu}{}$, because if we try to apply the rules inductively, we will end up with a derivation tree with an infinite depth.
\begin{mathpar}\footnotesize
  \Rule{Bad:App}{ \Rule{Bad:Flip}{ \Rule{Bad:Triv}{ }{ \steprel{V}{\unitC}{\delta(\unitV,0)}{} } \\ \Rule{Bad:App}{ \vdots }{ \steprel{V}{\appC{f}{\unitC}}{???}{} } }{ \steprel{V}{\flipC{\sfrac{1}{2}}{\unitC}{\appC{f}{\unitC}}}{???}{} } }{ \steprel{V}{\appC{f}{\unitC}}{???}{} }
\end{mathpar}

\paragraph{Distribution-based semantics}
To cope with possible non-terminating executions, we develop a partial-evaluation-like dynamics equivalent to our trace-based one. Unlike partial evaluation dynamics used in AARA literature to deal with non-termination~\cite{APLAS:HH10}, we \emph{do} care about the evaluation results. For our new dynamics, we adapt the distribution-based semantics of~\cite{ICFP:BLG16}, which index judgments by their derivation depth to be able to construct a ``complete'' semantics from the ``partial'' ones. To this end, we need only modify the subprobability distributions to be over value-cost pairs, resulting in judgments of the following form: 
\[
\steprel{V}{e}{\mu}{n}.
\]
The meaning is that the expression $e$ reduces to a subprobability distribution with an at-most-$n$ derivation depth.
We use \textit{sub}probability distributions, whose probabilities sum to possibly less than one, because there could be terminating executions with a derivation tree whose depth is more than $n$.
%
%
\cref{Fi:DistributionSemantics} presents the rules for this distribution-based semantics.
In addition to the syntax-directed rules, we introduce a special base case where $n=0$ and $\mu$ is set to a zero distribution $\mathbf{0} \defeq \lambda\omega. 0$.


\begin{figure}\footnotesize
\fbox{$\steprel{V}{e}{\mu}{n}$ \qquad ``in environment $V$, expression $e$ reduces to result distribution $\mu$ within $n$ steps}
\begin{mathpar}
  \Rule{DE:Base}{ }{ \steprel{V}{e}{\mathbf{0}}{0} }
  \and
  \Rule{DE:Var}{ n > 0 }{ \steprel{V}{x}{\delta(V(x),0)}{n} }
  \and
  \Rule{DE:Triv}{n > 0 }{ \steprel{V}{\unitS}{\delta(\unitV,0)}{n} }
  \and
  \Rule{DE:Nil}{ n > 0 }{ \steprel{V}{\nilS}{\delta(\nilV,0)}{n} }
  \and
  \Rule{DE:Cons}{ n > 0 \\ V(x_1) = v_1 \\ V(x_2) = v_2}{ \steprel{V}{\consS{x_1}{x_2}}{\delta(\consV{v_1}{v_2},0)}{n} }
  \and
  \Rule{DE:MatL-1}{ V(x) = \nilV \\ \steprel{V}{e_0}{\mu}{n} }{ \steprel{V}{\matLS{e_0}{x_1}{x_2}{e_1}{x}}{\mu}{n+1} }
  \and
  \Rule{DE:MatL-2}{ V(x) = \consV{v_1}{v_2} \\\\ \steprel{V,x_1\mapsto v_1,x_2\mapsto v_2}{e_1}{\mu}{n} }{ \steprel{V}{\matLS{e_0}{x_1}{x_2}{e_1}{x}}{\mu}{n+1} }
  \and
  \Rule{DE:Tick}{ n > 0 }{ \steprel{V}{\tickS{q}}{\delta(\unitV,q)}{n} }
  \and
  \Rule{DE:Fun}{ n > 0}{ \steprel{V}{\funS{f}{x}{e}}{\delta(\cloV{V}{f}{x}{e},0 )}{n} }
  \and
  \Rule{DE:App}{ V(x_1) = \cloV{V'}{f}{x}{e} \\ V(x_2) = v_2 \\ \steprel{V',f\mapsto \cloV{V'}{f}{x}{e},x\mapsto v_2}{e}{\mu}{n} }{ \steprel{V}{\appS{x_1}{x_2}}{\mu}{n+1} }
  \and
  \Rule{DE:Let}{ \steprel{V}{e_1}{\mu}{n} \\ \Forall{(v_1,q_1) \in \mathrm{supp}(\mu)} \steprel{V,x\mapsto v_1}{e_2}{\mu_{(v_1,q_1)}}{n} }{ \steprel{V}{\letS{e_1}{x}{e_2}}{  \textstyle\sum_{(v_1,q_1) } \sum_{(v_2,q_2) } \mu(v_1,q_1) \cdot \mu_{(v_1,q_1)}(v_2,q_2) \cdot \delta(v_2,q_1+q_2) }{n+1} }
  \and
  \Rule{DE:Share}{ V(x) = v \\\\ \steprel{V,x_1\mapsto v, x_2\mapsto v}{e}{\mu}{n} }{ \steprel{V}{\shareS{x}{x_1}{x_2}{e}}{\mu}{n+1} }
  \and
  \Rule{DE:Flip}{ \steprel{V}{e_1}{\mu_1}{n} \\ \steprel{V}{e_2}{\mu_2}{n} }{ \steprel{V}{\flipS{p}{e_1}{e_2}}{p \cdot \mu_1 + (1-p) \cdot \mu_2}{n+1} }
  \and
  \Rule{DE:Prob}
  { n > 0 }
  { \steprel{V}{\probS{p}}{\delta( \probV{p}, 0 )}{n} }
  \and
  \Rule{DE:FlipS}
  { V(x) = \probV{p} \\ \steprel{V}{e_1}{\mu_1}{n} \\ \steprel{V}{e_2}{\mu_2}{n} }
  { \steprel{V}{\sflipS{x}{e_1}{e_2}}{ p \cdot \mu_1 + (1-p) \cdot \mu_2 }{ n+1} }
\end{mathpar}
\caption{Evaluation rules of the distribution-based cost semantics}
\label{Fi:DistributionSemantics}
\end{figure}

We can now approximate the distribution over terminating executions using the depth-indexed distributions by making use of the following lemma.

\begin{lemma}\label{lem:dist-wcpo}
  If $\steprel{V}{e}{\mu_1}{n}$, $\steprel{V}{e}{\mu_2}{m}$ and $n \le m$, then $\mu_1 \le \mu_2$ pointwise.
  As a consequence, we can define $\interp{e}^V_{\Rightarrow} \defeq \sup\{ \mu_n : \steprel{V}{e}{\mu_n}{n}\} = \lim_{n \to \infty} \mu_n$ as the subprobability distribution of all possible terminating executions of a probabilistic program $e$ under environment $V$.
\end{lemma}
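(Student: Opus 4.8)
The plan is to prove the pointwise-monotonicity claim by induction on the smaller index $n$, and then to read off the ``consequence'' about the supremum from monotonicity together with monotone convergence for countable sums. For the main claim, suppose $\steprel{V}{e}{\mu_1}{n}$ and $\steprel{V}{e}{\mu_2}{m}$ with $n \le m$. When $n = 0$ the only applicable rule is \textsc{DE:Base}, so $\mu_1 = \mathbf{0}$, and since every distribution is non-negative we get $\mathbf{0} \le \mu_2$ at once. When $n > 0$ we also have $m > 0$, so both derivations must end with the syntax-directed rule determined by the head constructor of $e$ (rather than \textsc{DE:Base}), and I would case on the shape of $e$.

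The leaf cases ($x$, $\unitS$, $\nilS$, $\consS{x_1}{x_2}$, $\tickS{q}$, $\funS{f}{x}{e'}$, $\probS{p}$) are immediate: the rule fixes $\mu_1 = \mu_2 = \delta(\cdots)$ independent of the index, so $\mu_1 \le \mu_2$ holds with equality. The pass-through cases \textsc{DE:MatL-1}, \textsc{DE:MatL-2}, \textsc{DE:App}, and \textsc{DE:Share} reduce directly to the induction hypothesis applied to their single premise, whose index is $n-1$ on the left and $m-1$ on the right (under a possibly different evaluation environment, to which the hypothesis still applies). The branching cases \textsc{DE:Flip} and \textsc{DE:FlipS} write $\mu_1 = p \cdot \mu_1' + (1-p) \cdot \mu_1''$ and $\mu_2 = p \cdot \mu_2' + (1-p) \cdot \mu_2''$; the hypothesis gives $\mu_1' \le \mu_2'$ and $\mu_1'' \le \mu_2''$, and since $p, 1-p \ge 0$ the weighted sum is monotone.

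The main obstacle is \textsc{DE:Let}, because of its nested sum-of-products shape and the fact that the two derivations may have different intermediate supports. Here $\mu_1$ is built from $\steprel{V}{e_1}{\nu_1}{n-1}$ together with, for each $(v_1,q_1) \in \mathrm{supp}(\nu_1)$, a derivation $\steprel{V,x\mapsto v_1}{e_2}{\nu_{1,(v_1,q_1)}}{n-1}$, and similarly $\mu_2$ uses $\nu_2$ and $\nu_{2,(v_1,q_1)}$ at index $m-1$. Unfolding the point masses gives $\mu_1(v,q) = \sum_{(v_1,q_1)} \nu_1(v_1,q_1) \cdot \nu_{1,(v_1,q_1)}(v,\,q-q_1)$, and likewise for $\mu_2$. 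The hypothesis at index $n-1 \le m-1$ yields $\nu_1 \le \nu_2$, hence $\mathrm{supp}(\nu_1) \subseteq \mathrm{supp}(\nu_2)$, so every inner derivation needed on the left also appears on the right; applying the hypothesis again gives $\nu_{1,(v_1,q_1)} \le \nu_{2,(v_1,q_1)}$ for each $(v_1,q_1) \in \mathrm{supp}(\nu_1)$. Since all factors are non-negative, each summand on the left is dominated by the corresponding summand on the right, and the extra summands on the right are non-negative, so $\mu_1(v,q) \le \mu_2(v,q)$.

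For the consequence, I would first record two corollaries of monotonicity. Existence of a derivation at every depth follows by a routine induction on $n$ (use \textsc{DE:Base} at $0$, and the syntax-directed rules at $n > 0$, whose premises exist at depth $n-1$), and uniqueness of the resulting $\mu_n$ at each fixed depth follows by applying the lemma in both directions with $n \le n$ to get $\mu_n \le \mu_n'$ and $\mu_n' \le \mu_n$. Thus $\{\mu_n\}_{n \ge 0}$ is a well-defined, pointwise non-decreasing chain. Combined with the (easily checked, by induction) fact that each $\mu_n$ has total mass at most one, every coordinate $\mu_n(\omega)$ is a non-decreasing sequence in $[0,1]$, so its pointwise supremum equals its limit; hence $\interp{e}^V_{\Rightarrow}(\omega) \defeq \sup_n \mu_n(\omega)$ is well-defined and agrees with $\lim_{n\to\infty}\mu_n(\omega)$. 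Finally, monotone convergence over the countable set of value-cost pairs lets me exchange supremum and sum, $\sum_\omega \sup_n \mu_n(\omega) = \sup_n \sum_\omega \mu_n(\omega) \le 1$, so $\interp{e}^V_{\Rightarrow}$ is a subprobability distribution, which completes the claim.
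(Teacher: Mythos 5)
Your proof is correct and takes essentially the same route as the paper's: the paper inducts on the derivation of the larger-indexed judgment and inverts the smaller one, while you induct on the smaller index $n$ and invert both via the syntax-directedness of the rules, which is an equivalent organization leading to the identical rule-by-rule case analysis (with \textsc{DE:Let} as the only delicate case, handled the same way). Both arguments then obtain the supremum/limit claim by monotone convergence, so nothing essential differs.
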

\begin{proof}
  By induction on the derivation of $\steprel{V}{e}{\mu_2}{m}$, followed by inversion on $\steprel{V}{e}{\mu_1}{n}$.
  The existence of the sequence appeals to the Monotone Convergence Theorem.
\end{proof}

Recall the problem case from attempting a non-indexed distribution-based operational semantics:
\[
f \equiv \funC{f}{\_}{\flipC{\frac{1}{2}}{\unitC}{\appC{f}{\unitC}}}
\]
With the depth-indexed distribution-based dynamics, we can now derive the following judgments:
\begin{align*}
    & \steprel{V}{\appC{f}{\unitC}}{\mathbf{0}}{0}, & & \steprel{V}{\appC{f}{\unitC}}{\sfrac{1}{2} \cdot \delta(\unitV, 0)}{3}, \\
    & \steprel{V}{\appC{f}{\unitC}}{\sfrac{1}{2} \cdot \delta(\unitV,0) + \sfrac{1}{4} \cdot \delta(\unitV,0)}{5}, & \cdots,\qquad & \steprel{V}{\appC{f}{\unitC}}{\textstyle\sum_{i=1}^k (\sfrac{1}{2})^i \cdot \delta(\unitV,0)}{2k+1}.
\end{align*}
Letting $k$ approach infinity, we derive that $\interp{\appC{f}{\unitC}}^V_{\Rightarrow} = \delta(\unitC,0)$, i.e., the program terminates with probability one. Further, the evaluation result is always unit, and the net cost is always zero.

Finally, we show that the distribution-based dynamics is equivalent to the trace-based one, so we can proceed to prove soundness with respect to the distribution-based semantics.

\begin{proposition}
  Let $V$ be an environment and $e$ be an expression. Then $\interp{e}^V_{\Rightarrow} = \interp{e}^V_{\Downarrow}$.
\end{proposition}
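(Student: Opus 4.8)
The plan is to prove the equality pointwise: for every value-cost pair $(v,q)$ I show $\interp{e}^V_{\Rightarrow}(v,q) = \interp{e}^V_{\Downarrow}(v,q)$. The bridge between the two semantics is \emph{derivation depth}. To each trace-based derivation $\evalrel{V}{e}{v}{q}{p_\sigma}{\sigma}$ I assign its tree height; by \cref{lem:uniqueness} this height is a well-defined function of the trace $\sigma$ (given $V$ and $e$). The overall strategy is to establish a single correspondence lemma relating the depth-indexed distributions of \cref{Fi:DistributionSemantics} to \emph{bounded-depth} sums of trace probabilities, and then to pass to the limit.

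Concretely, I would prove by induction on $n$ that the largest distribution $\mu$ with $\steprel{V}{e}{\mu}{n}$ (the one that uses \textsc{DE:Base} only when forced at index $0$) satisfies, for all $(v,q)$,
\[ \mu(v,q) = \sum_{\sigma}\; p_\sigma, \qquad \text{$\sigma$ ranging over traces with } \evalrel{V}{e}{v}{q}{p_\sigma}{\sigma} \text{ of derivation depth} \le n. \]
The base case $n=0$ gives $\mathbf{0}$ on the left and the empty sum on the right. For the step, I invert the last distribution rule and match it to the corresponding trace rule(s): the leaf rules (\textsc{DE:Var}, \textsc{DE:Triv}, \textsc{DE:Nil}, \textsc{DE:Cons}, \textsc{DE:Tick}, \textsc{DE:Fun}, \textsc{DE:Prob}) pair with single trace rules producing the empty trace; and \textsc{DE:Flip} (resp. \textsc{DE:FlipS}) pairs with \textsc{E:Flip-1}/\textsc{E:Flip-2} (resp. \textsc{E:FlipS-1}/\textsc{E:FlipS-2}), so that $p \cdot \mu_1 + (1-p) \cdot \mu_2$ is exactly the split of the trace sum according to whether the leading flip is $\mathsf{H}$ or $\mathsf{T}$, with the induction hypothesis applied to the premises at index $n$.

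The delicate cases are \textsc{DE:Let} and \textsc{DE:App}. For the let, the key fact, again a consequence of \cref{lem:uniqueness}, is that each terminating trace $\sigma$ of $\letS{e_1}{x}{e_2}$ decomposes \emph{uniquely} as $\sigma_1 \mathbin{@} \sigma_2$, where $\sigma_1$ is the deterministically consumed trace of $e_1$ and $\sigma_2$ that of the continuation; this makes concatenation a bijection between pairs of sub-traces and traces of the let, so the bounded-depth trace sum factors as a convolution matching the double sum in \textsc{DE:Let}. A clean observation makes the index bookkeeping go through without any juggling: a trace derivation of the let has height $1 + \max(d_1,d_2)$ in terms of its two sub-derivation heights, so bounding the parent by $n+1$ is \emph{equivalent} to bounding both children by $n$, which is precisely what the equal-index premises of \textsc{DE:Let} and \textsc{DE:App} require. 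I expect the main obstacle to be verifying this convolution/bijection rigorously, i.e., checking that products of probabilities, sums of costs, and concatenation of traces on the trace side line up term-for-term with the $\sum \sum \mu(v_1,q_1)\cdot\mu_{(v_1,q_1)}(v_2,q_2)\cdot\delta(v_2,q_1+q_2)$ of the distribution rule.

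Finally I take $n \to \infty$. On the left, the maximal index-$n$ distributions increase to $\interp{e}^V_{\Rightarrow}$ pointwise by \cref{lem:dist-wcpo}. On the right, every terminating trace execution has a finite derivation tree, hence finite depth, so each trace is eventually counted; since all summands are nonnegative and the bounded-depth partial sums are dominated by the total probability $\le 1$, the Monotone Convergence Theorem gives that these partial sums converge to the full sum $\interp{e}^V_{\Downarrow}(v,q)$. Equating the two limits at each $(v,q)$ yields $\interp{e}^V_{\Rightarrow} = \interp{e}^V_{\Downarrow}$.
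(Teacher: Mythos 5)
Your proposal is correct, but it takes a genuinely different route from the paper's. The paper proves the equality as two separate inequalities, $\interp{e}^V_{\Rightarrow} \le \interp{e}^V_{\Downarrow}$ and $\interp{e}^V_{\Downarrow} \le \interp{e}^V_{\Rightarrow}$, each by its own induction on $n$: for the first it shows $\mu_n \le \interp{e}^V_{\Downarrow}$ for every step-indexed distribution $\steprel{V}{e}{\mu_n}{n}$, and for the second it shows $\nu_n \le \interp{e}^V_{\Rightarrow}$, where $\nu_n$ restricts the trace distribution to traces of \emph{length} at most $n$. You instead prove a single exact invariant---the distribution at index $n$ equals the sum of trace probabilities over derivations of \emph{depth} at most $n$---and recover the proposition by letting $n \to \infty$ on both sides. (Your phrase ``the largest distribution'' at index $n$ is harmless: by \cref{lem:dist-wcpo} with $n = m$ the distribution at a fixed index is in fact unique.) Your stratification by derivation depth is what makes an exact correspondence possible, since depth and step index move in lockstep: the conclusion is bounded by $n+1$ iff all premises are bounded by $n$, whereas trace length is incremented only by the flip rules and so does not track the rule structure in the same way. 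Your invariant is also sharper than what the paper establishes: it characterizes exactly which executions $\mu_n$ accounts for, and it subsumes both of the paper's inequalities at once. What the paper's route buys is slack: each one-sided bound needs only one half of your convolution bijection in the \textsc{DE:Let}/\textsc{DE:App} cases (surjectivity of trace decomposition for one direction, injectivity for the other), while your equality needs both halves simultaneously. One caveat: the unique-decomposition fact you invoke for the let case is \emph{not} literally a consequence of \cref{lem:uniqueness}, which fixes the trace and asserts uniqueness of the outcome $(v,q,p)$. What you actually need is prefix-freeness of the trace semantics---no valid trace of $(V,e_1)$ is a proper prefix of another valid trace of $(V,e_1)$---so that $\sigma_1 \mathbin{@} \sigma_2 = \sigma_1' \mathbin{@} \sigma_2'$ forces $\sigma_1 = \sigma_1'$. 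This property does hold and follows by a routine induction on derivations (it is also needed, silently, for the paper's first inequality), but you should state and prove it as a separate lemma rather than cite \cref{lem:uniqueness} for it.
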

\begin{proof}
  We proceed by proving both $\interp{e}^V_{\Rightarrow} \le \interp{e}^V_{\Downarrow}$ and $\interp{e}^V_{\Downarrow} \le \interp{e}^V_{\Rightarrow}$.
  For the first inequality, it is sufficient to show that $\mu_n \le \interp{e}^V_{\Downarrow}$ for all $n \in \bbN$ where $\steprel{V}{e}{\mu_n}{n}$.
  For the second one, it suffices to show that $\nu_n \le \interp{e}^V_{\Rightarrow}$ for all $n \in \bbN$ where $\nu_n$ is a sub-distribution of executions in $\interp{e}^V_{\Downarrow}$ whose trace has length at most $n$.
  Both cases are done by induction on $n$.
\end{proof}

\paragraph{Soundness} We now restate and prove the soundness theorem using the distribution-based semantics. Again, it states that the initial potential can pay for the expected evaluation cost and the typing of the result.

\begin{theorem}[Soundness of AARA]\label{the:soundness}
  Let $\typingrel{\Gamma}{q}{e}{A}$ and $\valuerel{V}{\Gamma}$. Then
  \[
  \pot{V }{\Gamma} + q \ge \sum_{(v_0,q_0)} \interp{e}^V_{\Rightarrow}(v_0,q_0) \cdot (\pot{v_0}{ A} + q_0).
  \]
\end{theorem}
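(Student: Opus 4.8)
The plan is to reduce the statement about the limiting distribution $\interp{e}^V_{\Rightarrow}$ to a family of finite, depth-indexed statements and then pass to the limit. Concretely, I would first prove a bounded version by induction on the step index $n$: whenever $\typingrel{\Gamma}{q}{e}{A}$, $\valuerel{V}{\Gamma}$ and $\steprel{V}{e}{\mu}{n}$, we have $\pot{V}{\Gamma}+q \ge \sum_{(v_0,q_0)} \mu(v_0,q_0)\cdot(\pot{v_0}{A}+q_0)$. The base case $n=0$ is immediate, since \textsc{DE:Base} forces $\mu=\mathbf{0}$ and both $\pot{V}{\Gamma}$ and $q$ are nonnegative. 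The step is well-structured because every distribution rule derives a conclusion at depth $n+1$ from premises at depth $n$, so the induction hypothesis applies exactly to the subexpression evaluations. Since the theorem's right-hand side only makes sense when each $v_0$ is well-typed, I would also carry along, by the same induction, a companion type-preservation fact guaranteeing $\valuerel{v_0}{A}$ for every $(v_0,q_0)$ in the support.

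Because the structural rules (\textsc{L:Sub}, \textsc{L:Sup}, \textsc{L:Weak}, \textsc{L:Relax}) leave the expression—and hence $\mu$ and $n$—unchanged, I would run an inner induction on the typing derivation at fixed $n$. When the last typing rule is structural, I apply the inner hypothesis to its premise and check the inequality is preserved: \textsc{L:Relax} follows from $q\ge p$, $q-q'\ge p-p'$, and the fact that $\mu$ has total mass at most $1$, while \textsc{L:Sub}/\textsc{L:Sup} follow from the lemma that $\tau\subtype\tau'$ implies $\pot{v}{\tau'}\le\pot{v}{\tau}$. Once the structural rules are stripped away, the last typing rule is syntax-directed and fixes the shape of $e$, which in turn pins down the matching distribution rule. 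The leaf cases (\textsc{L:Var}, \textsc{L:Tick}, \textsc{L:Prob}, and so on) hold by direct calculation; for \textsc{L:Prob} the side condition $q=p\cdot q_H+(1-p)\cdot q_T$ makes the bound an equality. The branching case \textsc{L:Flip} is where the new machinery pays off yet is clean: from \textsc{DE:Flip} the result is $p\cdot\mu_1+(1-p)\cdot\mu_2$, so the right-hand side splits into $p$- and $(1-p)$-weighted copies of the two inductive bounds, and the premise $\share{\Gamma}{p\times\Gamma_1}{(1-p)\times\Gamma_2}$ together with \cref{lemma:split,lemma:scale} (extended to contexts) gives exactly $\pot{V}{\Gamma}=p\cdot\pot{V}{\Gamma_1}+(1-p)\cdot\pot{V}{\Gamma_2}$; combined with $q=p\cdot q_1+(1-p)\cdot q_2$, the two branch bounds recombine with equality.

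I expect the sequencing rule \textsc{L:Let} to be the main obstacle. There \textsc{DE:Let} composes an outer distribution $\mu$ over $(v_1,q_1)$ with inner distributions $\mu_{(v_1,q_1)}$, and the accumulated cost $q_1+q_2$ couples the two. Applying the inner hypothesis to $e_2$ under the environment $V,x\mapsto v_1$ relies on the type-preservation companion to supply $\valuerel{v_1}{\tau}$ for each $v_1$ in the support. The crucial arithmetic step is that the cross term $q_1\cdot(\text{mass of }\mu_{(v_1,q_1)})$ is bounded by $q_1$, because costs are nonnegative and every inner subprobability distribution has mass at most $1$; this is precisely where the restriction to monotone resources is used. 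With that bound, the $e_1$-inequality (which hands the constant potential $p$ to $e_2$) telescopes into the inequality for the whole let-expression. The remaining syntax-directed cases (\textsc{L:App}, \textsc{L:MatL}, \textsc{L:Share}, \textsc{L:FlipS}) follow the same template, with \textsc{L:FlipS} using the identity $\Phi=x\cdot\Phi+(1-x)\cdot\Phi$ baked into the rule.

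Finally, to recover the theorem from the bounded version, I would invoke \cref{lem:dist-wcpo}: the distributions $\mu_n$ with $\steprel{V}{e}{\mu_n}{n}$ increase pointwise to $\interp{e}^V_{\Rightarrow}$. Since $(v_0,q_0)\mapsto\pot{v_0}{A}+q_0$ is nonnegative, the partial sums $\sum_{(v_0,q_0)}\mu_n(v_0,q_0)\cdot(\pot{v_0}{A}+q_0)$ are nondecreasing in $n$ and uniformly bounded by $\pot{V}{\Gamma}+q$, so the Monotone Convergence Theorem transfers the bound to their limit $\sum_{(v_0,q_0)}\interp{e}^V_{\Rightarrow}(v_0,q_0)\cdot(\pot{v_0}{A}+q_0)$, which is exactly the claim.
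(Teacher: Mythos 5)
Your proposal is correct and follows essentially the same route as the paper: reduce to the depth-indexed statement, prove it by outer induction on $n$ with inversion on $\steprel{V}{e}{\mu}{n}$ and an inner induction on the typing derivation (so that the structural rules can be stripped), use the sharing/scaling lemmas for \textsc{L:Flip}, exploit the subprobability mass bound together with nonnegative costs for the \textsc{L:Let} cross term and \textsc{L:Relax}, and finally pass to the limit via \cref{lem:dist-wcpo} and monotone convergence. Your explicit companion type-preservation invariant is a detail the paper leaves implicit, but it does not change the structure of the argument.
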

\begin{proof}
  It suffices to prove for every $n \in \bbN$, if $\steprel{V}{e}{\mu}{n}$, then
  \[
  \pot{V }{ \Gamma} + q \ge \sum_{(v_0,q_0) } \mu(v_0,q_0) \cdot (\pot{v_0 }{ A} + q_0).
  \]
  Proceed by induction on $n$ with inversion on $\steprel{V}{e}{\mu}{n}$ then inner induction on $\typingrel{\Gamma}{q}{e}{A}$. 
\end{proof}

\paragraph{Non-termination} So far we have only considered terminating executions in the evaluation dynamics, dealing with non-termination indirectly.
Recall that the distribution over $e$'s evaluations in environment $V$ is defined as
\[
\interp{e}^V_{\Downarrow}(v,q) \defeq
   \sum_{\sigma} p_\sigma \quad \text{where $\sigma$'s are \emph{finite} traces satisfying $\evalrel{V}{e}{v}{q}{p_\sigma}{\sigma}$},
\]
thus \emph{infinite} traces (e.g., non-terminating executions) are totally ignored.
Hence, the soundness theorem (\cref{the:soundness}) does \emph{not} imply that the typing judgment $\typingrel{\Gamma}{q}{e}{A}$ (where $e$ is instrumented with ticks to count evaluation steps) entails that the expected termination time of $e$ is finite.
We therefore now extend the dynamics to account for non-terminating behavior directly.

To deal with non-termination, we first introduce a dummy value $\circ$ to represent some partial evaluation.
We can then enrich the distribution-based dynamics with partial evaluation by forcing the result distribution $\mu$ in the judgment $\steprel{V}{e}{\mu}{n}$ to be a \emph{full} probability distribution instead of a subprobability one.
To achieve this, we extend $\mu$'s distributions to be over $(\mathsf{Val} \cup \{\circ\}) \times (\bbQ_{\ge 0} \cup \{ \infty \})$, including this new dummy value.
Most of the rules stay unchanged, except the following two:
\begin{mathpar}\scriptsize
  \Rule{PE:Base}{ }{ \steprel{V}{e}{\delta(\circ,0)}{0} }
  \and
  \Rule{PE:Let}{ \steprel{V}{e_1}{\mu}{n} \\ \Forall{(v_1,q_1) \in \mathrm{supp}(\mu)} (v_1 \neq \circ) \implies \steprel{V,x\mapsto v_1}{e_2}{\mu_{(v_1,q_1)}}{n} }{ \steprel{V}{\letS{e_1}{x}{e_2}}{\textstyle \sum_{q_1} \mu(\circ,q_1) \cdot \delta(\circ,q_1) + \sum_{(v_1,q_1) : v_1 \neq \circ} \sum_{(v_2,q_2)} \mu(v_1,q_1) \cdot \mu_{(v_1,q_1)}(v_2,q_2) \cdot \delta(v_2,q_1+q_2)  }{n+1} }
\end{mathpar}

However, we can no longer take the previous approach of defining $\interp{e}^V_{\Rightarrow}$ by the limit of $\{\mu_n\}_{n \in \bbN}$ where $\steprel{V}{e}{\mu_n}{n}$, because it no longer holds that, if $n \le m$ ,then $\mu_n \le \mu_m$ pointwise. To get around this, we define a new ordering on complete distributions, extending it to cover the dummy value differently.
We define $\mu_1 \sqsubseteq \mu_2$ as
\begin{itemize}
  \item $\Forall{v,q} (v \neq \circ) \implies \mu_1(v,q) \le \mu_2(v,q)$, and
  \item $\Forall{q} \mu_1((\mathsf{Val} \cup \{\circ\}) \times [0,q]) \ge \mu_2((\mathsf{Val} \cup \{\circ\}) \times [0,q])$.
\end{itemize}
For concrete values, the order above is the same as the pointwise order on subprobability distributions, but for divergence, we take the other direction---the property above implies that $\mu_1(\{\circ\} \times[0,q]) \ge \mu_2(\{\circ\} \times [0,q])$ for all $q \in \bbQ_{\ge 0} \cup \{\infty\}$.
Since we assume non-negative ticks, the probability that the cost is smaller than any $q$ with respect to $\mu_1$ should be greater than or equal to that with respect to $\mu_2$.
Formally, we prove that $\sqsubseteq$ defines an \emph{$\omega$-complete partial order} on distributions.

\begin{lemma}\label{lem:partialorder}
    The relation $\sqsubseteq$ defines a partial order on the distributions.
    Further, let $\{\mu_n\}_{n \in \bbN}$ be a sequence such that $\mu_1 \sqsubseteq \mu_2 \sqsubseteq \cdots \sqsubseteq \mu_n \sqsubseteq \cdots$.
    Then there exists a least distribution $\mu$ such that for all $n \in \bbN$, $\mu_n \sqsubseteq \mu$. Further, we denote $\mu$ by $\bigsqcup_{n \in \bbN} \mu_n$.
\end{lemma}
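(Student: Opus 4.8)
The statement splits into two claims: that $\sqsubseteq$ is a partial order, and that every $\sqsubseteq$-chain has a least upper bound. For the first, reflexivity and transitivity are immediate, since both defining conditions are preserved under chaining --- the first is a pointwise $\le$ on the value masses, and the second a pointwise $\ge$ on the cost-cumulatives $q \mapsto \mu((\mathsf{Val}\cup\{\circ\})\times[0,q])$. The only real content is antisymmetry, and I would argue it as follows: if $\mu_1 \sqsubseteq \mu_2$ and $\mu_2 \sqsubseteq \mu_1$, then the first condition in both directions forces $\mu_1(v,q) = \mu_2(v,q)$ for every $v \neq \circ$ and every $q$, so the distributions agree on all non-$\circ$ atoms; the second condition in both directions forces the total cost-cumulatives to agree, and subtracting the now-equal value-cumulatives shows the $\circ$-cost-cumulatives agree too. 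As a measure is determined by its cumulative, this gives $\mu_1 = \mu_2$.

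For completeness I would construct the supremum of a chain $\mu_1 \sqsubseteq \mu_2 \sqsubseteq \cdots$ explicitly, guided by what any upper bound is forced to satisfy. Writing $H_n(q) := \mu_n(\mathsf{Val}\times[0,q])$ and $F_n(q) := \mu_n((\mathsf{Val}\cup\{\circ\})\times[0,q])$, the chain hypothesis makes each value mass $\mu_n(v,q)$ (for $v \neq \circ$) nondecreasing in $n$ and each $F_n(q)$ nonincreasing in $n$. The plan is to define the candidate $\mu$ to have value masses $\mu(v,q) := \sup_n \mu_n(v,q)$ for $v \neq \circ$ (pointwise limits, which exist by monotonicity and are summable by the Monotone Convergence Theorem), and to have $\circ$-part equal to the measure whose cost-cumulative is $D(q) := \inf_n\big(F_n(q) - H_n(q)\big) = \lim_n \mu_n(\{\circ\}\times[0,q])$. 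Each $F_n - H_n$ is a genuine cumulative and the sequence is nonincreasing in $n$, so the limit $D$ is an increasing function of $q$ and hence a valid cost-cumulative.

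Once $\mu$ is known to be a distribution, the two verifications are mechanical. It is an upper bound because $\mu(v,q) = \sup_n \mu_n(v,q) \ge \mu_m(v,q)$ and $\mu((\mathsf{Val}\cup\{\circ\})\times[0,q]) = \inf_n F_n(q) \le F_m(q)$ for every $m$; and it is least because any upper bound $\nu$ must obey $\nu(v,q) \ge \sup_n \mu_n(v,q) = \mu(v,q)$ and $\nu((\mathsf{Val}\cup\{\circ\})\times[0,q]) \le \inf_n F_n(q) = \mu((\mathsf{Val}\cup\{\circ\})\times[0,q])$, which are exactly the conditions defining $\mu \sqsubseteq \nu$.

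The hard part will be showing that this candidate $\mu$ is a bona fide full probability distribution, which is precisely where the tailored order $\sqsubseteq$ earns its keep. First, $D$ must be nonnegative; I would get this from an interleaving argument, using that for $m \ge n$ the chain gives $H_m(q) \le F_m(q) \le F_n(q)$, so taking $\sup_m$ and then $\inf_n$ yields $\sup_n H_n(q) \le \inf_n F_n(q)$, i.e. $D(q) \ge 0$. Second, mass must be conserved: the value masses total $\sup_n H_n(\infty)$, while $D(\infty) = \inf_n F_n(\infty) - \sup_n H_n(\infty) = 1 - \sup_n H_n(\infty)$ because each $\mu_n$ has total mass one (so $F_n(\infty) = 1$), and the two contributions sum to $1$. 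The subtle phenomenon lurking here is that the divergence component of the limit can acquire a genuinely continuous part --- mass pushed arbitrarily high in cost as $n$ grows, corresponding to nontermination --- so $D$ should be read as a Lebesgue--Stieltjes cumulative on $[0,\infty]$ rather than as a purely atomic distribution. This is exactly the failure of the naive pointwise order that motivated reversing the inequality on cost-cumulatives in the definition of $\sqsubseteq$, and handling the $\circ$ bookkeeping correctly is the crux.
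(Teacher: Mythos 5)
Your proof is correct and follows essentially the same route as the paper's: the supremum is built componentwise, with the value masses as monotone pointwise limits, the divergence part as the (non-increasing) limit of the $\circ$-cost cumulatives $f_n^\circ(q) = \mu_n(\{\circ\}\times[0,q])$ (your $D = \inf_n(F_n - H_n)$ is exactly this limit), and total mass $1$ recovered via the Monotone Convergence Theorem, exactly as in the paper's construction of $f^\circ$ and its identity $f^\circ(\infty) = 1 - P$. Your additions are compatible refinements rather than a different approach: you verify the order axioms and the upper-bound/leastness conditions, which the paper leaves implicit, and your choice to keep the limit's $\circ$-part as a Lebesgue--Stieltjes cumulative rather than extracting atoms is in fact slightly more careful than the paper's definition $\nu^\circ(q) \defeq f^\circ(q) - \lim_{q' \to q^-} f^\circ(q')$, which can lose mass when $f^\circ$ acquires a continuous part---precisely the subtlety you flag.
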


We now restate \cref{lem:dist-wcpo} in terms of the partial order $\sqsubseteq$ over distributions.

\begin{lemma}
	\label{lemma:dist def}
  If $\steprel{V}{e}{\mu_1}{n}$, $\steprel{V}{e}{\mu_2}{m}$ and $n \le m$, then $\mu_1 \sqsubseteq \mu_2$ pointwise.
  As a consequence, we can define $\interp{e}^V_{\Rightarrow} \defeq \bigsqcup_{n \in \bbN} \mu_n$ as the distribution of all possible terminating and non-terminating executions of a probabilistic program $e$ under environment $V$.
\end{lemma}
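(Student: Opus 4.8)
The plan is to prove the monotonicity claim by induction on the smaller step-index $n$, inverting on both derivations $\steprel{V}{e}{\mu_1}{n}$ and $\steprel{V}{e}{\mu_2}{m}$ simultaneously, and to check rule-by-rule that each reduction step preserves $\sqsubseteq$. Two preliminary observations drive the base and leaf cases. First, $\delta(\circ,0)$ is the $\sqsubseteq$-bottom: for any full distribution $\nu$, condition~1 holds since $\delta(\circ,0)$ puts no mass on genuine values, and condition~2 holds since $\delta(\circ,0)$ assigns cumulative cost-mass $1$ to every threshold $q\ge 0$. Hence the case $n=0$, where \textsc{PE:Base} forces $\mu_1=\delta(\circ,0)$, is immediate. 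Second, when $n>0$ we also have $m>0$, so both derivations must use the \emph{same} rule, determined by the head constructor of $e$. For the leaf rules (\textsc{DE:Var}, \textsc{DE:Triv}, \textsc{DE:Nil}, \textsc{DE:Cons}, \textsc{DE:Tick}, \textsc{DE:Fun}, \textsc{DE:Prob}) the resulting point distribution does not depend on the index, so $\mu_1=\mu_2$ and we conclude by reflexivity; for the purely structural rules (\textsc{DE:MatL}, \textsc{DE:App}, \textsc{DE:Share}) the two distributions equal premise distributions at depths $n-1\le m-1$, and the induction hypothesis applies directly.

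For the probabilistic-branching rules \textsc{DE:Flip} and \textsc{DE:FlipS}, I would isolate the fact that $\sqsubseteq$ is preserved by convex combinations: if $\mu_1\sqsubseteq\mu_1'$ and $\mu_2\sqsubseteq\mu_2'$, then $p\cdot\mu_1+(1-p)\cdot\mu_2\sqsubseteq p\cdot\mu_1'+(1-p)\cdot\mu_2'$. Both defining conditions of $\sqsubseteq$ are linear in the distribution (condition~1 is a pointwise inequality on genuine values, condition~2 a pointwise inequality of the cost-CDFs $q\mapsto\mu((\mathsf{Val}\cup\{\circ\})\times[0,q])$), so each is closed under nonnegative linear combinations. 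Applying the induction hypothesis to the two branch premises (at depths $n-1\le m-1$) and then this closure property discharges both cases.

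The \textbf{main obstacle} is the let rule \textsc{PE:Let}, which combines the distribution $\mu$ for $e_1$ with a family of continuation distributions $\{\mu_{(v_1,q_1)}\}$ for $e_2$. Here the induction hypothesis yields $\mu\sqsubseteq\mu'$ (on $e_1$, at depths $n-1\le m-1$) and $\mu_{(v_1,q_1)}\sqsubseteq\mu'_{(v_1,q_1)}$ for each genuine $(v_1,q_1)$; note condition~1 of $\mu\sqsubseteq\mu'$ guarantees that the genuine support of $\mu$ is contained in that of $\mu'$, so each such continuation comparison is available. What remains is a standalone \emph{monotonicity-of-bind} lemma: the let-combination $\mathrm{Let}(\mu,\nu)$ is monotone in both arguments with respect to $\sqsubseteq$. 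I would prove it by factoring through an intermediate distribution, $\mathrm{Let}(\mu,\nu)\sqsubseteq\mathrm{Let}(\mu',\nu)\sqsubseteq\mathrm{Let}(\mu',\nu')$ (extending the continuation family $\nu$ to the extra genuine points of $\mu'$ using $\nu'$ there). The second inequality, varying only the continuations, follows from the convex-combination closure above together with the observation that shifting every cost by a constant $q_1$ preserves $\sqsubseteq$ (it translates the cost-CDF and leaves value identities untouched), since $\mathrm{Let}(\mu',\cdot)$ is, on its genuine part, a $\mu'$-weighted convex combination of such cost-shifted continuations while its $\circ$-part is held fixed.

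The delicate part is the first inequality, $\mathrm{Let}(\mu,\nu)\sqsubseteq\mathrm{Let}(\mu',\nu)$, because condition~1 (genuine mass grows from $\mu$ to $\mu'$) and condition~2 (cumulative cost-mass shrinks) pull in opposite directions and must be reconciled. Writing $K$ for the fixed kernel sending $(\circ,q_1)\mapsto\delta(\circ,q_1)$ and genuine $(v_1,q_1)\mapsto$ the $q_1$-cost-shift of $\nu_{(v_1,q_1)}$, the claim is $K_*\mu\sqsubseteq K_*\mu'$. Condition~1 for the pushforwards is immediate from pointwise monotonicity of $\mu$ and nonnegativity. For condition~2 I would compute the difference of cost-CDFs directly: abbreviating $G_{(v_1,q_1)}(r)=\nu_{(v_1,q_1)}((\mathsf{Val}\cup\{\circ\})\times[0,r])$, one has $0\le G\le 1$ with $G=0$ for $r<0$, so the genuine contribution $\sum_{v_1\neq\circ}(\mu-\mu')(v_1,q_1)\,G_{(v_1,q_1)}(q-q_1)$, whose factors $(\mu-\mu')$ are $\le 0$, is bounded below by $\sum_{v_1\neq\circ,\,q_1\le q}(\mu-\mu')(v_1,q_1)$; adding the $\circ$-contribution recovers exactly $F_\mu(q)-F_{\mu'}(q)\ge 0$ by condition~2 of $\mu\sqsubseteq\mu'$. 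Finally, for the stated consequence, a routine induction on $n$ shows each $\steprel{V}{e}{\mu_n}{n}$ has a unique solution, the monotonicity just proved makes $\{\mu_n\}_{n\in\bbN}$ a $\sqsubseteq$-chain, and \cref{lem:partialorder} ($\sqsubseteq$ is an $\omega$-cpo) provides the least upper bound $\interp{e}^V_{\Rightarrow}=\bigsqcup_{n\in\bbN}\mu_n$.
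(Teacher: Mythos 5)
Your proposal is correct and follows essentially the same route as the paper, which proves the claim by induction on one derivation with inversion on the other and then invokes \cref{lem:partialorder} to obtain the least upper bound $\bigsqcup_{n\in\bbN}\mu_n$; your choice to induct on the smaller index $n$ rather than on the derivation of $\steprel{V}{e}{\mu_2}{m}$ is an immaterial variation. The paper states its proof only as a two-line sketch, and your fleshed-out treatment of the \textsc{PE:Let} case --- the bind-monotonicity lemma with the cost-CDF estimate reconciling the two opposing conditions of $\sqsubseteq$ --- is exactly the detail that sketch leaves implicit, and it is sound.
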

\begin{proof}
  By induction on the derivation of $\steprel{V}{e}{\mu_2}{m}$, followed by inversion on $\steprel{V}{e}{\mu_1}{n}$.
  The existence of the sequence appeals to \cref{lem:partialorder}.
\end{proof}

Recall that in the soundness proof, we induct on the index $n$ of $\steprel{V}{e}{\mu}{n}$.
The reason why this approach works is that the expected cost with respect to $\mu$ is \emph{$\omega$-continuous}, i.e., monotone and interchangeable with a limit operator.
Although it is unclear whether the continuity still holds for $\sqsubseteq$ or not, we can prove the following weaker result that is sufficient for our soundness proof.

\begin{lemma}
	\label{lemma:dist bound}
  Let $h(\mu) \defeq \sum_{q} \mu(\circ,q) \cdot q + \sum_{(v,q) : v \neq \circ} \mu(v,q) \cdot (\pot{v}{A} + q)$.
  Let $\{\mu_n\}_{n \in \bbN}$ be a sequence such that $\mu_1 \sqsubseteq \mu_2 \sqsubseteq \cdots \sqsubseteq \mu_n \sqsubseteq \cdots$.
  Let $M \in \bbR_{\ge 0}$.
  If $h(\mu_n) \le M$ for all $n \in \bbN$, then $h(\bigsqcup_{n \in \bbN} \mu_n) \le M$.
\end{lemma}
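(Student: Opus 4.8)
The plan is to set $\mu_\infty \defeq \bigsqcup_{n} \mu_n$ and to split $h$ into a \emph{cost} part and a \emph{potential} part, each of which behaves well along the chain even though neither clause of $\sqsubseteq$ is on its own compatible with $h$. Concretely, define
\[
A(\mu) \defeq \sum_{(v,q)} q \cdot \mu(v,q), \qquad B(\mu) \defeq \sum_{(v,q) : v \neq \circ} \pot{v}{A} \cdot \mu(v,q),
\]
so that $h(\mu) = A(\mu) + B(\mu)$ and both $A$ and $B$ are sums of non-negative terms. I will show that $\{A(\mu_n)\}_n$ and $\{B(\mu_n)\}_n$ are each \emph{non-decreasing} and converge to $A(\mu_\infty)$ and $B(\mu_\infty)$. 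The lemma then follows: since $A(\mu_n)+B(\mu_n)=h(\mu_n)\le M$ and both summands increase, the sum converges to the sum of the limits, so $h(\mu_\infty)=\lim_n A(\mu_n)+\lim_n B(\mu_n)=\lim_n h(\mu_n)=\sup_n h(\mu_n)\le M$.

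The potential part $B$ is the easy case. Its terms range only over concrete outcomes $v \neq \circ$, and the first clause of $\sqsubseteq$ gives $\mu_n(v,q) \le \mu_{n+1}(v,q)$ pointwise there, with $\mu_\infty(v,q) = \sup_n \mu_n(v,q)$ by the construction of the least upper bound in \cref{lem:partialorder}. Since $\pot{v}{A} \ge 0$, the Monotone Convergence Theorem applied to the sum over the countable set of concrete pairs $(v,q)$ shows both that $B(\mu_n)$ is non-decreasing and that $B(\mu_\infty) = \sup_n B(\mu_n) = \lim_n B(\mu_n)$.

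The cost part $A$ is the crux, since the second clause of $\sqsubseteq$ pushes cost mass \emph{upward}—mass migrates toward larger costs and toward $(\circ,\infty)$—so the per-point cost masses are not monotone and MCT cannot be applied directly. To repair this I pass to the tail representation: because each $\mu_n$ is a full distribution, $A(\mu_n)$ is the expected cost, and
\[
A(\mu_n) = \int_0^\infty \Bigl(1 - \mu_n\bigl((\mathsf{Val} \cup \{\circ\}) \times [0,t]\bigr)\Bigr)\,dt.
\]
The second clause of $\sqsubseteq$ states exactly that $\mu_n\bigl((\mathsf{Val} \cup \{\circ\}) \times [0,t]\bigr)$ is non-increasing in $n$, so the integrand is non-decreasing in $n$ and hence $A(\mu_n)$ is non-decreasing. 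The step I expect to be most delicate is identifying the limit: one must use that $\mu_\infty$ is the \emph{least} upper bound, which is what pins its cost distribution to the pointwise infimum $\mu_\infty\bigl((\mathsf{Val} \cup \{\circ\}) \times [0,t]\bigr) = \inf_n \mu_n\bigl((\mathsf{Val} \cup \{\circ\}) \times [0,t]\bigr)$, rather than to something strictly smaller that would illegitimately park extra mass at infinite cost and inflate $A$. Granting this identity (again from the construction in \cref{lem:partialorder}), the integrands increase pointwise to $1 - \mu_\infty\bigl((\mathsf{Val} \cup \{\circ\}) \times [0,t]\bigr)$, and a final application of MCT in the variable $t$ gives $A(\mu_\infty) = \sup_n A(\mu_n) = \lim_n A(\mu_n)$, completing the argument. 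The only genuinely non-routine ingredients are this tail rewriting of $A$ and the appeal to leastness to fix its cumulative cost distribution; the rest is bookkeeping together with two invocations of monotone convergence.
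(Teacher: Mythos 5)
Your proposal is correct and follows essentially the same route as the paper's own proof: the paper likewise splits $h$ into the expected-cost part and the potential part, rewrites the cost part via the survival function $\int (1 - g_n^\circ(r))\,dr$ (your $\int_0^\infty (1 - F_n(t))\,dt$), uses the second clause of $\sqsubseteq$ for monotonicity in $n$, applies monotone convergence to both the integral and the sum over concrete pairs, and relies on the construction in \cref{lem:partialorder} to identify the limiting cumulative cost distribution. The only difference is cosmetic: the paper first extends the rational-indexed CDFs $f_n^\circ$ to the reals by right-limits before integrating, whereas you integrate the measure of $[0,t]$ directly.
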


Now we can strengthen the soundness theorem to capture both termination and non-termination.

\begin{theorem}[Soundness of AARA, improved]\label{the:soundness:improved}
  Let $\typingrel{\Gamma}{q}{e}{A}$ and $\valuerel{V}{\Gamma}$. Then
  \[
  \pot{V}{\Gamma} + q \ge \sum_{q_0} \interp{e}^V_{\Rightarrow}(\circ,q_0) \cdot q_0 + \sum_{(v_0,q_0) : v_0 \neq \circ} \interp{e}^V_{\Rightarrow}(v_0,q_0) \cdot (\pot{v_0}{A} + q_0).
  \]
\end{theorem}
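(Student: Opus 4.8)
The plan is to reduce the statement to a uniform finite-depth bound and then pass to the limit using the machinery already set up. First observe that the right-hand side is precisely $h(\interp{e}^V_{\Rightarrow})$ for the functional $h$ of \cref{lemma:dist bound}, and recall that $\interp{e}^V_{\Rightarrow} = \bigsqcup_{n \in \bbN} \mu_n$ where $\steprel{V}{e}{\mu_n}{n}$. Hence it suffices to prove the finite-depth claim
\[
\text{for all } n \in \bbN,\quad \steprel{V}{e}{\mu}{n} \implies h(\mu) \le \pot{V}{\Gamma} + q .
\]
Granting this, \cref{lemma:dist bound} applied with $M = \pot{V}{\Gamma} + q$ to the $\sqsubseteq$-increasing chain $\{\mu_n\}$ immediately gives $h(\interp{e}^V_{\Rightarrow}) \le \pot{V}{\Gamma} + q$, which is the theorem. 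This factoring deliberately pushes all the analytic content---passing a bound through a supremum taken in the \emph{non-pointwise} order $\sqsubseteq$, where $\omega$-continuity of $h$ is unclear---into \cref{lemma:dist bound}, leaving a purely finite induction that mirrors \cref{the:soundness}.

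I would prove the finite-depth claim by induction on $n$, with inversion on $\steprel{V}{e}{\mu}{n}$ and an inner induction on the typing derivation. The base case is \textsc{PE:Base}, where $\mu = \delta(\circ,0)$ and $h(\mu) = 0 \le \pot{V}{\Gamma} + q$ by non-negativity of potential and $q \ge 0$. The leaf and structural rules transfer essentially verbatim from the proof of \cref{the:soundness}, since their distributions carry no $\circ$-mass. The flip cases \textsc{DE:Flip} and \textsc{DE:FlipS} use that $h$ is \emph{linear} in $\mu$, so $h(p \cdot \mu_1 + (1-p)\cdot\mu_2) = p\,h(\mu_1) + (1-p)\,h(\mu_2)$; combining the two hypotheses with $q = p\cdot q_1 + (1-p)\cdot q_2$ and the identity $\pot{V}{\Gamma} = p\cdot\pot{V}{\Gamma_1} + (1-p)\cdot\pot{V}{\Gamma_2}$---which follows from $\share{\Gamma}{p\times\Gamma_1}{(1-p)\times\Gamma_2}$ together with \cref{lemma:split,lemma:scale} extended pointwise to contexts---closes the branching case.

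The genuinely new and hardest case is \textsc{PE:Let}, which is exactly where divergence is charged, so I expect it to be the main obstacle. I would split $h$ of the let-result distribution into the contribution of outcomes where $e_1$ diverges and those where $e_1$ terminates at some $(v_1,q_1)$ and $e_2$ is then run. For the terminating branches I would factor the inner sum over $(v_2,q_2)$ as $q_1 \cdot (\text{total mass of } \mu_{(v_1,q_1)}) + h(\mu_{(v_1,q_1)})$; the crucial point is that the partial-evaluation rules force each $\mu_{(v_1,q_1)}$ to be a \emph{full} distribution of total mass $1$, so this collapses cleanly to $q_1 + h(\mu_{(v_1,q_1)})$ uniformly over terminating and diverging $e_2$-outcomes. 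Applying the hypothesis for $e_2$ (typed at $B$ with constant potential $p$, per \textsc{L:Let}) bounds $h(\mu_{(v_1,q_1)})$ by $\pot{V}{\Gamma_2} + \pot{v_1}{\tau} + p$, and the identity $\pot{v_1}{\annoA{\tau}{p}} = \pot{v_1}{\tau} + p$ lets the terminating and diverging contributions recombine into $h$ of $e_1$'s distribution at type $\annoA{\tau}{p}$, plus a slack term $P \cdot \pot{V}{\Gamma_2}$ where $P \le 1$ is the termination probability of $e_1$.

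Finishing the let case is where the restriction to monotone resources silently does its work: since $P \le 1$ and $\pot{V}{\Gamma_2} \ge 0$, the slack is nonnegative and may simply be \emph{discarded}, so bounding $e_1$'s contribution by its hypothesis $\pot{V}{\Gamma_1} + q$ yields $h(\mu') \le \pot{V}{\Gamma_1} + \pot{V}{\Gamma_2} + q = \pot{V}{\Gamma} + q$ using $\Gamma = \Gamma_1,\Gamma_2$. Intuitively, potential left unused on a short-circuited or diverging branch is thrown away harmlessly precisely because potential is nonnegative and resources only accumulate---this is the step that would break for non-monotone resources and is the conceptual crux. The one detail I would check most carefully is the bookkeeping that routes $e_2$-divergence (mass landing on $(\circ, q_1+q_2)$) into the $\circ$-component of $h(\mu')$ with the correct additive cost, since that is what validates the ``$q_1 + h(\mu_{(v_1,q_1)})$'' factorization across both kinds of $e_2$-outcome.
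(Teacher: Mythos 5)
Your proposal is correct and takes essentially the same approach as the paper's proof: reduce the theorem to a finite-depth claim via \cref{lemma:dist def,lemma:dist bound} (with $M = \pot{V}{\Gamma} + q$), then induct on the depth index with inversion and an inner induction on the typing derivation, handling \textsc{PE:Let} by exploiting that each $\mu_{(v_1,q_1)}$ is a \emph{full} distribution (so the cost $q_1$ distributes with total mass one) and by absorbing the nonnegative slack $(1-P)\cdot\pot{V}{\Gamma_2}$, and handling the flip rules by linearity of $h$ together with the sharing/scaling lemmas. The only differences are presentational---you apply the $e_2$ hypothesis before dealing with the slack, while the paper chains inequalities starting from $\pot{V}{\Gamma}+q$---so nothing substantive separates the two arguments.
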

\begin{proof}
  By \cref{lemma:dist def,lemma:dist bound}
  it suffices to prove for every $n \in \bbN$, if $\steprel{V}{e}{\mu}{n}$, then
  \[
  \pot{V }{ \Gamma} + q \ge \sum_{q_0} \mu(\circ,q_0) \cdot q_0 + \sum_{(v_0,q_0) } \mu(v_0,q_0) \cdot (\pot{v_0 }{ A} + q_0).
  \]
  Again proved by induction on $n$ with inversion on $\steprel{V}{e}{\mu}{n}$, then $\typingrel{\Gamma}{q}{e}{A}$ inner induction.
\end{proof}

\begin{corollary}\label{cor:ast}
  Let $\typingrel{\Gamma}{q}{e}{A}$ and $\valuerel{V}{\Gamma}$.
  If a program $e$ is instrumented with ticks that account for evaluation steps, then $e$ terminates with probability one, i.e., $\interp{e}^V_{\Rightarrow}(\circ,q_0) = 0$ for all $q_0 \in \bbQ_{\ge 0} \cup \{\infty\}$.
\end{corollary}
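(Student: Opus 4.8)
The plan is to deduce almost-sure termination from the improved soundness theorem (\cref{the:soundness:improved}) by exploiting that tick-per-step instrumentation equates the accumulated cost of an execution with its number of evaluation steps. The theorem supplies
\[
\pot{V}{\Gamma}+q \;\ge\; \sum_{q_0}\interp{e}^V_{\Rightarrow}(\circ,q_0)\cdot q_0 + \sum_{(v_0,q_0):v_0\neq\circ}\interp{e}^V_{\Rightarrow}(v_0,q_0)\cdot(\pot{v_0}{A}+q_0),
\]
whose left-hand side is a finite rational. Since every term on the right is non-negative, the divergence contribution $\sum_{q_0}\interp{e}^V_{\Rightarrow}(\circ,q_0)\cdot q_0$ is in particular finite. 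The goal is to show that each $\interp{e}^V_{\Rightarrow}(\circ,q_0)$ vanishes, and I would split this according to whether $q_0$ is finite or $q_0=\infty$.

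For finite $q_0$, the crucial observation is that in the limit distribution $\interp{e}^V_{\Rightarrow}=\bigsqcup_n\mu_n$ the mass carried by $\circ$ records genuinely non-terminating executions, and under tick-per-step instrumentation any such execution performs infinitely many ticks and therefore accumulates cost $\infty$. Concretely, the mass $\mu_n(\{\circ\}\times[0,q])$ is non-increasing in $n$ — this follows from \cref{lemma:dist def} together with the second clause of $\sqsubseteq$ established in \cref{lem:partialorder} — and it becomes $0$ once $n$ exceeds the derivation depth that an execution of cost at most $q$ can reach. Passing to the limit then gives $\interp{e}^V_{\Rightarrow}(\{\circ\}\times[0,q])=0$ for every finite $q$, hence $\interp{e}^V_{\Rightarrow}(\circ,q_0)=0$ for all $q_0\in\bbQ_{\ge 0}$.

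It follows that all divergent mass sits at $q_0=\infty$, so the finite quantity $\sum_{q_0}\interp{e}^V_{\Rightarrow}(\circ,q_0)\cdot q_0$ reduces to $\interp{e}^V_{\Rightarrow}(\circ,\infty)\cdot\infty$. With the convention $0\cdot\infty=0$, any positive value of $\interp{e}^V_{\Rightarrow}(\circ,\infty)$ would make this product $\infty$ and contradict the finite bound $\pot{V}{\Gamma}+q$; hence $\interp{e}^V_{\Rightarrow}(\circ,\infty)=0$ as well. Combining the two cases yields $\interp{e}^V_{\Rightarrow}(\circ,q_0)=0$ for every $q_0\in\bbQ_{\ge 0}\cup\{\infty\}$, i.e., $e$ terminates with probability one.

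The step I expect to be the main obstacle is the finite-cost case, specifically pinning down the correspondence between accumulated cost and derivation depth under the instrumentation. One must argue that, because a unit tick is charged at every evaluation step, a partial derivation of depth $n$ whose leaf is $\circ$ and whose recorded cost is at most $q$ forces $n$ to be bounded in terms of $q$, so that the $\circ$-mass at cost at most $q$ vanishes for large $n$. This reduces to checking that the instrumentation makes cost grow without bound along any non-terminating path and that this interacts correctly with the $\sqsubseteq$-limit of \cref{lem:partialorder}; once that monotone growth is in hand, both the finite- and infinite-cost conclusions are immediate consequences of \cref{the:soundness:improved}.
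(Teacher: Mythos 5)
Your proposal is correct and follows essentially the same route as the paper's own proof: split on finite versus infinite $q_0$, rule out divergent mass at finite cost because tick-per-step instrumentation forces cost to grow without bound along non-terminating executions, and kill the mass at $q_0=\infty$ via the finite bound of \cref{the:soundness:improved}. Your treatment of the finite case (monotonicity of the $\circ$-cumulative mass under $\sqsubseteq$ and the depth--cost correspondence) merely makes rigorous what the paper asserts in one informal sentence, so it is a welcome elaboration rather than a different argument.
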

\begin{proof}
  For all $q_0 \in \bbQ_{\ge 0}$, the probability $\interp{e}^V_{\Rightarrow}(\circ,q_0)$ is zero because if an execution does not terminate, the cost will keep increasing.
  For the case where $q_0 = \infty$, by \cref{the:soundness:improved}, $\interp{e}^V_{\Rightarrow}(\circ,\infty) \cdot \infty$ is bounded by $\pot{V}{\Gamma}+q < \infty$, thus the probability $\interp{e}^V_{\Rightarrow}(\circ,\infty)$ must be zero.
\end{proof}

\section{Implementation and Examples}
\label{sec:goat}

In this section we present some non-trivial probabilistic models which our implementation pRaML can handle in the same manner as described in previous sections. We follow up with a collection of experimental benchmarks from typing variants of our examples, and other examples from literature.

For these complex examples, we use our implementation pRaML of the probabilistic AARA type system extended to \textit{multivariate polynomial} potential functions with user-defined data types. While the potential functions supported in linear AARA are already multivariate, as each addend can depend on a different input size, the term \textit{multivariate} in the setting of potential functions refers to each addend depending on \textit{products} of input sizes - and in this case, also products of symbolic probabilities. With user-defined data types, those sizes can also measure the number of particular constructor types. We also include additional support for symbolic probabilities by allowing complementation (i.e., subtraction from 1). Extending the probabilistic type system laid out here to these domains does not involve significant conceptual changes; the potential function extensions - described in \cite{POPL:HAH11} and \cite{POPL:HDW17} - are orthogonal to the new probabilistic operation. 

\cref{tab:benchmarks} shows some analysis data given by pRaML on models described below and some examples from literature. It displays the number of linear constraints generated by typing the program using resource polynomials at a fixed degree for all programs of the same class, as well as how fast pRaML can complete type inference on consumer hardware. The literature examples include some example probabilistic loop code and conditional sampling model \cite{gordon2014probabilistic}, the simulation of a fair die with a fair coin using a Markov chain \cite{knuth1976algorithms}, a probabilistic variant of example code demonstrating quadratic resource usage  \cite{CAV:CHR17}, and the program \code{miner} \cite{PLDI:NCH18}. The final example, \code{fill} and \code{consume}, fills a list with probability values of $\sfrac 1 2$ or $\sfrac 1 3$ randomly according to a symbolic probability $p$, then iterates over the list, flipping a coin biased by each probability, and paying cost 1 for each heads flip.

Random walks form the core of stochastic algorithms and simulations. The Internet is so large that the tractability of measuring its contents is real concern, and it can be solved by random walks \cite{bar2008random}. Modeling problems from various fields also use random walks, ranging from economics \cite{meese1983empirical}, to biology \cite{codling2008random}, to ecology \cite{visser1997using}, to astrophysics \cite{macleod2010modeling}, and beyond. However, many random walks are non-trivial to analyze, which obscures properties like code efficiency from a non-expert programmer, and obscures stochastic model properties from their users. Even knowing the bounds of complex random walk first, the bounds can be nontrivial to verify by hand. Nonetheless, AARA can find them quickly, giving non-experts automatic access to expert bounds.

\begin{figure}
  \centering
  \begin{subfigure}[b]{0.48\textwidth}
\begin{lstlisting}
let rec gr Alice Bob =
  match Alice with
  | [] -> ()
  | ha::ta ->
    match Bob with
    | [] -> ()
    | hb::tb ->
      let _ = tick 1 in
      match flip 0.5 with
      | H -> gr ta (ha::Bob)
      | T -> gr (hb::Alice) tb
\end{lstlisting}
\caption{Gambler's ruin\label{fig:gambler}}
  \end{subfigure}
  \begin{subfigure}[b]{0.48\textwidth}
\begin{lstlisting}
let rec goat below at above =
	let _ = tick 1 in
	match at with
	| Lichen -> match flip 0.75 with
		| H -> match below with
			| [] -> ()
			| hd::tl -> goat tl hd (at::above)
		| T -> match above with
			| [] -> ()
			| hd::tl -> goat (at::below) hd tl
	| Grass -> match flip 0.5 with
		| H -> match below with
			| [] -> ()
			| hd::tl -> goat tl hd (at::above)
		| T ->  match above with
			| [] -> ()
			| hd::tl -> goat (at::below) hd tl
\end{lstlisting}
\caption{The life expectancy of a goat\label{fig:goat}}
  \end{subfigure}
  \caption{Implementations probabilistic programs in pRaML.}
\end{figure}

\begin{example}[Gambler's Ruin] 
There is an old problem in probability called the \textit{Gambler's Ruin}. \cref{fig:gambler} shows an implementation. It is set up so that Alice and Bob continually bet one dollar against each other on the results of a coin-flip until one player runs out of money. This is essentially a 2-sided random walk. If the coin is fair, Alice starts with $A$ dollars and Bob starts with $B$ dollars, then this series of bets is expected to take $AB$ time. Our multivariate implementation finds this bound exactly.

\end{example}

\begin{figure}[b]
\vspace{-10pt}
  \centering
  \begin{subfigure}[b]{0.33\textwidth}
\begin{lstlisting}[xleftmargin=0pt]
let reprice price =
  match flip 0.6 with
  | H ->
    match price with
    | [] -> []
    | _::t -> t
  | T -> ()::price
\end{lstlisting}
\caption{\label{fig:exists}}
  \end{subfigure}
  \begin{subfigure}[b]{0.33\textwidth}
\begin{lstlisting}[xleftmargin=0pt]
let rec buy price =
  match price with
  | [] -> ()
  | _::t ->
    let _ = tick 1 in
    buy t
\end{lstlisting}
\caption{\label{fig:buy}}
  \end{subfigure}
  \begin{subfigure}[b]{0.32\textwidth}
\begin{lstlisting}[xleftmargin=0pt]
let rec trade price time =
  match time with
  | [] -> ()
  | _::t ->
    let () = match flip 1/3 with
    	| H -> buy price 
    	| T -> ()
    in
    trade (reprice price) t
\end{lstlisting}
\caption{\label{fig:stock}}
  \end{subfigure}
  \caption{Stock buying}
 \end{figure}

\begin{example}[The Life Expectancy of a Goat]\label{exa:mountain-goat}

Consider modeling the following scenario: A mountain goat lives high up in the Rocky Mountains, eating grasses and lichens from the rocks. Depending on the food it find abundant, it either moves up or down the mountain. When it finds only lichens, it moves down with probability 75\% in an attempt to find better food sources. When it finds grasses, it moves with equal probability in either direction. However, if the goat moves too far down the mountain, it passes the treeline and gets hunted by wolves. On the other hand, if the goat tries to go up the mountain when at the very top, it falls off a cliff. Given some distribution of grasses and lichens on the mountain, and where the goat starts, what is the expected lifetime of the goat?

This is nontrivial to analyze by hand, but easy to code with the function \code{goat} in \cref{fig:goat}.
Then pRaML can find a cost bound. Letting $B$ be the distance from the goat to the treeline below, $G_A$ be the number of grassy areas above the goat, and $G$ be the total number of grassy areas, the expected lifetime is bounded above by $(B+1)(2(G+1)-G_B)$. This bound is rather complex, but its generality reveals some interesting cost dependencies. For instance, the derived bound is independent of the actual distance to the top of the mountain. It also makes it easy to get a sense of cost behaviour for particular cases: If the whole mountain is covered in lichen, then the expected lifespan is $2(B+1)$, in line with the goat's expected movement of half-a-space down the mountain per iteration. On the other hand, if the mountain is all grassy, then the lifetime more like the stopping time of the Gambler's Ruin experiment. 

\cref{tab:benchmarks} lists the analysis data for many different movement probabilities for varying amounts of plants. There we also use $A$ to represent the distance to the top of the mountain. 
\end{example}

\begin{table}\footnotesize
\caption{Experimental data of typing with pRaML.\label{tab:benchmarks}}
\begin{center}
\begin{tabular}{ c c c c }
 \hline
 Program description & Bound & \#Constraints & Time (in sec.) \\ 
 \hline
\code{goat} with $\frac 1 2, \frac 3 4$ & $(B+1)(2(G+1)-G_B)$ & 2084  & 0.15 \\  
\code{goat} with $ \frac 2 3, \frac 3 4$ & $3B+3$ & 2084 & 0.14 \\
 \code{goat} with $\frac 1 2, \frac 2 3, \frac 3 4$ & $(B+1)(2(G+1.5)-G_B)$ & 5336  & 0.25 \\
 \code{goat} with $\frac 1 2, \frac 3 5, \frac 2 3, \frac 3 4$ &$(B+1)(2(G+2.5)-G_B)$  & 10996 & 1.95\\
 \code{trade} with $\frac 3 5, \frac 1 3$ & $\frac 1 {15} T^2 + \frac 1 3 TP + \frac 4 {15} T$ & 157 & 0.04\\
 \code{trade} with $\frac 3 5, 1$ & $\frac 1 {5} T^2 + TP + \frac 4 5 T$ & 157 & 0.03\\
  \code{trade} with $\frac 2 5, 1$ & $\frac 3 {10} T^2 + TP + \frac 7 {10} T$ & 157 & 0.03\\
  \code{trade} with $\frac 2 5, \frac 1 3$ & $\frac 1 {10} T^2 +  \frac 1 3 TP + \frac 7 {30} T$ & 157 & 0.04\\
 probabilistic loop Ex 3 \cite{gordon2014probabilistic} & $\sfrac 4 3$ probability & 61 & 0.01\\
 bayes sampling Ex 6 \cite{gordon2014probabilistic} & $\sfrac 3 5$ probability & 112 & 0.01\\
 die simulation from coin \cite{knuth1976algorithms} & $\sfrac 1 6$ per die face & 5731 & 0.33\\
 random no-op \code{nested} variant \cite{CAV:CHR17} & $M^2+M$ & 205 & 0.03\\
 \code{miner} from \cite{PLDI:NCH18} & $\sfrac{15}2 M$ & 31 & 0.01\\
 \code{fill} and \code{consume} & $(\frac 1 3 + \frac p 6)M$  & 633 & 0.11
 \\ \hline
\end{tabular}
\end{center}
\end{table}

\begin{example}[Stock Buying]\label{exa:stock-buying}
Stock prices may behave like a random walk. In \cref{fig:stock} we simulate a buyer occasionally buying some stock over time, similarly to \cite{PLDI:NCH18}. Analysis with pRaML finds that the expected expenditure is $\frac 1 {15} T^2 + \frac 1 3 TP + \frac 4{15} T$, where $T$ is the time span and $P$ is the starting stock price. Results for other parameters for the price's walk and buy rate, respectively, may be found under \code{trade} in \cref{tab:benchmarks}.
\end{example}

\section{Applications}
In this section, we discuss two application domains of pRaML: analysis of discrete distributions (\cref{sec:sample}) and estimation of average-case cost (\cref{sec:model}).
\label{sec:app}
\subsection{Analysis of Discrete Distributions}
\label{sec:sample}

Although the only probabilistic fragment introduced by our programming language is probabilistic branching, we are able to implement a broad suite of discrete probability distributions and analyze their properties in our system.
In this section, we demonstrate how our tool can be used to not only verify that a program implements the desired distribution, but also analyze \emph{sample complexity} of the program, i.e., the expected number of flips consumed by the program to obtain a sample.
Sampling from probability distributions is a fundamental activity in many fields, e.g., Bayesian inference on probabilistic programs~\cite{CoRR:abs/stat/1301.1299,misc:dippl}, and the efficiency of sampling algorithms becomes increasingly important because Monte Carlo methods for probabilistic inference have a trend of requiring billions of random samples per second~\cite{misc:Djuric19}.
Our work provides an approach for understanding sample complexity of discrete distributions.

\paragraph{Case study: Discrete distribution generating (DDG) trees}
Recent work provides a universal representation of sampling algorithms for finite supports as \emph{discrete distribution generating} (DDG) binary trees~\cite{POPL:SFR20}.
The idea is to implement discrete distributions by only \emph{fair} coin flips.
Given a DDG binary tree $T$, the sample algorithm starts at the root of $T$, then repeatedly flips a fair coin, takes the left (resp., right) branch if the coin shows heads (resp., tails) until it reaches a leaf node labeled with an outcome from the support of the distribution.
Note the tree $T$ may contain back edges, i.e., the algorithm goes back to an ancestor after taking a branch of the current non-leaf node.
Back-edges are crucial for implementing non-dyadic probabilities, and they make the running time of the sampling algorithm nontrivial because the algorithm can have non-terminating executions.

\cref{fig:ddg-trees} presents two sample algorithms modified from an example in prior work~\cite{POPL:SFR20}.
Both programs are supposed to implement a distribution over $\{\code{Red},\code{Black}\}$, and return $\code{Red}$ with probability $0.3$, otherwise return $\code{Black}$.
First, we verify that both programs correctly implement the target distribution.
We achieve this by inserting ticks such that the program has one unit of cost when returning $\code{Red}$.
Our tool then derives that the expected cost for both programs is bounded by $0.3$ from above.
Meanwhile, we insert ticks in original programs where the program returns \code{Black} instead of \code{Red}, and our tool infers that the expected cost for both programs is at most $0.7$.
Because the expectation of an indicator function for an event $E$ equals to the probability of $E$, i.e.,
\[
\bbE( \lambda \omega. [\text{$E(\omega)$ is true} ]) = \sum_{\omega} \bbP(\omega) \cdot [\text{$E(\omega)$ is true}] = \sum_{\omega \in E} \bbP(\omega) = \bbP(E),
\]
we conclude that $\bbP(\text{result is \code{Red}}) \le 0.3$ and $\bbP(\text{result is \code{Black}}) \le 0.7$, thus the programs implement the desired distribution, by the fact that probabilities sum up to one.

Then, we study the expected performance of the two sample algorithms in \cref{fig:ddg-tree-fast,fig:ddg-tree-slow}.
We instrument the two programs with ticks to count the number of probabilistic choices made during the execution.
Our expected cost analysis successfully derives an upper bound for both programs: $2.0$ for \cref{fig:ddg-tree-fast}, and $4.2$ for \cref{fig:ddg-tree-slow}.
By a manual analysis, we also verify that these bounds are tight.
The result suggests that \cref{fig:ddg-tree-fast} is better than \cref{fig:ddg-tree-slow}.
We leave automatic tightness checking (e.g., by integrating a lower-bound analysis~\cite{PLDI:WFG19,WangHR20}) for future work.

\begin{figure}
\centering
\begin{subfigure}[b]{0.3\textwidth}
\begin{lstlisting}
let sample_fast () = 
  let rec aux () =
    let _ = tick 1 in
    match flip 0.5 with
    | H ->let _ = tick 1 in
      match flip 0.5 with
      | H ->let _ = tick 1 in
        match flip 0.5 with
        | H ->let _ = tick 1 in
          match flip 0.5 with
          | H -> aux ()
          | T -> Red
        | T -> Black
      | T -> Black
    | T -> Red
  in
  let _ = tick 1 in
  match flip 0.5 with
  | H -> aux ()
  | T -> Black
\end{lstlisting}
\caption{\label{fig:ddg-tree-fast}}
\end{subfigure}
\begin{subfigure}[b]{0.3\textwidth}
\begin{lstlisting}
let rec sample_slow () =
	let _ = tick 1 in
  match flip 0.5 with
  | H ->let _ = tick 1 in
    match flip 0.5 with
    | H ->let _ = tick 1 in
      sample_slow ()
    | T ->let _ = tick 1 in
      match flip 0.5 with
      | H -> 
        sample_slow ()
      | T ->let _ = tick 1 in
        match flip 0.5 with
        | H -> Red
        | T -> Black
  | T ->let _ = tick 1 in
    match flip 0.5 with
    | H ->let _ = tick 1 in
      match flip 0.5 with
      | H -> Black
      | T -> Red
    | T -> Black
\end{lstlisting}
\caption{\label{fig:ddg-tree-slow}}
\end{subfigure}
\begin{subfigure}[b]{0.35\textwidth}
\begin{lstlisting}
let rec negative_binomial p l =
  match l with
  | [] -> []
  | _::l' ->
    let _ =
      consume p : prob{0}{1}
    in
    match flip p with
    | H ->
      ()::(negative_binomial p l)
    | T ->
      negative_binomial p l'
\end{lstlisting}
\caption{\label{fig:neg-binomial}}
\end{subfigure}
\caption{(a) and (b) are samplers that return \code{Red} with probability $0.3$ or \code{Black} with probability $0.7$.  (c) is a program for negative binomial distributions.\label{fig:ddg-trees}}
\end{figure}

\paragraph{Case study: Negative binomial distributions}
Beyond distributions with fixed, finite supports, our system is also capable of analyzing discrete distributions with infinite supports and symbolic probabilities.
\cref{fig:neg-binomial} gives an implementation of negative binomial distributions; it returns a unit list whose length is the number of heads in a series of independent coin flips with probability $p$ before $|\ell|$ number of tails occurs.
The $\code{consume}$ expression is used to specify value-\emph{dependent} costs, which we explain later.

In this example, we want to study the program's sample complexity with respect to $p$ and $\ell$.
At first glance, the task seems impossible for our system, because while our AARA-based approach is able to derive multivariate-polynomial bound, the expected number of flips for negative binomial distributions involves fractions like $\frac{1}{1-p}$, which is not expressible in our system.
Nevertheless, we come up with a workaround that scales all the costs in program by a factor of $(1-p)$ to get rid of the resource bound's denominator.
This is achieved by the \code{consume} expression. Intuitively, $\mathsf{consume}~x:\tau$ specifies a cost that equals to the potential of the value of $x$ with respect to $\tau$.
Recall that $\pot{p}{\probA{q_H}{q_T}} \defeq p \cdot q_H + (1-p) \cdot q_T$;
thus, the $\code{consume}$ expression in the program introduces a cost of $(1-p)$.
Our type system succeeds in finding a linear bound $|\ell|$ on the expected number of flips.
Taking the scale factor into account, we conclude that the expected sample complexity for negative binomial distributions is at most $\frac{|\ell|}{1-p}$.

\paragraph{More examples}
A summary of all the case studies in the analysis of discrete distributions carried out in our system can be found in \cref{tab:distribution-analysis} .
All the analyses were processed in around one second.
The fractional bounds are derived using the scaling technique mentioned above.
For distributions $\code{dist}$ with integer supports, we also create a variant $\code{dist}_\bbE$ that specifies the value of the output sample as the cost. For such a case, our tool essentially performs a \emph{first-moment} analysis that computes the \emph{mean} value of the distributions.

\begin{table}\footnotesize
  \caption{Examples for sample-complexity or first-moment analysis of discrete distributions. In the bounds, $p$ is the value of the first probability argument, $n$ is the length of the first list argument, and $p_i$'s are the probability-valued elements in the first list argument.\label{tab:distribution-analysis}}
  \begin{tabular}{lllH}
    \hline
    Function & Description & Inferred Bound & Time \\ \hline
    $\code{sample\_fast} : \arrT{\unitT}{\mathsf{red\_or\_black}}$ & \cref{fig:ddg-tree-fast} & $2.00$ & 0.01s \\
    $\code{sample\_slow} : \arrT{\unitT}{\mathsf{red\_or\_black}}$ & \cref{fig:ddg-tree-slow} & $4.20$ & 0.01s \\
    $\code{dice} : \arrT{\unitT}{ \mathsf{dice} } $ & A fair dice & $3.67$ & 0.04s \\
    $\code{von\_neumann} : \arrT{\probT{}}{\boolT}$ & Make a fair coin from a biased one & $\frac{1}{p(1-p)}$ & 0.04s \\
    $\code{binomial} : \arrT{\probT{}}{\arrT{\li{\unitT}}{\li{\unitT}}}$ & Binomial distribution & $n$ & 0.18s \\
    $\code{binomial}_{\bbE} : \arrT{\probT{}}{\arrT{\li{\unitT}}{\li{\unitT}}}$ & Binomial distribution; output as cost & $p \cdot n$ & 0.18s \\
    $\code{geometric} : \arrT{\probT{}}{\li{\unitT}}$ & Geometric distribution & $\frac{1}{p}$ & 0.08s \\
    $\code{geometric}_\bbE : \arrT{\probT{}}{\li{\unitT}}$ & Geometric distribution; output as cost & $\frac{1-p}{p}$ & 0.08s \\
    $\code{poisson\_binomial} : \arrT{\li{\probT{}}}{\li{\unitT}}$ & Poisson binomial distribution & $n$ & 0.05s \\
    $\code{poisson\_binomial}_\bbE : \arrT{\li{\probT{}}}{\li{\unitT}}$ & Poisson binomial distribution; output as cost & $\sum_{1 \le i \le n} p_i$ & 0.05s \\
    $\code{negative\_binomial} : \arrT{\probT{}}{\arrT{\li{\unitT}}{\li{\unitT}}}$ & Negative binomial distribution & $\frac{n}{1-p}$ & 0.06s \\
    $\code{negative\_binomial}_\bbE : \arrT{\probT{}}{\arrT{\li{\unitT}}{\li{\unitT}}}$ & Negative binomial distribution; output as cost & $\frac{p \cdot n}{1 - p}$ & 0.06s \\ \hline
  \end{tabular}
\end{table}

\subsection{Estimation of Average Case Cost}
\label{sec:model}

Understanding resource requirements of computer programs is important for software engineering.
Much of the research has been focused on analyzing \emph{worst-case} resource usage and generating an input that exhibits the \emph{worst-case} performance, e.g.~\cite{ISSTA:KNP18,POPL:WH19}.
However, in practice, software performance can be sensitive to the \emph{distribution} of the actual inputs.
For example, although quicksort has a worst-case quadratic time complexity, it usually outperforms many other sorting algorithms (e.g., insertion sort) on randomly generated inputs.
Understanding the \emph{performance distribution} induced by the real-world \emph{input distribution} can then help carry out important tasks in software development such as performance evaluation and algorithm selection.
In this section, we illustrate how our tool can be used to characterize performance distributions of deterministic programs by their \emph{average-case} resource usage, through a combination with \emph{profiling} techniques.

\paragraph{Program tranformation} Profiling techniques, such as \emph{edge} profiling and \emph{path} profiling, have been used for speculative optimization (especially of branch conditions)~\cite{PLDI:Ramalingam96,ASPLOS:DS06}, symbolic execution~\cite{ICSE:FPV13,FSE:FPV14}, and performance analysis~\cite{ICSE:CLL16}.
The idea is to approximate a deterministic branch condition as a probabilistic choice, whose probability is determined by counting \emph{frequencies} of the two branches executed by a program on a collection of real-world inputs.
For example, if the then-branch $e_1$ of the expression $\condC{x}{e_1}{e_2}$ is executed 90\% of the time, then we transform the conditional with a probabilistic choice $\flipC{0.9}{e_1}{e_2}$.
Benefits of such profiling-based program transformation are:
(i) it does not require complicated analyses to account for the conditional probability of branches, 
(ii) it provides insights how the input distribution influences the control-flow of a program via an empirical probabilistic model, and
(iii) it can accrue profiling information from samples with small sizes but still generalize its average-case cost bounds to inputs with large sizes.

We have implemented an interpreter for the deterministic fragment of our programming language, which executes programs with concrete inputs and collects profiling information including frequencies of control-flow transitions.
We then use the profiling information to transform branch conditions to proper probabilistic choices.
Note that we have also implemented a statistical independence test to ensure that branch probabilities are constants, rather than dependent on structural features (e.g., lengths of lists) of the input.
Then we pass the transformed program to our type-based expected cost analysis to obtain a symbolic bound as the average-case estimation for the cost of the original program.

\paragraph{Case study: Sorting nearly-sorted lists}
It is known that comparison-based sorting algorithms cannot beat the $\Theta(n \log n)$ time complexity for input lists of length $n$.
However, if the sorting function is intended to process \emph{nearly-sorted} data---where every element may \emph{on average} be misplaced by at most some constant number $k$ of positions from the correct sorted order---then some sorting algorithms, e.g., \emph{insertion sort}, can achieve linear time complexity.
\cref{fig:isort-original} presents an implementation for insertion sort that uses ticks to count the number of comparisons.
Our tool derives that the worst-case cost bound for $\code{isort}(\ell)$ is $\binom{|\ell|}{2}$, which is quadratic in the length $|\ell|$ of the list $\ell$.
The only conditional expression occurs when the \code{insert} function compares the inserted element $x$ and the head $y$ of the sorted list $\ell$, and it recurses on the tail of $\ell$ if $x>y$.
Since an element may be misplaced by $k$ positions on average, intuitively, there should on average be $k$ recursions when inserting an element, which means that the condition $x>y$ evaluates to true with a constant probability $\frac{k-1}{k}$.

In our experiments, our tool managed to detect from a set of nearly-sorted lists that the conditional expression in \code{insert} can be approximated by a probabilistic choice with a constant probability.
\cref{fig:isort-transformed} illustrates one case where the branching probability is about $0.9$.
Our tool derives that the expected cost bound for $\code{isort'}(\ell)$ is $10 \cdot |\ell|$, which is linear in the length $|\ell|$ of the list $\ell$.
The linear bound also reflects that the list $\ell$ should be nearly sorted, in the sense that every element in $\ell$, on average, is misplaced by 10 positions from the correct sorted order.

\begin{figure}
\centering
\begin{subfigure}{0.48\textwidth}
\begin{lstlisting}
let rec insert x l =
  match l with
  | [] -> [x]
  | y::ys ->
    let _ = tick 1.0 in
    match (x > y) with
    | true -> y::(insert x ys)
    | false -> x::y::ys

let rec isort l =
  match l with
  | [] -> []
  | x::xs ->
    insert x (isort xs)
\end{lstlisting}    
\caption{Original\label{fig:isort-original}}
\end{subfigure}
\begin{subfigure}{0.48\textwidth}
\begin{lstlisting}
let rec insert' x l =
  match l with
  | [] -> [x]
  | y::ys ->
    let _ = tick 1.0 in
    match flip 0.9 with
    | H -> y::(insert' x ys)
    | T -> x::y::ys

let rec isort' l =
  match l with
  | [] -> []
  | x::xs ->
    insert' x (isort' xs)
\end{lstlisting}
\caption{Transformed\label{fig:isort-transformed}}
\end{subfigure}
\caption{Average-case cost estimation for insertion sort on nearly-sorted inputs\label{fig:insertion-sort}}
\end{figure}

\paragraph{Case study: Short-circuit Boolean interpretation} 

When implementing a compiler, one usually must decide how to interpret Boolean expressions. Most commonly, the decision is made to \textit{short-circuit} the \code{and} and \code{or} connectives. That is, if the first term determines the whole expression - \code{false} for \code{and} or \code{true} for \code{or} - then one skips evaluation of the second. Programmatically, this can be implemented with conditionals as in the following code for \code{interpret} in \cref{fig:shortc-original}. 

In the worst case, the code in \cref{fig:shortc-original} must iterate over every node of its input Boolean expression tree, which is exactly the non-probabilistic bound given by our tool. Specifically, letting $C$ be the number of constants, $B$ the number of binary connectives, and $N$ the number of negations, the bound is $C+B+N$. This is the same cost bound as naively evaluating every sub-expression, so it is unclear what value short-circuiting provides. However, if the Boolean constants used are uniformly random, one finds that the branching probability at each conditional can be approximated by a constant: about $0.5$ probability for each branch. Converting the code into \code{interpret'}, our tool now finds a better expected cost bound of $1.5B+N$. Because $C$ is always equal to $B+1$, this is a strictly better cost bound. 

\begin{figure}
\centering
\begin{subfigure}[b]{0.48\textwidth}
\begin{lstlisting}
let rec interpret exp =
  let _ = tick 1 in
  match exp with
  | True -> true
  | False -> false
  | Or (a,b) ->
      match interpret a with
      | true -> true
      | false -> interpret b
  | And (a,b) ->
      match interpret a with
      | true -> interpret b
      | false -> false
  | Neg a -> not (interpret a)
\end{lstlisting}    
\caption{Original\label{fig:shortc-original}}
\end{subfigure}
\begin{subfigure}[b]{0.48\textwidth}
\begin{lstlisting}
let rec interpret' exp =
  let _ = tick 1 in
  match exp with
  | True -> true
  | False -> false
  | Or (a,b) ->
      let _ = interpret' a in
      match flip 0.5 with
      | H -> true
      | T -> interpret' b
  | And (a,b) ->
      let _ = interpet' a in
      match flip 0.5 with
      | H -> interpret' b
      | T -> false
  | Neg a -> not (interpret' a)
\end{lstlisting}
\caption{Transformed\label{fig:shortc-transformed}}
\end{subfigure}
\caption{Average-case cost estimation for short-circuiting Boolean interpretation across uniform inputs\label{fig:shortc}}
\end{figure}

\paragraph{Case study: Sequential insertions in a hash table}
We implement a program in our language to model the hash table function from prior work on worst-case analyisis~\cite{ISSTA:KNP18,POPL:WH19}.
This is a complicated program where each key in the hash table is a string of length 8 and the hash function is DJBX33A from a PHP implementation.
The resource model is defined as the number of hash collisions.
In the worst case, our system derives that the number of collisions is bounded by $\binom{n}{2}$ where $n$ is the number of insertions.
The worst-case quadratic bound makes sense because one may construct a list of different strings with the same hash key.
However, if the hash table is used in a setting where security vulnerabilities like Denial-of-Service are not crucial and the inputs are sufficiently random, then the quadratic bound is not meaningful because it is usually assumed that an insertion into a hash table takes constant time.

In our experiments, from a set of randomly generated strings, our tool found out that both the probability that two input strings have the same hash key---and the probability that two input strings with the same hash key are different---are small constants.
Our tool then derives $0.11 \cdot n$ as an expected cost bound for the transformed hash-table program with $n$ insertions, which indicates that the number of hash collisions should be linear in the number of insertions in practice.


\section{Related Work}
\label{sec:related-work}

We discuss the most-closely related work on expected cost analysis of
probabilistic programs in \cref{sec:intro}. Other related work
includes cost analysis of deterministic programs and other (type-based)
analyses of probabilistic programs.

\vspace{-6pt}
\paragraph{Cost analysis for deterministic programs}
Automatic and semiautomatic resource bound analysis for deterministic programs has been extensively studied.
Our work is based on AARA, which was initially introduced~\cite{POPL:HJ03} to automatically derive linear heap-space bounds for first-order functional programs.  
AARA has been extended to polynomial bounds~\cite{POPL:HAH11,ESOP:HH10,TLCA:HM15}, exponential bounds~\cite{FoSSaCS:KH20}, logarithmic bounds~\cite{CoRR:abs/cs/1807.08242}, higher-order functions~\cite{POPL:JHL10,POPL:HDW17}, user-defined datatypes~\cite{POPL:HDW17,FM:JLH09}, and separation logic~\cite{ESOP:Atkey10}.
The technique has also been generalized to imperative arithmetic programs~\cite{PLDI:CHS15,CAV:CHR17}, as well as integrated into formal proof assistants~\cite{ITP:CP15,ITP:Nipkow15}.

Beyond AARA, there have been many other approaches to formal resource analysis of deterministic programs. Some approaches, similarly to AARA, do so via a type system, including sized types~\cite{phd:Vasconcelos08} 
refinement types~\cite{OOPSLA:WWC17,POPL:CBG17,ESOP:CGA15,PLDI:KWP19,POPL:RBG18,kn:Xi02},
linear dependent types~\cite{LICS:LG11,POPL:LP13}, and
annotated type systems~\cite{POPL:CW00,POPL:Danielsson08}. Such a type-based approach usually involves constraint solving, some notion of linearity, and high composability, like AARA. However, there is often a tradeoff in programmer burden, like requiring more user annotation for better results. There are also non-typed-based approaches,
recurrence solving~\cite{TACAS:AFR15,PLDI:KBB17,ICFP:DLR15,APLAS:FH14,POPL:KML20,ENTCS:AAG09},
abstract interpretation~\cite{LPAR:BHH10,CAV:Gulwani09,CAV:SZV14,SAS:ZSG11},
term-rewriting techniques~\cite{RTA:AM13,TACAS:BEG14,IJCAR:FNH16,JAR:NEG13},
defunctionalization~\cite{ICFP:ALM15},
and symbolic execution~\cite{ICSE:BJS09,ISSTA:KNP18}. These approaches vary more wildly from the system used in this work. 

Despite the number of such approaches to resource analysis, exceedingly few have been adapted to the probabilistic domain, and even less automated. The work in this article represents the first such automated system for probabilistic functional programs. Imperative probabilistic programs have already enjoyed such automated resource analysis in prior work, first established through imperative AARA techniques \cite{PLDI:NCH18}. 




\vspace{-6pt}
\paragraph{Type-base analysis for probabilistic programs}
Other properties of probabilistic programs, aside from expected cost, can be analyzed by type-based approaches. Almost-sure termination of functional probabilistic programs can be reasoned about through the dependent type systems of of Dal Lago et al.~\cite{kn:LG18,TOPLAS:LG19}.
Bhat et al.~\cite{POPL:BAVG12,TACAS:BBGR13} develop a type system to check absolute continuity of probabilistic first-order let-programs and derive corresponding density functions for the distributions specified by the programs.
\textsc{Fuzz}~\cite{ICFP:RP10} uses linear types augmented with a probability monad to reason about differential privacy of randomized computation,
and \textsc{DFuzz}~\cite{POPL:GHH13} later generalizes it with indexed types and lightweight dependent types to certify differential privacy for a broader class of benchmarks.
Recently, Lew et al.~\cite{POPL:LCS20} have developed a type system for programmable probabilistic inference with trace types, where well-typed inference programs soundly derive posterior distributions by construction.
In this paper, we focus on expected cost bound analysis for probabilistic programs.

\vspace{-4pt}

\section{Conclusion}

By combining a carefully developed probabilistic semantics with the AARA type system, we have shown that probabilistic programs in a functional language can be effectively analyzed in an automated manner. Our implementation pRaML infers worst-case expected bounds on resource usage for a variety of probabilistic models and algorithms, and parameterizes the bounds by both input sizes and symbolic probabilities. We make use of these parameterized bounds to analyze new and interesting application domains, like sample-complexity and a generalized average-case analysis. In the future, we hope to overcome the semantic soundness obstacles that bar non-monotone resource usage, and in doing so provide a fully-conservative extension of non-probabilistic AARA.

\begin{acks}                            
  This article is based on research supported by DARPA under AA
  Contract FA8750-18-C-0092 and by the National Science Foundation
  under SaTC Award 1801369, CAREER Award 1845514, and SHF Awards
  1812876 and 2007784.
  Any opinions, findings, and conclusions contained in this document
  are those of the authors and do not necessarily reflect the views of
  the sponsoring organizations.
\end{acks}

\bibliography{db,more}

\versioning{
\newpage
\appendix

\section{Proofs}
\label{sec:appendix:proofs}

\subsection{\cref{the:soundness}}

\begin{proof}
  It suffices to prove for every $n \in \bbN$, if $\steprel{V}{e}{\mu}{n}$, then
  \[
  \pot{V }{ \Gamma} + q \ge \sum_{(v_0,q_0) } \mu(v_0,q_0) \cdot (\pot{v_0 }{ A} + q_0).
  \]
  Proceed by induction on $n$ with inversion on $\steprel{V}{e}{\mu}{n}$ then inner induction on $\typingrel{\Gamma}{q}{e}{A}$. We show the interesting cases below.
  \begin{itemize}
    \item If $n=0$, then $\mu = \mathbf{0}$. Straightforward.
    \item Suppose the lemma holds for some $n \in \bbN$. Now we consider the case for $n+1$. Below are the proofs for \textsc{L:Let}, \textsc{L:Flip}, \textsc{L:Prob}, and \textsc{L:FlipS}.
    \begin{itemize}
      \item \textsc{(L:Let)}
      By assumption, we have $\typingrel{\Gamma_1}{q}{e_1}{\annoA{\tau}{p}}$, $\typingrel{\Gamma_2,x:\tau}{p}{e_2}{A}$, and $\Gamma = \Gamma_1,\Gamma_2$ for some $\Gamma_1,\Gamma_2$.
      By inversion, we have $\steprel{V}{e_1}{\mu_1}{n}$, for all $(v_1,q_1) \in \mathrm{supp}(\mu_1)$, $\steprel{V,x\mapsto v_1}{e_2}{\mu_{(v_1,q_1)}}{n}$, and $\mu = \sum_{(v_1,q_1)} \sum_{(v_2,q_2)}  \mu_1(v_1,q_1) \cdot \mu_{(v_1,q_1)}(v_2,q_2) \cdot \delta(v_2,q_1+q_2)$.
      By the induction hypothesis, we have
      \begin{small}\begin{equation*}
      \pot{V }{ \Gamma_1} + q \ge \sum_{(v_1,q_1)} \mu_1(v_1,q_1) \cdot (\pot{v_1 }{ \annoA{\tau}{p}} + q_1)  =  \sum_{(v_1,q_1) } \mu_1(v_1,q_1) \cdot (\pot{v_1}{ \tau} + q_1 + p),
      \end{equation*}\end{small}
      and also for all $(v_1,q_1) \in \mathrm{supp}(\mu_1)$,
      \begin{small}\[
      \pot{V,x \mapsto v_1 }{ \Gamma_2,x:\tau} +p \ge \sum_{(v_2,q_2) } \mu_{(v_1,q_1)}(v_2,q_2) \cdot (\pot{v_2 }{ A} + q_2).
      \]\end{small}
      This results in the following chain of inequalities.
      \begin{small}\begin{align*}
      \pot{V}{\Gamma} + q & = \pot{V}{\Gamma_1} + q + \pot{V}{\Gamma_2} \\
      & \ge  \sum_{(v_1,q_1) } \mu_1(v_1,q_1) \cdot (\pot{v_1 }{ \tau} + q_1 + p) + \pot{V}{\Gamma_2} \\
      & \ge \sum_{(v_1,q_1) } \mu_1(v_1,q_1) \cdot (\pot{v_1 }{ \tau} + q_1 + p + \pot{V}{\Gamma_2}) \\
      & = \sum_{(v_1,q_1) } \mu_1(v_1,q_1) \cdot ( q_1 + p + \pot{V, x\mapsto v_1 }{\Gamma_2,x:\tau})  \\
      & \ge \sum_{(v_1,q_1) } \mu_1(v_1,q_1) \cdot (q_1 + \sum_{(v_2,q_2)} \mu_{(v_1,q_1)}(v_2,q_2) \cdot (\pot{v_2 }{ A} + q_2)) \\
      & \ge \sum_{(v_1,q_1) } \mu_1(v_1,q_1) \cdot (\sum_{(v_2,q_2) } \mu_{(v_1,q_1)}(v_2,q_2) \cdot (\pot{v_2 }{ A} + q_2 + q_1)) \\
      & = \sum_{(v_1,q_1) }\sum_{(v_2,q_2)  } \mu_1(v_1,q_1) \cdot \mu_{(v_1,q_1)}(v_2,q_2) \cdot (\pot{v_2 }{ A} + q_2 + q_1).
      \end{align*}\end{small}
      Applying the following identities on the final term, we complete the case.
      \begin{small}\begin{align*}
          & \sum_{(v_0,q_0) } \mu(v_0,q_0) \cdot (\pot{v_0 }{ A} + q_0) \\
          ={} & \sum_{(v_0,q_0)} (\sum_{(v_1,q_1)}\sum_{(v_2,q_2)} \mu_1(v_1,q_1) \cdot \mu_{(v_1,q_1)}(v_2,q_2) \cdot \delta(v_2,q_1+q_2))(v_0,q_0) \cdot (\pot{v_0}{A} + q) \\
          ={} & \sum_{(v_0,q_0)} (\sum_{(v_1,q_1)}\sum_{(v_2,q_2)} \mu_1(v_1,q_1) \cdot \mu_{(v_1,q_1)}(v_2,q_2) \cdot [v_0 = v_2 \wedge q_0 = q_1+q_2]) \cdot (\pot{v_0}{A} + q) \\
          ={} & \sum_{(v_1,q_1)}\sum_{(v_2,q_2)} \mu_1(v_1,q_1) \cdot \mu_{(v_1,q_1)}(v_2,q_2) \cdot (\pot{v_2}{A} + q_1+q_2).
      \end{align*}\end{small}
      
      \item \textsc{(L:Flip)}
      By assumption, we have $\share{\Gamma}{p \times \Gamma_1}{(1-p) \times \Gamma_2}$, $q = p \cdot q_1 + (1-p) \cdot q_2$, $\typingrel{\Gamma_1}{q_1}{e_1}{A}$, and $\typingrel{\Gamma_2}{q_2}{e_2}{A}$.
      By inversion, we have $\steprel{V}{e_1}{\mu_1}{n}$, $\steprel{V}{e_2}{\mu_2}{n}$, and $\mu = p \cdot \mu_1 + (1-p) \cdot \mu_2$.
      By the induction hypothesis, we have
      \begin{small}\begin{align*}
          \pot{V }{ \Gamma_1} + q_1 &\ge \sum_{(v_0,q_0)} \mu_1(v_0,q_0) \cdot (\pot{v_0 }{ A} + q_0), & \pot{V }{ \Gamma_2} + q_2 & \ge \sum_{(v_0,q_0) } \mu_2(v_0,q_0) \cdot (\pot{v_0}{ A} + q_0).
      \end{align*}\end{small}
      Thus we conclude this case by
      \begin{small}\begin{align*}
          \pot{V }{\Gamma} + q & = \pot{V }{ p \times \Gamma_1} + \pot{V }{ (1-p)\times \Gamma_2} + p \cdot q_1 + (1-p) \cdot q_2 \\
          & = p \cdot \pot{V }{ \Gamma_1} + (1-p) \cdot \pot{V }{ \Gamma_2} + p \cdot q_1 + (1-p) \cdot q_2 \\
          & \ge p \cdot  (\sum_{(v_0,q_0)} \mu_1(v_0,q_0) \cdot (\pot{v_0 }{ A} + q_0)) + (1-p) \cdot (\sum_{(v_0,q_0)} \mu_2(v_0,q_0) \cdot (\pot{v_0}{ A} + q_0)) \\
          & = \sum_{(v_0,q_0) } (p \cdot \mu_1(v_0,q_0) + (1-p) \cdot \mu_2(v_0,q_0)) \cdot (\pot{v_0}{ A} + q_0) \\
          & = \sum_{(v_0,q_0)} \mu(v_0,q_0) \cdot (\pot{v_0 }{ A} + q_0).
      \end{align*}\end{small}
      
      \item \textsc{(L:Prob)}
      By assumption, we know that $\Gamma = \cdot$, $q = p \cdot q_H + (1-p) \cdot q_T$, and $A = \annoA{\probA{q_H}{q_T}}{0}$ for some $q_H,q_T \in \bbQ_{\ge 0}$.
      By inversion, we have $\mu = \delta(\probC{p},0)$.
      Thus we conclude this case by
      \begin{small}\begin{align*}
        \pot{V}{\Gamma} + q & = p \cdot q_H + (1-p) \cdot q_T = \pot{\probV{p}}{\probA{q_H}{q_T}} + 0 = \sum_{(v_0,q_0)} \mu(v_0,q_0) \cdot (\pot{v_0}{A} + q_0).
      \end{align*}  
      \end{small}
      
      \item \textsc{(L:FlipS)}
      By assumption, we have $\typingrel{\Gamma}{q+q_H}{e_1}{A}$ and $\typingrel{\Gamma}{q+q_T}{e_2}{A}$.
      By inversion, we have $V(x) = \probV{p}$, $\steprel{V}{e_1}{\mu_1}{n}$, $\steprel{V}{e_2}{\mu_2}{n}$, and $\mu = p \cdot \mu_1 + (1-p) \cdot \mu_2$ for some $p \in [0,1]$.
      By the induction hypothesis, we have
      \begin{small}\begin{align*}
        \pot{V}{\Gamma} + q + q_H & \ge \sum_{(v_0,q_0)} \mu_1(v_0,q_0) \cdot (\pot{v_0}{A} + q_0), & \pot{V}{\Gamma} + q + q_T & \ge \sum_{(v_0,q_0)} \mu_2(v_0,q_0) \cdot (\pot{v_0}{A} + q_0).
      \end{align*}  
      \end{small}
      Thus we conclude this case by
      \begin{small}\begin{align*}
        & \pot{V}{\Gamma,x:\probA{q_H}{q_T}} + q \\
         ={} & \pot{V}{\Gamma} + p \cdot q_H + (1-p) \cdot q_T + q \\
         ={}& (p \cdot \pot{V}{\Gamma} + (1-p) \cdot \pot{V}{\Gamma}) + (p \cdot q_H + (1-p) \cdot q_T) + (p \cdot q + (1-p) \cdot q) \\
        ={}&  p \cdot (\pot{V}{\Gamma} + q + q_H) + (1-p) \cdot (\pot{V}{\Gamma} + q + q_T) \\
        \ge{} &  p \cdot ( \sum_{(v_0,q_0)} \mu_1(v_0,q_0) \cdot (\pot{v_0}{A} + q_0) ) + (1-p) \cdot (\sum_{(v_0,q_0)} \mu_2(v_0,q_0) \cdot (\pot{v_0}{A} + q_0) ) \\
        ={}& \sum_{(v_0,q_0)} (p \cdot \mu_1(v_0,q_0) + (1-p) \cdot \mu_2(v_0,q_0)) \cdot (\pot{v_0}{A } + q_0) \\
        ={} & \sum_{(v_0,q_0)} \mu(v_0,q_0) \cdot (\pot{v_0}{A} + q_0).
      \end{align*}  
      \end{small}
    \end{itemize}
  \end{itemize}
\end{proof}

\subsection{\cref{the:soundness:improved}}

\begin{proof}
  By \cref{lemma:dist def,lemma:dist bound}
  it suffices to prove for every $n \in \bbN$, if $\steprel{V}{e}{\mu}{n}$, then
  \[
  \pot{V }{ \Gamma} + q \ge \sum_{q_0} \mu(\circ,q_0) \cdot q_0 + \sum_{(v_0,q_0) } \mu(v_0,q_0) \cdot (\pot{v_0 }{ A} + q_0).
  \]
  We can still proceed by induction on $n$ with inversion on $\steprel{V}{e}{\mu}{n}$ then inner induction on $\typingrel{\Gamma}{q}{e}{A}$.
  We illustrate cases \textsc{L:Let}, \textsc{L:Flip}, \textsc{L:Prob}, and \textsc{L:FlipS} below.
  \begin{itemize}
    \item \textsc{(L:Let)} 
    By assumption, we have $\typingrel{\Gamma_1}{q}{e_1}{\annoA{\tau}{p}}$, $\typingrel{\Gamma_2,x:\tau}{p}{e_2}{A}$, and $\Gamma = \Gamma_1,\Gamma_2$ for some $\Gamma_1,\Gamma_2$.
      By inversion, we have $\steprel{V}{e_1}{\mu_1}{n}$, for all $(v_1,q_1) \in \mathrm{supp}(\mu_1)$ such that $v_1\neq\circ$, $\steprel{V,x\mapsto v_1}{e_2}{\mu_{(v_1,q_1)}}{n}$, and $\mu = \sum_{q_1} \mu_1(\circ,q_1) \cdot \delta(\circ,q_1) + \sum_{(v_1,q_1):v_1 \neq \circ} \sum_{(v_2,q_2)}  \mu_1(v_1,q_1) \cdot \mu_{(v_1,q_1)}(v_2,q_2) \cdot \delta(v_2,q_1+q_2)$.
      By the induction hypothesis, we have
      \begin{small}\begin{align*}
      \pot{V }{ \Gamma_1} + q & \ge \sum_{q_1} \mu_1(\circ,q_1) \cdot q_1 +  \sum_{(v_1,q_1):v_1\neq\circ} \mu_1(v_1,q_1) \cdot (\pot{v_1 }{ \annoA{\tau}{p}} + q_1) \\
      & = \sum_{q_1} \mu_1(\circ,q_1) \cdot q_1 + \sum_{(v_1,q_1):v_1\neq\circ } \mu_1(v_1,q_1) \cdot (\pot{v_1}{ \tau} + q_1 + p),
      \end{align*}\end{small}
      and also for all $(v_1,q_1) \in \mathrm{supp}(\mu_1)$ such that $v_1 \neq \circ$,
      \begin{small}\[
      \pot{V,x \mapsto v_1 }{ \Gamma_2,x:\tau} +p \ge \sum_{q_2} \mu_{(v_1,q_1)}(\circ,q_2) \cdot q_2+ \sum_{(v_2,q_2):v_2 \neq \circ } \mu_{(v_1,q_1)}(v_2,q_2) \cdot (\pot{v_2 }{ A} + q_2).
      \]\end{small}
      Thus we find the following chain of inequalities.
      \begin{footnotesize}\begin{align*}
      &\pot{V}{\Gamma} + q  \\
      ={}& \pot{V}{\Gamma_1} + q + \pot{V}{\Gamma_2} 
      \\
       \ge{}&  \sum_{q_1} \mu_1(\circ,q_1) \cdot q_1 + \smashoperator{\sum_{(v_1,q_1):v_1\neq\circ }} \mu_1(v_1,q_1) \cdot (\pot{v_1}{ \tau} + q_1 + p) + \pot{V}{\Gamma_2} 
      \\
       \ge{} &\sum_{q_1} \mu_1(\circ,q_1) \cdot q_1 + \smashoperator{\sum_{(v_1,q_1):v_1\neq\circ }} \mu_1(v_1,q_1) \cdot (\pot{v_1}{ \tau} + q_1 + p + \pot{V}{\Gamma_2})
      \\
      = {} & \sum_{q_1} \mu_1(\circ,q_1) \cdot q_1 +\smashoperator{\sum_{(v_1,q_1):v_1\neq \circ }} \mu_1(v_1,q_1) \cdot ( q_1 + p + \pot{V, x\mapsto v_1 }{\Gamma_2,x:\tau})  
      \\
      \ge{} & \sum_{q_1} \mu_1(\circ,q_1) \cdot q_1 
       + \smashoperator{\sum_{(v_1,q_1):v_1\neq\circ }} \mu_1(v_1,q_1) \cdot (q_1 + \sum_{q_2} \mu_{(v_1,q_1)}(\circ,q_2) \cdot q_2+ \smashoperator{\sum_{(v_2,q_2):v_2 \neq \circ }} \mu_{(v_1,q_1)}(v_2,q_2) \cdot (\pot{v_2 }{ A} + q_2)) 
      \\
      = {} &\sum_{q_1} \mu_1(\circ,q_1) \cdot q_1 
       + \smashoperator{\sum_{(v_1,q_1):v_1\neq\circ }} \mu_1(v_1,q_1) \cdot ( \sum_{q_2} \mu_{(v_1,q_1)}(\circ,q_2) \cdot (q_1 + q_2) + \smashoperator{\sum_{(v_2,q_2):v_2 \neq \circ }} \mu_{(v_1,q_1)}(v_2,q_2) \cdot (\pot{v_2 }{ A} + q_1 + q_2)) 
      \\
       = {} & \sum_{q_1} \mu_1(\circ,q_1) \cdot q_1+\smashoperator{\sum_{\substack{(v_1,q_1):v_1\neq \circ \\ q_2 }}}  \mu_1(v_1,q_1) \cdot \mu_{(v_1,q_1)}(\circ,q_2) \cdot (q_1+q_2) 
       + \smashoperator{\sum_{\substack{(v_1,q_1):v_1\neq \circ \\ (v_2,q_2):v_2\neq \circ}}}  \mu_1(v_1,q_1) \cdot \mu_{(v_1,q_1)}(v_2,q_2) \cdot (\pot{v_2 }{ A} + q_1 + q_2).
      \end{align*}\end{footnotesize}
      The final line satisfies the following identities, completing the case.
      \begin{footnotesize}\begin{align*}
          & \sum_{q_0} \mu(\circ,q_0) \cdot q_0 +  \smashoperator{\sum_{(v_0,q_0) :v_0 \neq \circ}} \mu(v_0,q_0) \cdot (\pot{v_0 }{ A} + q_0)
           \\
           ={}& \sum_{q_0} \left( \sum_{q_1} \mu_1(\circ,q_1) \cdot \delta(\circ,q_1) + \smashoperator{\sum_{\substack{(v_1,q_1):v_1 \neq \circ \\ (v_2,q_2)}}}   \mu_1(v_1,q_1) \cdot \mu_{(v_1,q_1)}(v_2,q_2) \cdot \delta(v_2,q_1+q_2) \right) (\circ,q_0) \cdot q_0 
          \\
           & + \smashoperator[l]{\sum_{(v_0,q_0) : v_0 \neq \circ}} \left(\sum_{q_1} \mu_1(\circ,q_1) \cdot \delta(\circ,q_1) + \smashoperator{\sum_{\substack{(v_1,q_1):v_1 \neq \circ \\ (v_2,q_2)}} } \mu_1(v_1,q_1) \cdot \mu_{(v_1,q_1)}(v_2,q_2) \cdot \delta(v_2,q_1+q_2) \right)(v_0,q_0) \cdot (\pot{v_0}{A} + q_0)
           \\
          ={}& \sum_{q_0} \left( \sum_{q_1} \mu_1(\circ,q_1) \cdot \delta(\circ,q_1) + \smashoperator{\sum_{\substack{(v_1,q_1):v_1 \neq \circ \\ q_2}}}   \mu_1(v_1,q_1) \cdot \mu_{(v_1,q_1)}(\circ,q_2) \cdot \delta(\circ,q_1+q_2) \right)(\circ,q_0) \cdot q_0 
          \\
           & +\smashoperator[l]{\sum_{(v_0,q_0) : v_0 \neq \circ}} \left(\smashoperator[r]{\sum_{\substack{(v_1,q_1):v_1 \neq \circ \\(v_2,q_2):v_2 \neq \circ}} }  \mu_1(v_1,q_1) \cdot \mu_{(v_1,q_1)}(v_2,q_2) \cdot \delta(v_2,q_1+q_2) \right)(v_0,q_0) \cdot (\pot{v_0}{A} + q_0) 
          \\
          ={}& \sum_{q_1} \mu_1(\circ,q_1) \cdot q_1  + \smashoperator{\sum_{\substack{(v_1,q_1) : v_1 \neq \circ \\ q_2}}} \mu_1(v_1,q_1) \cdot \mu_{(v_1,q_1)}(\circ,q_2) \cdot (q_1+q_2)  +  \smashoperator{\sum_{\substack{(v_1,q_1):v_1 \neq \circ \\ (v_2,q_2) : v_2 \neq \circ}}} \mu_1(v_1,q_1) \cdot \mu_{(v_1,q_1)}(v_2,q_2) \cdot (\pot{v_2}{A} + q_1 + q_2). 
      \end{align*}\end{footnotesize}
      
      \item \textsc{(L:Flip)}
      By assumption, we have $\share{\Gamma}{p\times \Gamma_1}{(1-p)\times \Gamma_2}$, $q=p\cdot q_1+(1-p)\cdot q_2$, $\typingrel{\Gamma_1}{q_1}{e_1}{A}$, and $\typingrel{\Gamma_2}{q_2}{e_2}{A}$.
      By inversion, we have $\steprel{V}{e_1}{\mu_1}{n}$, $\steprel{V}{e_2}{\mu_2}{n}$, and $\mu = p \cdot \mu_1 + (1-p) \cdot \mu_2$.
      By the induction hypothesis, we have
      \begin{small}\begin{align*}
        \pot{V}{\Gamma_1} + q_1 & \ge \sum_{q_0} \mu_1(\circ,q_0) \cdot q_0 + \sum_{(v_0,q_0):v_0 \neq \circ} \mu_1(v_0,q_0) \cdot (\pot{v_0}{A} + q_0), \\
        \pot{V}{\Gamma_2} + q_2 & \ge \sum_{q_0} \mu_2(\circ,q_0) \cdot q_0 + \sum_{(v_0,q_0):v_0 \neq \circ} \mu_2(v_0,q_0) \cdot (\pot{v_0}{A}+q_0).
      \end{align*}
      \end{small}
      Thus we conclude this case by
      \begin{small}
        \begin{align*}
          \pot{V}{\Gamma}+q & = \pot{V}{p \times \Gamma_1} + \pot{V}{(1-p) \times \Gamma_2} + p \cdot q_1 + (1-p) \cdot q_2 \\
          & = p \cdot \pot{V}{\Gamma_1} + (1-p) \cdot \pot{V}{\Gamma_2} + p \cdot q_1 + (1-p) \cdot q_2 \\
          & = p \cdot (\sum_{q_0} \mu_1(\circ,q_0) \cdot q_0 + \sum_{(v_0,q_0):v_0 \neq \circ} \mu_1(v_0,q_0) \cdot (\pot{v_0}{A} + q_0)) \\
          & + (1-p) \cdot (\sum_{q_0} \mu_2(\circ,q_0) \cdot q_0 + \sum_{(v_0,q_0):v_0 \neq \circ} \mu_2(v_0,q_0) \cdot (\pot{v_0}{A}+q_0)) \\
          & = \sum_{q_0} q_0 \cdot (p \cdot \mu_1(\circ,q_0) + (1-p) \cdot \mu_2(\circ,q_0)) \\
          & + \sum_{(v_0,q_0):v_0\neq \circ} (\pot{v_0}{A} + q_0) \cdot (p \cdot \mu_1(v_0,q_0) + (1-p) \cdot \mu_2(v_0,q_0)) \\
          & = \sum_{q_0} \mu(\circ,q_0) \cdot q_0 + \sum_{(v_0,q_0):v_0\neq \circ} \mu(v_0,q_0) \cdot (\pot{v_0}{A} + q_0).
        \end{align*}
      \end{small}
      
      \item \textsc{(L:Prob)}
      By assumption, we know that $\Gamma = \cdot$, $q = p \cdot q_H + (1-p) \cdot q_T$, and $A = \annoA{\probA{q_H}{q_T}}{0}$ for some $q_H,q_T \in \bbQ_{\ge 0}$.
      By inversion, we have $\mu = \delta(\probC{p},0)$.
      Thus we conclude this case by
      \begin{small}\begin{align*}
        \pot{V}{\Gamma} + q & = p \cdot q_H + (1-p) \cdot q_T \\
        & = \pot{\probV{p}}{\probA{q_H}{q_T}} + 0 = \sum_{(v_0,q_0):v_0 \neq \circ} \mu(v_0,q_0) \cdot (\pot{v_0}{A} + q_0) \\
        & = \sum_{(v_0,q_0):v_0 \neq \circ} \mu(v_0,q_0) \cdot (\pot{v_0}{A} + q_0) + \sum_{q_0} \mu(\circ,q_0) \cdot q_0.
      \end{align*}  
      \end{small}
      
      \item \textsc{(L:FlipS)}
      By assumption, we have $\typingrel{\Gamma}{q+q_H}{e_1}{A}$ and $\typingrel{\Gamma}{q+q_T}{e_2}{A}$.
      By inversion, we have $V(x) = \probV{p}$, $\steprel{V}{e_1}{\mu_1}{n}$, $\steprel{V}{e_2}{\mu_2}{n}$, and $\mu = p \cdot \mu_1 + (1-p) \cdot \mu_2$ for some $p \in [0,1]$.
      By the induction hypothesis, we have
      \begin{small}\begin{align*}
        \pot{V}{\Gamma} + q + q_H & \ge \sum_{q_0} \mu_1(\circ,q_0) \cdot q_0 + \sum_{(v_0,q_0)} \mu_1(v_0,q_0) \cdot (\pot{v_0}{A} + q_0), \\
         \pot{V}{\Gamma} + q + q_T & \ge \sum_{q_0} \mu_2(\circ,q_0) \cdot q_0 + \sum_{(v_0,q_0)} \mu_2(v_0,q_0) \cdot (\pot{v_0}{A} + q_0).
      \end{align*}  
      \end{small}
      Thus we conclude this case by
      \begin{small}\begin{align*}
        & \pot{V}{\Gamma,x:\probA{q_H}{q_T}} + q \\
         ={} & \pot{V}{\Gamma} + p \cdot q_H + (1-p) \cdot q_T + q \\
         ={}& (p \cdot \pot{V}{\Gamma} + (1-p) \cdot \pot{V}{\Gamma}) + (p \cdot q_H + (1-p) \cdot q_T) + (p \cdot q + (1-p) \cdot q) \\
        ={}&  p \cdot (\pot{V}{\Gamma} + q + q_H) + (1-p) \cdot (\pot{V}{\Gamma} + q + q_T) \\
        \ge{} &  p \cdot ( \sum_{q_0} \mu_1(\circ,q_0) \cdot q_0 + \sum_{(v_0,q_0):v_0\neq \circ} \mu_1(v_0,q_0) \cdot (\pot{v_0}{A} + q_0) ) \\
        & + (1-p) \cdot ( \sum_{q_0} \mu_2(\circ,q_0) \cdot q_0 + \sum_{(v_0,q_0):v_0\neq \circ} \mu_2(v_0,q_0) \cdot (\pot{v_0}{A} + q_0) ) \\
        ={}& \sum_{q_0} (p \cdot \mu_1(\circ,q_0) + (1-p) \cdot \mu_2(\circ,q_0)) \cdot q_0 \\
        & + \sum_{(v_0,q_0):v_0\neq\circ} (p \cdot \mu_1(v_0,q_0) + (1-p) \cdot \mu_2(v_0,q_0)) \cdot (\pot{v_0}{A } + q_0) \\
        ={} & \sum_{q_0} \mu(\circ,q_0) \cdot q_0 + \sum_{(v_0,q_0):v_0\neq\circ} \mu(v_0,q_0) \cdot (\pot{v_0}{A} + q_0).
      \end{align*}  
      \end{small}
  \end{itemize}
\end{proof}

\subsection{\cref{lem:partialorder}}

\begin{proof}
  Let's consider the $\omega$-chain completeness.
  Let $\nu(v,q) \defeq \lim_{n \to \infty} \mu_n(v,q)$ for all $v \neq \circ$ and $q$.
  Let $P \defeq \sum_{(v,q) : v \neq \circ} \nu(v,q)$.
  We want to construct $\nu^\circ$ to be the ``limit'' of $\{\lambda q. \mu_n(\circ,q)\}_{n \in \bbN}$.
  Define $f_n^\circ(q) \defeq \mu_n(\{\circ\} \times [0,q])$ for all $n$ and $q$.
  Then for each $n \in \bbN$, $f_n^\circ$ is monotone and right-$\omega$-continuous and for all $q \in \bbQ_{\ge 0} \cup \{ \infty\}$, $\{f_n^\circ(q)\}_{n \in \bbN}$ is non-increasing.
  Let $f^\circ$ be the pointwise limit of $\{f_n^\circ\}_{n \in \bbN}$.
  Because both $\bbQ_{\ge 0} \cup \{\infty\}$ and $[0,1]$ are $\omega$-complete partially ordered sets, the right-$\omega$-continuous functions between them also form an $\omega$-complete partially ordered set.
  Therefore, $f^\circ$ is also right-$\omega$-continuous, and we can define $\nu^\circ(q) \defeq f^\circ(q) - \lim_{q' \to q^{-}} f^\circ(q')$.
  The final step is to prove that $f^\circ(\infty) = 1-P$.
  For each $n \in \bbN$, we have $f_n^\circ(\infty) = 1 - \sum_{(v,q) : v \neq \circ} \mu_n(v,q)$.
  Thus $f^\circ(\infty) = \lim_{n \to \infty} f^\circ_n(\infty) = \lim_{n \to \infty} (1 - \sum_{(v,q) : v \neq \circ} \mu_n(v,q)) = 1 - \lim_{n \to \infty} \sum_{(v,q) : v \neq \circ} \mu_n(v,q) = 1 - \sum_{(v,q):v\neq \circ} \lim_{n \to \infty} \mu_n(v,q) = 1 - P$ by the Monotone Convergence Theorem.
\end{proof}

\subsection{\cref{lemma:dist bound}}

\begin{proof}
  Let $\mu \defeq \bigsqcup_{n \in \bbN} \mu_n$ and define $f_n^\circ(q) \defeq \mu_n((\mathsf{Val} \cup \{\circ\}) \times [0,q])$.
  Similarly to the proof of $\omega$-chain completeness, $f_n^\circ$ is monotone and right-$\omega$-continuous for each $n \in \bbN$ and for all $q \in \bbQ_{\ge 0} \cup \{\infty\}$, $\{f_n^\circ(q)\}_{n \in \bbN}$ is non-increasing.
  Moreover, $f_n^\circ(\infty) = 1$ for all $n \in \bbN$.
  Now we extend the the domain of $f_n^\circ$ from $\bbQ_{\ge 0} \cup \{\infty\}$ to $\bbR_{\ge 0} \cup \{\infty\}$ as $g_n^\circ(r) = \lim_{q \to r^{+}} f_n^\circ(q)$.
  By the right-$\omega$-continuity, we know that $g_n^\circ(q) = f_n^\circ(q)$ for all $q \in \bbQ_{\ge 0} \cup \{\infty\}$.
  Therefore, $\sum_{(v,q)} \mu_n(v,q) \cdot q = \int (g_n^\circ(\infty) - g_n^\circ(r)) dr$.
  Let $g^\circ$ be the pointwise limit of $\{g^\circ_n\}_{n \in \bbN}$, so $g^\circ$ is also right-$\omega$-continuous. Thus
  \begin{small}\begin{align*}
  h(\mu) & =  \sum_q \mu(\circ,q) \cdot q + \sum_{(v,q):v \neq \circ} \mu(v,q) \cdot (\pot{v}{A} + q) \\
  & = \sum_{(v,q)} \mu(v,q) \cdot q + \sum_{(v,q):v\neq \circ} \mu(v,q) \cdot \pot{v}{A} \\
  & =  \int (g^\circ(\infty) - g^\circ(r)) dr + \sum_{(v,q):v\neq \circ} \mu(v,q) \cdot \pot{v}{A} \\
  & = \int (1-g^\circ(r)) dr + \sum_{(v,q):v\neq \circ} \mu(v,q) \cdot \pot{v}{A} \\
  & = \int (1 - \lim_{n \to \infty} g_n^\circ(r)) dr + \sum_{(v,q):v\neq \circ} \lim_{n \to \infty} \mu_n(v,q) \cdot \pot{v}{A} \\
  & = \int \lim_{n \to \infty} (1 - g_n^\circ(r)) dr + \sum_{(v,q):v\neq \circ} \lim_{n \to \infty} \mu_n(v,q) \cdot \pot{v}{A} \\
  & = \lim_{n \to \infty} \int (1 - g_n^\circ(r)) dr + \lim_{n \to \infty} \sum_{(v,q) : v\neq \circ}  \mu_n(v,q) \cdot \pot{v}{A}.
  \end{align*}\end{small}
  Since $h(\mu_n) = \int  (g_n^\circ(\infty) - g_n^\circ(r)) dr + \sum_{(v,q):v\neq \circ} \mu_n(v,q) \cdot \pot{v}{A} = \int (1-g_n^\circ(r)) dr + \sum_{(v,q):v\neq \circ} \mu_n(v,q) \cdot \pot{v}{A}$ and $h(\mu_n) \le M$ for all $n \in \bbN$, we conclude that $h(\mu) \le \sup_{n \in \bbN} h(\mu_n) \le M$.
\end{proof}

} {}

\end{document}